\newcommand{\helper}{h}
\newcommand{\model}{f}
\newcommand{\DT}{D_f}
\newcommand{\DG}{D_G}
\newcommand{\Din}{D_{\textrm{in}}}
\newcommand{\Dout}{D_{\textrm{out}}}
\newcommand{\Dintr}{D^{\textrm{tr}}_{\textrm{in}}}
\newcommand{\Douttr}{D^{\textrm{tr}}_{\textrm{out}}}
\newcommand{\Dinte}{D^{\textrm{te}}_{\textrm{in}}}
\newcommand{\Doutte}{D^{\textrm{te}}_{\textrm{out}}}
\newcommand{\Dinouttrte}{D^{\{\textrm{tr},\textrm{te}\}}_{\{\textrm{in},\textrm{out}\}}}
\newcommand{\tp}{\textrm{tp}}
\newcommand{\pr}{\textrm{prec}}
\newcommand{\clb}{c_{\textrm{lb}}}
\newcommand{\cpepslb}{\{c\!+\!\epsilon\}_{\textrm{lb}}}
\def\acronym{\texttt{\textbf{PANORAMIA}}\xspace}
 \definecolor{myblue}{HTML}{FF6633}
\definecolor{mygreen}{HTML}{3399FF}
\definecolor{mypurple}{HTML}{9467BD}
\definecolor{mycyan}{HTML}{17BECF}
\newtheorem{definition}{Definition}
\newtheorem{proposition}{Proposition}
\newtheorem{corollary}{Corollary}
\newtheorem{lemma}{Lemma}
\newtheoremstyle{TheoremNum}
{\topsep}{\topsep}              
{\itshape}                      
{}                              
{\bfseries}                     
{.}                             
{ }                             
{\thmname{#1}\thmnote{ \bfseries #3}}
\theoremstyle{TheoremNum}
\newtheorem{numberedprop}{Proposition}
\newcommand{\xin}{x^{\textrm{in}}}
\newcommand{\xgen}{x^{\textrm{gen}}}
\newcommand{\1}{\mathds{1}}
\newcommand{\ind}{\perp\!\!\!\!\perp} 
\def\eqref#1{equation~\ref{#1}}
\DeclareMathAlphabet{\mathsfit}{\encodingdefault}{\sfdefault}{m}{sl}
\SetMathAlphabet{\mathsfit}{bold}{\encodingdefault}{\sfdefault}{bx}{n}
\def\gD{{\mathcal{D}}}
\def\gF{{\mathcal{F}}}
\def\gG{{\mathcal{G}}}
\def\gH{{\mathcal{H}}}
\def\gO{{\mathcal{O}}}
\def\gX{{\mathcal{X}}}
\def\gY{{\mathcal{Y}}}
\def\sP{{\mathbb{P}}}
\def\sR{{\mathbb{R}}}
\newcommand{\E}{\mathbb{E}}
\newcommand{\indep}{\perp\!\!\!\!\perp}
\title{\acronym: Privacy Auditing of Machine Learning Models without Retraining}
\author{%
  Mishaal Kazmi\thanks{equal contribution} \\
  University of British Columbia\\
  \And
  Hadrien Lautraite$^*$ \\
 University du Québec à Montréal \\
  \And
  Alireza Akbari$^*$ \\
  Simon Fraser University \\
  \And
  Qiaoyue Tang$^*$ \\
  University of British Columbia \\
  \And
  Mauricio Soroco \\
  University of British Columbia \\
  \And
  Tao Wang \\
  Simon Fraser University \\
  \And
  Sébastien Gambs \\
  University du Québec à Montréal \\
  \And
  Mathias L\'ecuyer \\
  University of British Columbia \\
}
\begin{document}

\maketitle
\begin{abstract}
%
%
We present \acronym, a privacy leakage measurement framework for machine learning models that relies on membership inference attacks using generated data as non-members.
By relying on generated non-member data, \acronym eliminates the common dependency of privacy measurement tools on in-distribution non-member data.
As a result, \acronym does not modify the model, training data, or training process, and only requires access to a subset of the training data.
We evaluate \acronym on ML models for image and tabular data classification, as well as on large-scale language models.
\end{abstract}

\section{Introduction}

Training Machine Learning (ML) models with Differential Privacy (DP) ~\cite{dwork2006calibrating}, such as with DP-SGD~\cite{abadi2016deep}, upper-bounds the worst-case privacy loss incurred by the training data. 
In contrast, privacy auditing aims to empirically lower-bound the privacy loss of a target ML model or algorithm. 
In practice, privacy audits usually rely on the link between DP and the performance of membership inference attacks (MIA) ~\cite{wasserman2010statistical,kairouz2015composition,dong2019gaussian}. 
At a high level, DP implies an upper-bound on the performance of MIAs, thus creating a high-performance MIA implies a lower-bound on the privacy loss.
Auditing schemes have proven valuable in many settings, such as to audit DP implementations~\cite{nasr2023tight}, or to study the tightness of DP algorithms~\cite{inst,lu2023general,steinke2023privacy}.
Typical privacy audits rely on retraining the model several times, each time guessing the membership of one sample~\cite{jagielski2020auditing,LIRA,zanellabéguelin2022bayesian}, which is computationally prohibitive, requires access to the target model (entire) training data as well as control over the training pipeline.

To circumvent these concerns, \cite{steinke2023privacy} proposed an auditing recipe (called O(1)) requiring only one training run (which could be the same as the actual training) by randomly including/excluding several samples (called auditing examples) into the training dataset of the target model. 
Later, the membership of the auditing examples are guessed for privacy audit. 
However, O(1) faces a few challenges in certain setups. 
First, canaries, which are datapoints specially crafted to be easy to detect when added to the training set \cite{nasr2023tight,lu2023general,steinke2023privacy}, cannot be employed as auditing examples when measuring the privacy leakage for data that a contributor actually puts into the model, and not a \emph{worst case data point}. 

This matches a setting in which individual data contributors (\emph{e.g.}, a hospital in a cross-site Federated Learning (FL) setting or a user of a service that trains ML models on users’ data) measure the leakage of their own (\emph{i.e.}, known) partial training data in the final trained model. 
Second, O(1) also relies on the withdrawal of real data from the model to construct non-member in-distribution data. 
This is problematic in situations in which ML model owners need to conduct post-hoc audits, in which case it is too late for removal ~\cite{negoescu2023epsilon}. 
Moreover, in-distribution audits require much more data, thus withholding many data points (typically more than the test set size) and reducing model utility. 
This brings us to the question: \emph{Given an instance of a machine learning model as a target, can we perform post-hoc estimation of the privacy loss with regards to a known member subset of the target model training dataset?}





\textbf{Our contributions.}
We propose \acronym, a new scheme for \emph{Privacy Auditing with NO Retraining by using Artificial data for Membership Inference Attacks}.
More precisely, we consider an auditor with access to a subset of the training data 
and introduce a new alternative for accessing non-members: using synthetic datapoints from a generative model trained on the member data, unlocking the limit on non-member data.
\acronym uses this generated data, together with known members, to train and evaluate a MIA attack on the target model to audit (\S\ref{sec:approach}).
We also adapt the theory of privacy audits, and show how \acronym can estimate the privacy loss (though not a lower-bound) of the target model with regards to the known member subset (\S\ref{sec:theory}).
An important benefit of \acronym is to perform privacy loss measurements with (1) no retraining the target ML model (\emph{i.e.}, we audit the end-model, not the training algorithm), (2) no alteration of the model, dataset, or training procedure, and (3) only partial knowledge of the training set.
We evaluate \acronym on CIFAR10 models and observe that overfitted models, larger models, and models with larger DP parameters have higher measured privacy leakage. We also demonstrate the applicability of our approach on the GPT-2 based model (\emph{i.e.}, WikiText dataset) and CelebA models.

\label{sec:intro}

\section{Background and Related Work}
\label{sec:background}



DP is the established privacy definition in the context of ML models, as well as for data analysis in general.
We focus on the pure DP definition to quantify privacy loss with well-understood semantics.
In a nutshell, DP is a property of a randomized mechanism (or computation) from datasets to an output space $\gO$, noted $M: \gD \rightarrow \gO$.
It is defined over neighboring datasets $D, D'$, differing by one element $x \in \gX$ (we use the add/remove neighboring definition), which is $D' = D \cup \{x\}$. 
Formally:
\begin{definition}[Differential Privacy~\cite{dwork2006calibrating}]\label{def:dp}
A mechanism $M: \gD \rightarrow \gO$ is $\epsilon$-DP if for any two neighbouring datasets $D$, $D' \in \gD$, and for any measurable output subset $O \subseteq \gO$ it holds that:
\begin{equation*}
P[M(D) \in O] \leq e^\varepsilon P[M(D') \in O] .
\end{equation*}
\end{definition}%
Since the neighbouring definition is symmetric, so is the DP definition, and we also have that $P[M(D') \in O] \leq e^\varepsilon P[M(D) \in O]$.
Intuitively, $\epsilon$ upper-bounds the worst-case contribution of any individual example to the distribution over outputs of the computation (\emph{i.e.}, the ML model learned). 
More formally, $\epsilon$ is an upper-bound on the {\em privacy loss} incurred by observing an output $o$, defined as $\big|\ln\left(\frac{\sP[M(D) = o]}{\sP[M(D') = o]}\right)\big|$, which quantifies how much an adversary can learn to distinguish $D$ and $D'$ based on observing output $o$ from $M$.
A smaller $\epsilon$ hence means higher privacy.


{\bf DP,  MIA and privacy audits.} To audit a DP training algorithm $M$ that outputs a model $f$, one can perform a MIA on datapoint $x$, trying to distinguish between a neighboring training sets $D$ and $D'=D\cup \{x\}$. 
The MIA can be formalized as a hypothesis test to distinguish between $\gH_0 = D$ and $\gH_1 = D'$ using the output of the computation $f$. 
\citet{wasserman2010statistical,kairouz2015composition,dong2019gaussian} show that any such test at significance level $\alpha$ (False Positive Rate or FPR) 
 has power (True Positive Rate or TPR) 
 bounded by $e^\epsilon \alpha$.
In practice, one repeats the process of training model $f$ with and without $x$ in the training set, and uses a MIA to guess whether $x$ was included. 
If the MIA has $\textrm{TPR} > e^\epsilon \textrm{FPR}$, the training procedure that outputs $f$ is not $\epsilon$-DP. 
This is the building block of most privacy audits~\cite{jagielski2020auditing, inst, zanellabéguelin2022bayesian, lu2023general, nasr2023tight}. \cref{appendix:rel_work} provides further context, and discusses other related work that is not directly relevant to our contribution.

{\bf Averaging over data instead of models with O(1).} 
The above result bounds the success rate of MIAs when performed over several {\em retrained models}, on two alternative datasets $D$ and $D'$. 
\citet{steinke2023privacy} show that it is possible to average {\em over data} when several data points independently differ between $D$ and $D'$. 
Let $x_{1, m}$ be the $m$ data points independently included in the training set, and $s_{1, m} \in \{0, 1\}^m$ be the vector encoding inclusion. 
$T_{0, m} \in \sR^m$ represents any vector of guesses, with positive values for inclusion in the training set (member), negative values for non-member, and zero for abstaining. 
Then, if the training procedure is $\epsilon$-DP, Proposition 5.1 in~\citet{steinke2023privacy} bounds the performance of guesses from $T$ with:
{\small\begin{align*}
\sP\Big[ \sum_{i=1}^m &\max\{0, T_i \cdot S_i\} \geq v \ | \ T = t \Big]
\leq \underset{{S' \sim \textrm{Bernoulli}(\frac{e^\epsilon}{1+e^\epsilon})^m}}\sP\Big[ \sum_{i=1}^m |t_i| \cdot S'_i \geq v \Big] .
\end{align*}
}%
In other words, an audit (MIA) $T$ that can guess membership better than a Bernoulli random variable with probability $\frac{e^\epsilon}{1+e^\epsilon}$ refutes an $\epsilon$-DP claim.
In this work we build on this result, extending the algorithm (\S\ref{sec:approach}) and theoretical analysis (\S\ref{sec:theory}) to enable the use of generated data for non-members, which breaks the independence between data points.

\section{\acronym} 
\label{sec:approach}



Figure~\ref{fig:sysdesign} summarizes the end-to-end \acronym privacy measurement.
The measurement starts with a target model $f$, and a subset of its training data $\DT$ from distribution $\gD$. 
For instance, $\gD$ and $\DT$ could be the distribution and dataset of one participant in an FL training procedure that outputs a final model $f$. 
The privacy measurement then proceeds in two phases.
\begin{figure*}[t]
\centering  
\includegraphics[width=.70\textwidth]{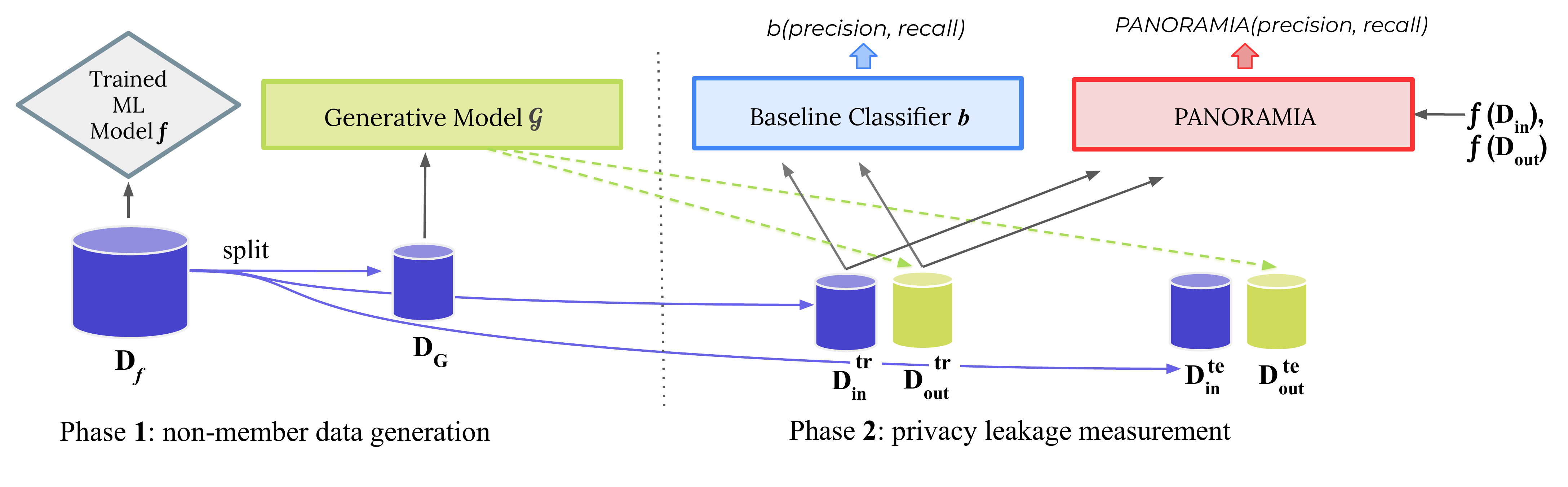}
  \caption{\acronym's two-phase privacy audit. 
  Phase 1 trains generative model $\gG$ on member data. 
  Phase 2 trains a MIA on a subset of member data and generated non-member data, using the loss of $f$ on these data points. 
  The performance of the MIA is compared to a baseline classifier that does not have access to $f$. Notations are summarized in Table~\ref{table:notations} in \cref{appendix:notations}.}
  \label{fig:sysdesign} 
\end{figure*}

{\bf Phase 1:} In the first phase, \acronym uses a subset of the known training data $\DG \subset \DT$ to train a generative model $\gG$. 
The goal of the generative model $\gG$ is to match the training data distribution $\gD$ as closely as possible, which is formalized in Definition~\ref{c-closeness-def} (\S\ref{sec:theory}). 
%
Using the generative model $\gG$, we can synthesize non-member data, which corresponds to data that was not used in the training of target model $f$. 
Hence, we now have access to an independent dataset of member data $\Din = \DT \backslash \DG$, and a synthesized dataset of non-member data $\Dout \sim \gG$, of size $m = |\Din|$.
%

{\bf Phase 2:} In the second phase, we leverage $\Din$ and $\Dout$ to audit the privacy leakage of $f$ using a MIA. 
To this end, we split $\Din, \Dout$ into training and testing sets, respectively called $\Dintr, \Douttr$ and $\Dinte, \Doutte$.
We use the training set to train a MIA (called \acronym in Figure~\ref{fig:sysdesign}), a binary classifier that predicts whether a given datapoint is a member of $\DT$, the training set of the target model $f$.
This MIA classifier makes its prediction based on both a training example $x$, as well as information from applying the target model $f$ to the input, such as the loss of the target model when applied to this example $\textrm{loss}(f(x))$
(see \S\ref{sec:eval}, \cref{appendix:exp_details} for details).
We use the test set to measure the MIA performance, using the precision at different recall values.

Previous results linking the performance of a MIA on several data-points to $\epsilon$-DP bounds rely on independence between members and non-members. 
This intuitively means that there is no information about membership in $x$ itself.
When the auditor controls the training process this independence is enforced by construction, by adding data points to the training set based on an independent coin flip. 
In \acronym, we do not have independence between membership and $x$, as all non-members come from the generator $\gG \neq \gD$.
As a result, there are two ways to guess membership and have high MIA precision: either by using $f$ to detect membership (\emph{i.e.}, symptomatic of privacy leakage) or by detecting generated data (\emph{i.e.}, not a symptom of privacy leakage).
To measure the privacy leakage, we compare the results of the MIA to that of a baseline classifier $b$ that guesses membership based exclusively on $x$, without access to $f$. 
The stronger this baseline, the better the removal of the effect of synthesized data detection.
Algorithm \ref{algo:panoramia} summarizes the entire procedure.
In the next section, we demonstrate how to relate the difference between the baseline $b$ and the MIA performance to the privacy loss $\epsilon$.

\begin{figure}
\begin{algorithm}[H]
\caption{\acronym}\label{algo:panoramia} 
\begin{algorithmic}[1]

    \Statex{\hspace{-.4cm}{\bf Input:} Target ML model $\model$, audit set size $m$, confidence $1-\beta$}
    \Statex{\hspace{-.4cm}\bf Phase 1:}
    \State{Split $\DT$ in $\DG, \Dintr, \Dinte$, with $|\Dinte|=m$}
    \State{Train generator $\gG$ on $\DG$}
    \State{Generate $\Douttr, \Doutte$ of size $|\Dintr|, |\Dinte|$}   
    \Statex{\hspace{-.4cm}\bf Phase 2:}   
    \Statex{\hspace{-.4cm}Train the baseline and MIA:}
    \setcounter{ALG@line}{0}
    \State Label $\Dintr$ as members, and $\Douttr$ as non-members
    \State Train $b$ to predict labels using $x \in \Dintr \cup \Douttr$
    \State Train MIA to predict labels using $x \in \Dintr \cup \Douttr$ and $f(x)$

    \Statex{\hspace{-.4cm}Measure privacy leakage (see \S\ref{sec:theory}):}
    \setcounter{ALG@line}{0}
    \State Sample $s \sim \textrm{Bernoulli}(\frac{1}{2})^m$ \Comment{Def.\ref{def:auditing-game}}
    \State Create audit set $X = s \cdot \Dinte + (1-s)
    \Doutte$
    \State Score each audit point for membership, creating $t^b \triangleq b(X) \in \sR_+^m$ and $t^a \triangleq \textrm{MIA}(X) \in \sR_+^m$
    \State Set $v_{\textrm{ub}}^b(c, t) \triangleq \sup \ \{v : \beta^b(m, c, v, t) \leq \frac{\beta}{2}\}$ \Comment{Prop.\ref{prop:gen-test}} \label{algo:max-t-start}
    \State $\clb = \max_{t,c} \1\{t^b \geq t\} \cdot s \leq v_{\textrm{ub}}^b(c, \1\{t^b \geq t\})$
    \State Set $v_{\textrm{ub}}^a(c, \epsilon, t) \triangleq \sup \ \{v : \beta^a(m, c, \epsilon, v, t) \leq \frac{\beta}{2}\}$ \Comment{Prop.\ref{prop:dp-test}}
    \State $\cpepslb = \max_{t, c, \epsilon} \1\{t^a \geq t\} \cdot s \leq v_{\textrm{ub}}^a(c, \epsilon, \1\{t^a \geq t\})$ \label{algo:max-t-end}

    \Statex{\hspace{-.4cm}{\bf Return} $\tilde\epsilon \triangleq \cpepslb - \clb$}
\end{algorithmic}
\end{algorithm}
\vspace{-20pt}
\end{figure}

\section{Quantifying Privacy Leakage with \acronym} 
\label{sec:theory}
We formalize a privacy game as follows: construct a sequence of auditing samples, in which $s_i$ is sampled independently to choose either the real ($\xin \in \gX^m$ training points from $\gD$ if $s_i=1$) or generated ($\xgen \in \gX^m$ generated points from $\gG$ if $s_1=0$) data point at each index $i$.

\begin{definition}[Privacy game]\label{def:auditing-game}
    \begin{align*}
    & s \sim \textrm{Bernoulli}(\frac{1}{2})^m , \ \textrm{with} \ s_i \in \{0, 1\} , x_i = (1-s_i) \xgen_i + s_i \xin_i, \ \forall i \in \{1, \ldots, m\} .
    \end{align*}
\end{definition}

The key difference between \cref{def:auditing-game} and the privacy game of \citet{steinke2023privacy} lies in how we create the audit set. Instead of sampling non-members independently and removing them from the training set (which requires modifyig the training set and the model), \cref{def:auditing-game} starts from a set of known members (after the fact), and pairs each point with a non-member (generated i.i.d. from the generator distribution). For each pair, we then flip a coin to decide which point in the pair will be shown to the auditor for testing, thereby creating the test task of our privacy measurement.

The level of achievable success in this game quantifies the privacy leakage of the target model $f$.
More precisely, we follow an analysis inspired by that of \cite{steinke2023privacy}, but require several key technical changes to support auditing with no retraining using generated non-member data.

\subsection{Formalizing the Privacy Measurements as a Hypothesis Test}
\label{subsec:hyp-test}

We first need a notion of quality for our generator:
\begin{definition}[$c$-closeness]\label{c-closeness-def} For all $c > 0$, we say that a generative model $\gG$ is $c$-close for data distribution $\gD$ if: $$ \forall x \in \gX, \ e^{-c} \sP_\gD\big[x\big] \leq \sP_\gG\big[x\big] . $$
\end{definition}
The smaller $c$, the better the generator, as $\gG$ assigns a probability to real data that cannot be much smaller than that of the real data distribution. We make three important remarks.\\
{\bf Remark 1:} $c$-closeness is a strong requirement but it is achievable, at least for a large $c$. 
For instance, a uniform random generator is $c$-close to any distribution $\gD$,  with $c < \infty$. 
Of course $c$ would be very large, and the challenge in \acronym is to create a generator that empirically has a small enough $c$.\\
{\bf Remark 2:} We show in \cref{appendix:delta_relaxation} that we can relax this definition to hold only with high probability (following the $(\epsilon, \delta)$-DP relaxation). 
Empirical results under this relaxation are very similar.
We also show how to measure $(\epsilon, \delta)$-DP, instead of $\epsilon$-DP on which we focus here. 
\\
{\bf Remark 3:} Our definition of closeness is very similar to that of DP.
This is not a coincidence as we will use this definition to be able to reject claims of both $c$-closeness for the generator and $\epsilon$-DP for the algorithm that gave the target model.
We further note that, contrary to the DP definition, $c$-closeness is one-sided as we only bound $\sP_\gG$ from below. 
Intuitively, this means that the generator has to produce high-quality samples (\emph{i.e.}, samples likely under $\gD$) with high enough probability but it does not require that all samples are good though.
This one-sided measure of closeness is enabled by our focus on detecting members (\emph{i.e.}, true positives) as opposed to members and non-members, and puts less stringent requirements on the generator.

Using Definition~\ref{c-closeness-def}, we can formulate the hypothesis test on the privacy game of Definition~\ref{def:auditing-game} that underpins our approach:
\begin{equation*}\label{eq:hypothesis}
\gH: \textrm{generator} \ \gG \ \textrm{is} \ c\textrm{-close, and target model} \ f \ \textrm{is} \ \epsilon\textrm{-DP}.
\end{equation*}
Note that, in a small abuse of language,
we will often refer to $f$ as $\epsilon$-DP to say that $f$ is the output of an $\epsilon$-DP training mechanism.
That is,
$B(s, x) = \{b(x_1), b(x_2), \ldots, b(x_m)\}$.

We define two key mechanisms: $B(s, x): \{0, 1\}^m \times \gX^m \rightarrow \sR_+^m$ outputs a (potentially randomized) non-negative score for the membership of each datapoint $x_i$, based on $x_i$ only; $A(s, x, f): \{0, 1\}^m \times \gX^m \times \gF \rightarrow \sR_+^m$ outputs a (potentially randomized) non-negative score for the membership of each datapoint, with the guess for index $i$ depending on $x_{\leq i}$ and target model $f$. 
We can construct a statistical test for each part of the hypothesis separately.

\begin{proposition}\label{prop:gen-test}
Let $\gG$ be $c$-close, $S$ and $X$ be the random variables for $s$ and $x$ from \cref{def:auditing-game}, and $T^b \triangleq B(S, X)$ be the vector of guesses from the baseline. Then, for all $v \in \sR$ and all $t$ in the support of $T$:
\begin{align*}
& \sP_{S, X, T^b}\Big[ \sum_{i=1}^m T^b_i \cdot S_i \geq v \ | \ T^b = t^b \Big] \leq \underset{{S' \sim \textrm{Bernoulli}(\frac{e^c}{1+e^c})^m}}\sP\Big[ \sum_{i=1}^m t^b_i \cdot S'_i \geq v \Big] \triangleq \beta^b(m, c, v, t^b)
\end{align*}
\end{proposition}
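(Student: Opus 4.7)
The plan is to show that, conditional on the baseline score vector $T^b = t^b$, each coordinate $S_i$ remains an independent Bernoulli random variable with success probability at most $e^c/(1+e^c)$, from which the stated tail bound follows by a stochastic dominance argument.

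First, I would use the fact that $B$ reads only $X$ (together with its own internal randomness), so $T^b$ is measurable with respect to $X$ and $B$'s internal coins and carries no additional information about $S$ beyond what $X$ already carries. Thus it suffices to control the conditional law of $S$ given $X$. By construction of the auditing game in \cref{def:auditing-game}, the pairs $(S_i, X_i)$ are mutually independent across $i$, so the conditional law of $S$ given $X$ factorizes and it is enough to bound $\sP[S_i = 1 \mid X_i = x]$ coordinate-wise. A direct application of Bayes' rule, using $\sP[S_i = 1] = \sP[S_i = 0] = 1/2$, gives
$$\sP[S_i = 1 \mid X_i = x] = \frac{\sP_\gD[x]}{\sP_\gD[x] + \sP_\gG[x]} .$$
The $c$-closeness hypothesis $e^{-c}\sP_\gD[x] \leq \sP_\gG[x]$ then yields $\sP[S_i = 1 \mid X_i = x] \leq e^c/(1+e^c)$.

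Next, since $B$ maps to $\sR_+^m$, every coordinate $t^b_i$ is non-negative. For any fixed $x$ (and fixed realization of $B$'s internal coins), a standard monotone coupling produces independent $S'_i \sim \textrm{Bernoulli}(e^c/(1+e^c))$ with $S_i \leq S'_i$ almost surely, hence $\sum_i t^b_i S_i \leq \sum_i t^b_i S'_i$ almost surely. Consequently,
$$\sP\!\left[ \sum_i t^b_i S_i \geq v \,\Big|\, X = x \right] \leq \sP\!\left[ \sum_i t^b_i S'_i \geq v \right] ,$$
and the right-hand side does not depend on $x$. Averaging over the conditional distribution of $X$ given $T^b = t^b$ (which leaves the conditional law of $S$ unchanged by the previous step) yields the claimed inequality.

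The main subtlety is the conditioning bookkeeping: one must verify that conditioning on $T^b = t^b$, a potentially many-to-one event in $X$, does not re-introduce dependence between the $S_i$. This is handled by first passing to the finer conditioning on $X$ (together with $B$'s internal coins), applying the independent-Bernoulli bound there, and only then marginalizing back up to $T^b$. Extensions to continuous $\gD$ and $\gG$ (replacing probability masses with densities in the Bayes computation) and the handling of randomized $b$ are routine and do not affect the argument.
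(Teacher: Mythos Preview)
Your proposal is correct and shares the same core ingredients as the paper's proof: a Bayes computation combined with $c$-closeness to bound $\sP[S_i=1\mid\cdot]\le e^c/(1+e^c)$, followed by a stochastic-dominance step to pass to i.i.d.\ $\textrm{Bernoulli}(e^c/(1+e^c))$.

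The execution differs slightly. You condition on the full vector $X$ (plus $B$'s internal coins), exploit the coordinate-wise conditional independence of the $(S_i,X_i)$ pairs to factorize completely, and finish with a direct monotone coupling. The paper instead works sequentially: it bounds $\sP[S_i=1\mid T^b=t^b,S_{<i}=s_{<i}]$ and then runs an induction via a stochastic-dominance lemma (Lemma~4.9 of \cite{steinke2023privacy}). Your route is a bit more elementary here because Proposition~\ref{prop:gen-test} enjoys full conditional independence given $X$; the paper's sequential argument is slightly heavier than necessary for this proposition, but it is the same skeleton used in Proposition~\ref{prop:dp-test}, where the dependence introduced by $f$ genuinely requires the one-coordinate-at-a-time treatment.
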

\begin{showIfProofs}
\begin{proof}
Notice that under our baseline model $B(s, x) = \{b(x_1), b(x_2), \ldots, b(x_m)\}$, and given that the $X_i$ are i.i.d., we have that: $S_{<i} \ind T^b_{<i} \ | \ X_{<i}$, since $T^b_i = B(S, X)_i$'s distribution is entirely determined by $X_i$;
and $S_{\leq i} \ind T^b_{> i} \ | \ X_{<i}$ since the $X_{i}$ are sampled independently from the past.

We study the distribution of $S$ given a fixed prediction vector $t^b$, one element $i \in [m]$ at a time:
\begin{align*}
& \sP\big[ S_i = 1 \ | \ T^b = t^b, S_{<i} = s_{<i}, X_{\leq i} = x_{\leq i}\big] \\
&= \sP\big[ S_i = 1 \ | \ S_{<i} = s_{<i}, X_{\leq i} = x_{\leq i}\big] \\
&= \sP\big[ X_i \ | \ S_i = 1, S_{<i} = s_{<i}, X_{< i} = x_{< i}\big] \\
& \ \ \ \ \frac{\sP\big[ S_i = 1 \ | \ S_{<i} = s_{<i}, X_{< i} = x_{< i} \big]}{\sP\big[ X_i \ | \ S_{<i} = s_{<i}, X_{< i} = x_{< i} \big]} \\
&= \frac{\sP\big[ X_i \ | \ S_i = 1, S_{<i} = s_{<i}, X_{< i} = x_{< i}\big]\sP\big[ S_i = 1 \big]}{\sP\big[ X_i \ | \ S_{<i} = s_{<i}, X_{< i} = x_{< i} \big]} \\
&= \frac{\sP\big[ X_i \ | \ S_i = 1 \big]\frac{1}{2}}{\sP\big[ X_i \ | \ S_i=1 \big]\frac{1}{2} + \sP\big[ X_i \ | \ S_i=0 \big]\frac{1}{2}} \\
&= \frac{1}{1 + \frac{\sP\big[ X_i \ | \ S_i=0 \big]}{\sP\big[ X_i \ | \ S_i=1 \big]}} = \frac{1}{1 + \frac{\sP_\gG\big[ X_i \big]}{\sP_\gD\big[ X_i \big]}} \leq \frac{1}{1 + e^{-c}} = \frac{e^c}{1 + e^{c}}
\end{align*}
The first equality uses the independence remarks at the beginning of the proof, the second relies on Bayes' rule, while the third and fourth that $S_i$ is sampled i.i.d. from a Bernoulli with probability half, and $X_i$ i.i.d. conditioned on $S_i$. 
The last inequality uses Definition~\ref{c-closeness-def} for $c$-closeness.

Using this result and the law of total probability to introduce conditioning on $X_{\leq i}$, we get that:
\begin{align*}
& \sP\big[ S_i = 1 \ | \ T^b = t^b, S_{<i} = s_{<i}\big] \\
& = \sum_{x_{\leq i}} \sP\big[ S_i = 1 \ | \ T^b = t^b, S_{<i} = s_{<i},  X_{\leq i} = x_{\leq i} \big] \\
& \ \ \ \ \sP\big[ X_{\leq i} = x_{\leq i} \ | \ T^b = t^b, S_{<i} = s_{<i} \big] \\
& \leq \sum_{x_{\leq i}} \frac{e^c}{1 + e^{c}} \sP\big[ X_{\leq i} = x_{\leq i} \ | \ T^b = t^b, S_{<i} = s_{<i} \big] ,
\end{align*}
and hence that:
\begin{equation}\label{eq:stoc-dom-gen}
\sP\big[ S_i = 1 \ | \ T^b = t^b, S_{<i} = s_{<i}\big] \leq \frac{e^c}{1 + e^{c}}
\end{equation}

We can now proceed by induction: assume inductively that $W_{m-1} \triangleq \sum_{i=1}^{m-1} T^b_i \cdot S_i$ is stochastically dominated (see Definition 4.8 in \cite{steinke2023privacy}) by $W'_{m-1} \triangleq \sum_{i=1}^{m-1} T^b_i \cdot S'_i$, in which $S' \sim \textrm{Bernoulli}(\frac{e^c}{1 + e^{c}})^{m-1}$. 
Setting $W_1 = W'_1 = 0$ makes it true for $m=1$. Then, conditioned on $W_{m-1}$ and using Eq. \ref{eq:stoc-dom-gen}, $T^b_m \cdot S_m = T_m \cdot \1\{S_m = 1\}$ is stochastically dominated by $T^b_m \cdot \textrm{Bernoulli}(\frac{e^c}{1 + e^{c}})$.
Applying Lemma 4.9 from \cite{steinke2023privacy} shows that $W_{m}$ is stochastically dominated by $W'_{m}$, which proves the induction and implies the proposition's statement.
\end{proof}
\end{showIfProofs}
\begin{hideIfProofs}
\begin{proof}
In Appendix \ref{appendix:proof-prop1}.
\end{proof}
\end{hideIfProofs}

Now that we have a test to reject a claim that the generator $\gG$ is $c$-close for the data distribution $\gD$, we turn our attention to the second part of $\gH$ which claims that the target model $f$ is the result of an $\epsilon$-DP mechanism.

\begin{proposition}\label{prop:dp-test}
Let $\gG$ be $c$-close, $S$ and $X$ be the random variables for $s$ and $x$ from \cref{def:auditing-game}, $f$ be $\epsilon$-DP, and $T^a \triangleq A(S, X, f)$ be the guesses from the membership audit. Then, for all $v \in \sR$ and all $t$ in the support of $T$:
\begin{align*}
& \sP_{S, X, T^a}\Big[ \sum_{i=1}^m T^a_i \cdot S_i \geq v \ | \ T^a = t^a \Big] \leq \underset{{S' \sim \textrm{Bernoulli}(\frac{e^{c+\epsilon}}{1+e^{c + \epsilon}})^m}}\sP\Big[ \sum_{i=1}^m t^a_i \cdot S'_i \geq v \Big] \triangleq \beta^a(m, c, \epsilon, v, t^a)
\end{align*}
\end{proposition}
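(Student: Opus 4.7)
The plan is to parallel the proof of Proposition~\ref{prop:gen-test}, layering the $\epsilon$-DP guarantee of $f$ on top of the $c$-closeness argument. As before, I would first establish that $T^a_i$ depends only on $X_{\leq i}$ and $f$ (so $S_{<i} \ind T^a_{<i} \mid X_{<i}$ still holds, given that $f$ is independent of $S$), and then analyze $\sP[S_i = 1 \mid T^a = t^a, S_{<i}, X_{\leq i}]$ via Bayes' rule, coordinate by coordinate, aiming for the same stochastic-dominance conclusion at each step.

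Because $T^a_i$ now sees $f$, the likelihood ratio acquires a second multiplicative factor beyond the $\sP_\gD[X_i]/\sP_\gG[X_i]$ piece already handled by $c$-closeness: namely, the ratio of the conditional densities of $(f, T^a_{\geq i})$ given $X_i$ under $S_i = 1$ versus $S_i = 0$. Intuitively, conditioning on $X_i = x$, the scenario $S_i = 1$ forces $\xin_i = x \in \DT$, while $S_i = 0$ has $x = \xgen_i$ and some independent draw from $\gD$ playing the role of $\xin_i$ in $\DT$. Bounding this second ratio by $e^\epsilon$ is where $\epsilon$-DP enters: conditioning on the realizations of the remaining training points and marginalizing the $S_i = 0$ case over $\xin_i \sim \gD$, a single application of add/remove DP should suffice to show that the distribution of $f$, and hence of $T^a_i$, shifts by at most $e^\epsilon$ across the two values of $S_i$.

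With both factors controlled, I would conclude $\sP[S_i = 1 \mid T^a, S_{<i}, X_{\leq i}] \leq e^{c+\epsilon}/(1 + e^{c+\epsilon})$, which is exactly the stochastic-dominance ingredient used in Proposition~\ref{prop:gen-test}, only with $c$ replaced by $c + \epsilon$. From here, the induction on $i$ and the appeal to Lemma~4.9 of~\cite{steinke2023privacy} go through almost verbatim: at each step $T^a_i \cdot \1\{S_i = 1\}$ is dominated by $t^a_i \cdot \textrm{Bernoulli}(e^{c+\epsilon}/(1+e^{c+\epsilon}))$ conditionally on the past, and the inductive lemma lifts this to a dominance statement for the partial sums, yielding the claimed tail bound with $S' \sim \textrm{Bernoulli}(e^{c+\epsilon}/(1+e^{c+\epsilon}))^m$.

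The main obstacle I anticipate is the DP step in the second paragraph: a naive replace-one comparison of $\DT$ (with $\xin_i = x$) against $\DT$ (with $\xin_i = y$ for an independent $y \sim \gD$) would spend two units of add/remove DP, yielding $e^{2\epsilon}$ rather than the tight $e^\epsilon$ in the statement. Obtaining the correct factor requires exploiting the fact that under $S_i = 0$ the variable $\xin_i$ is drawn from the same $\gD$ that appears in the DP expansion, so that averaging over $\xin_i$ absorbs one of the two would-be DP steps; getting this bookkeeping right, and matching it carefully with the Bayes decomposition, is the only step that genuinely goes beyond the $c$-closeness argument of Proposition~\ref{prop:gen-test}, and is where I would spend the bulk of the proof-writing effort.
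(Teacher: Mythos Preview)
Your plan is essentially the paper's proof: Bayes' rule on $\sP[S_i=1\mid T^a=t^a,S_{<i},X_{\leq i}]$, bound one ratio by $e^{-\epsilon}$ (DP of $A$ in $s$) and the other by $e^{-c}$ ($c$-closeness), obtain $\sP[S_i=1\mid\cdots]\leq e^{c+\epsilon}/(1+e^{c+\epsilon})$, marginalize out $X_{\leq i}$, and finish with the same induction via Lemma~4.9 of \cite{steinke2023privacy}.

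Two remarks. First, the conditional independence $S_{<i}\ind T^a_{<i}\mid X_{<i}$ you assert does \emph{not} carry over from Proposition~\ref{prop:gen-test}: $T^a$ depends on $f$, and once you condition on $X_{<i}$ the coordinates $S_{<i}$ pin down some of the $\xin_{<i}$ sitting in $\DT$, so $f$ (hence $T^a_{<i}$) becomes correlated with $S_{<i}$. The paper's proof of this proposition neither claims nor uses that independence; it keeps the full event $T^a=t^a$ in the conditioning and controls it through the DP likelihood ratio. You should do the same and simply drop this preliminary step.

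Second, on your $e^{\epsilon}$ vs.\ $e^{2\epsilon}$ worry: the paper does not work through the add/remove bookkeeping you outline. It directly invokes the statement ``$A(s,x,f)$ is $\epsilon$-DP w.r.t.\ $s$'' (announced in the text just before the proposition) to get
\[
\frac{\sP[T^a=t^a\mid S_i=0,S_{<i},X_{\leq i}]}{\sP[T^a=t^a\mid S_i=1,S_{<i},X_{\leq i}]}\geq e^{-\epsilon},
\]
treating this as a premise rather than deriving it from add/remove DP of the training algorithm. So your concern about a hidden replace-one step is not addressed in the paper's argument; the paper simply takes the $\epsilon$-DP of $A$ in $s$ (conditioned on $x$) as given.
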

\begin{showIfProofs}
\begin{proof}
Fix some $t^a \in \sR_+^m$. 
We study the distribution of $S$ one element $i \in [m]$ at a time:
\begin{align*}
& \sP\big[ S_i = 1 \ | \ T^a = t^a, S_{<i} = s_{<i}, X_{\leq i} = x_{\leq i}\big] \\
&= \sP\big[ T^a = t^a \ | \ S_i = 1, S_{<i} = s_{<i}, X_{\leq i} = x_{\leq i}\big] \\
& \ \ \ \ \frac{\sP\big[ S_i = 1 \ | \ S_{<i} = s_{<i}, X_{\leq i} = x_{\leq i} \big]}{\sP\big[ T^a = t^a \ | \ S_{<i} = s_{<i}, X_{\leq i} = x_{\leq i} \big]} \\
&\leq \frac{1}{1 + e^{-\epsilon}\frac{\sP\big[ S_i = 0 \ | \ S_{<i} = s_{<i}, X_{\leq i} = x_{\leq i} \big]}{\sP\big[ S_i = 1 \ | \ S_{<i} = s_{<i}, X_{\leq i} = x_{\leq i} \big]}} \\
&\leq \frac{1}{1 + e^{-\epsilon}e^{-c}} = \frac{e^{c+\epsilon}}{1 + e^{c+\epsilon}}
\end{align*}
The first equality uses Bayes' rule. 
The first inequality uses the decomposition:
\begin{align*}
& \sP\big[ T^a = t^a \ | \ S_{<i} = s_{<i}, X_{\leq i} = x_{\leq i} \big] = \\
&= \sP\big[ T^a = t^a \ | \ S_i = 1, S_{<i} = s_{<i}, X_{\leq i} = x_{\leq i} \big] \\
& \ \ \cdot \sP\big[ S_i = 1 \ | \ S_{<i} = s_{<i}, X_{\leq i} = x_{\leq i} \big] \\
&+ \sP\big[ T^a = t^a \ | \ S_i = 0, S_{<i} = s_{<i}, X_{\leq i} = x_{\leq i} \big] \\
& \ \ \cdot \sP\big[ S_i = 0 \ | \ S_{<i} = s_{<i}, X_{\leq i} = x_{\leq i} \big] ,
\end{align*}
and the fact that $A(s, x, f)$ is $\epsilon$-DP w.r.t. $s$ and hence that:
\[
\frac{\sP\big[ T^a = t^a \ | \ S_i = 0, S_{<i} = s_{<i}, X_{\leq i} = x_{\leq i}\big]}{\sP\big[ T^a = t^a \ | \ S_i = 1, S_{<i} = s_{<i}, X_{\leq i} = x_{\leq i}\big]} \geq e^{-\epsilon} .
\]

The second inequality uses that:
\begin{align*}
& \frac{\sP\big[ S_i = 0 \ | \ S_{<i} = s_{<i}, X_{\leq i} = x_{\leq i} \big]}{\sP\big[ S_i = 1 \ | \ S_{<i} = s_{<i}, X_{\leq i} = x_{\leq i} \big]} \\
&= \frac{\sP\big[ X_i = x_i \ | \ S_i = 0, S_{<i} = s_{<i}, X_{< i} = x_{< i} \big]}{\sP\big[ X_i = x_i \ | \ S_i = 1, S_{<i} = s_{<i}, X_{< i} = x_{< i} \big]} \\
& \ \ \cdot \frac{\sP\big[ S_i = 0 \ | \  S_{<i} = s_{<i}, X_{< i} = x_{< i} \big]}{\sP\big[ S_i = 1 \ | \ S_{<i} = s_{<i}, X_{< i} = x_{< i} \big]} \\
&= \frac{\sP\big[ X_i = x_i \ | \ S_i = 0, S_{<i} = s_{<i}, X_{< i} = x_{< i} \big]}{\sP\big[ X_i = x_i \ | \ S_i = 1, S_{<i} = s_{<i}, X_{< i} = x_{< i} \big]} \cdot \frac{1/2}{1/2} \\
& = \frac{\sP_\gG\big[ X_i \big]}{\sP_\gD\big[ X_i \big]} \geq e^{-c}
\end{align*}

As in Proposition~\ref{prop:gen-test}, applying the law of total probability to introduce conditioning on $X_{\leq i}$ yields:
\begin{equation}\label{eq:stoc-dom-mia}
\sP\big[ S_i = 1 \ | \ T^a = t^a, S_{<i} = s_{<i} \big] \leq \frac{e^{c+\epsilon}}{1 + e^{c+\epsilon}} ,
\end{equation}
and we can proceed by induction.
Assume inductively that $W_{m-1} \triangleq \sum_{i=1}^{m-1} T^a_i \cdot S_i$ is stochastically dominated (see Definition 4.8 in \cite{steinke2023privacy}) by $W'_{m-1} \triangleq \sum_{i=1}^{m-1} T^a_i \cdot S'_i$, in which $S' \sim \textrm{Bernoulli}(\frac{e^{c+\epsilon}}{1 + e^{c+\epsilon}})^{m-1}$. 
Setting $W_1 = W'_1 = 0$ makes it true for $m=1$. 
Then, conditioned on $W_{m-1}$ and using Equation~\ref{eq:stoc-dom-mia}, $T^a_m \cdot S_m = T^a_m \cdot \1\{S_m = 1\}$ is stochastically dominated by $T^a_m \cdot \textrm{Bernoulli}(\frac{e^{c+\epsilon}}{1 + e^{c+\epsilon}})$. 
Applying Lemma 4.9 from \cite{steinke2023privacy} shows that $W_{m}$ is stochastically dominated by $W'_{m}$, which proves the induction and implies the proposition's statement.
\end{proof}
\end{showIfProofs}
\begin{hideIfProofs}
\begin{proof}
In Appendix \ref{appendix:proof-prop2}.
\end{proof}
\end{hideIfProofs}

We can now provide a test for hypothesis $\gH$, by applying a union bound over \cref{prop:gen-test,prop:dp-test}:

\begin{corollary}\label{corollary:hyp-test}
Let $\gH$ be true, $T^b \triangleq B(S, X)$, and $T^a \triangleq A(S, X, f)$.
Then:
\begin{align*}
& \sP\Big[ \sum_{i=1}^m T^a_i \cdot S_i \geq v^a, \ \sum_{i=1}^m T^b_i \cdot S_i \geq v^b\ | \ T^a = t^a, T^b = t^b \Big] \leq \beta^a(m, c, \epsilon, v^a, t^a) + \beta^b(m, c, v^b, t^b)
\end{align*}
\end{corollary}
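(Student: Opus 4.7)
The plan is a union bound combining the two tail bounds of \cref{prop:gen-test,prop:dp-test}. Write $E^a \triangleq \{\sum_{i=1}^m T^a_i S_i \geq v^a\}$ and $E^b \triangleq \{\sum_{i=1}^m T^b_i S_i \geq v^b\}$. Because the hypothesis $\gH$ holds, the generator is $c$-close and the target model is $\epsilon$-DP, so both propositions are applicable. Since $E^a \cap E^b \subseteq E^a \cup E^b$, the union bound gives
\[
\sP\big[E^a \cap E^b \,\big|\, T^a = t^a, T^b = t^b\big] \leq \sP[E^a \mid T^a = t^a, T^b = t^b] + \sP[E^b \mid T^a = t^a, T^b = t^b] ,
\]
and it remains to bound each summand on the right by the appropriate $\beta^a$ or $\beta^b$.

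I would then invoke \cref{prop:dp-test} for the first term and \cref{prop:gen-test} for the second. The subtle point is that those propositions condition on a single guess vector whereas here we condition on both. I would resolve this by revisiting the per-coordinate dominance at the heart of each proof: the bound $\sP[S_i = 1 \mid T = t, S_{<i}, X_{\leq i}] \leq p$ (with $p$ equal to $\frac{e^c}{1+e^c}$ or $\frac{e^{c+\epsilon}}{1+e^{c+\epsilon}}$ depending on the proposition) is \emph{uniform} in $(S_{<i}, X_{\leq i})$. Because $T^b_i = b(X_i)$ depends only on $X_i$ (the baseline has no access to $f$) and $T^a_i$ is measurable with respect to $(X_{\leq i}, f)$, enriching the conditioning with the other guess vector only refines the $X$-filtration, so the per-step Bernoulli domination is preserved. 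The inductive stochastic-dominance argument via Lemma 4.9 of \cite{steinke2023privacy} then carries through unchanged and yields the claimed $\beta^a, \beta^b$ bounds.

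The main obstacle, therefore, is this conditioning check rather than any new probabilistic calculation, since the union bound itself is elementary. I would handle it as a short measurability remark: both guess vectors are deterministic functions of $(S, X, f)$ and the per-coordinate dominance was already established uniformly in the quantities on which we now additionally condition, so no new inductive work is required and the two propositions can be applied essentially off-the-shelf to yield the stated inequality.
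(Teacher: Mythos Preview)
Your approach is exactly the paper's: its entire proof is the single remark ``applying a union bound over \cref{prop:gen-test,prop:dp-test},'' and you do the same while being more explicit about the conditioning subtlety the paper leaves unaddressed. One small issue in your extra care: the claim that adding $T^a$ to the baseline bound ``only refines the $X$-filtration'' is not quite right, because $T^a$ depends on $f$ and not just on $X$, which is precisely why \cref{prop:dp-test} picks up the extra $e^\epsilon$ factor; so your argument for $\sP[E^b \mid T^a, T^b] \leq \beta^b$ does not go through as written. This does not break the corollary as stated, however, since the intersection already satisfies $\sP[E^a \cap E^b \mid T^a, T^b] \leq \sP[E^a \mid T^a, T^b]$, and for \emph{that} term your measurability argument (adding $T^b$, a function of $X$ only, to the \cref{prop:dp-test} conditioning) is sound and yields $\leq \beta^a \leq \beta^a + \beta^b$.
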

To make things more concrete, let us instantiate Corollary~\ref{corollary:hyp-test} as we do in \acronym. 
Our baseline ($B$ above) and MIA ($A$ above) classifiers return a membership guesses $T^{a,b} \in \{0, 1\}^m$, with $1$ corresponding to membership. 
Let us call $r^{a,b} \triangleq \sum_i t_i^{a,b}$ the total number of predictions, and $\tp^{a,b} \triangleq \sum_i t_i^{a,b} \cdot s_i$ the number of correct membership guesses (true positives).
We also call the precision $\pr^{a,b} \triangleq \frac{\tp^{a,b}}{r^{a,b}}$.
Using the following tail bound on the sum of Bernoulli random variables for simplicity and clarity (we use a tighter bound in practice, but this one is easier to read),
\begin{equation}\label{eq:bound-example}
    \underset{{S' \sim \textrm{Bernoulli}(p)^r}}\sP\Big[ \sum_{i=1}^r \frac{S'_i}{r} \geq p + \sqrt{\frac{\log(1/\beta)}{2r}} \Big] \leq \beta ,
\end{equation}
we can reject $\gH$ at confidence level $\beta$ by setting $\beta^a = \beta^b = \frac{\beta}{2}$ and if either $\pr^b \geq \frac{e^c}{1+e^c} + \sqrt{\frac{\log(2/\beta)}{2r^b}}$ or $\pr^a \geq \frac{e^{c+\epsilon}}{1+e^{c+\epsilon}} + \sqrt{\frac{\log(2/\beta)}{2r^a}}$.

\subsection{Quantifying Privacy Leakage and Interpretation}
\label{subsec:dp-measurement}

In an privacy measurement, we would like to quantify $\epsilon$, not just reject a given $\epsilon$ claim. 
We use the hypothesis test from Corollary \ref{corollary:hyp-test} to compute a confidence interval on $c$ and $\epsilon$. 
To do this, we first define an ordering between $(c, \epsilon)$ pairs, such that if  $(c_1, \epsilon_1) \leq (c_2, \epsilon_2)$, the event (\emph{i.e.}, set of observations for $T^{a,b}, S$) for which we can reject $\gH(c_2, \epsilon_2)$ is included in the event for which we can reject $\gH(c_1, \epsilon_1)$. 
That is, if we reject $\gH$ for values $(c_2, \epsilon_2)$ based on audit observations, we also reject values $(c_1, \epsilon_1)$ based on the same observations.

We define the following order to fit this assumption, based on the hypothesis test from Corollary \ref{corollary:hyp-test}:
\begin{equation}\label{eq:ordering}
(c_1, \epsilon_1) \leq (c_2, \epsilon_2) \ \textrm{if either} \ 
\begin{cases}
      c_1 < c_2, \ \textrm{or} \\
      c_1 = c_2 \ \textrm{and} \ \epsilon_1 \leq \epsilon_2
\end{cases}
\end{equation}
This lexicographic order over hypotheses ensures that, when using the joint hypothesis test from Corollary~\ref{corollary:hyp-test} to construct confidence intervals, we always reject hypotheses with a low value of $c$ incompatible with the baseline performance. Formally, this yields the following confidence intervals:
\begin{corollary}
\label{result:confidence-interval}
For all $\beta \in ]0, 1]$, $m$, and observed $t^a, t^b$, call $v_{\textrm{ub}}^a(c, \epsilon) \triangleq \sup \ \{v : \beta^a(m, c, \epsilon, v, t^a) \leq \frac{\beta}{2}\}$ and $v_{\textrm{ub}}^b(c) \triangleq \sup \ \{v : \beta^b(m, c, v, t^b) \leq \frac{\beta}{2}\}$. Then:
{\footnotesize
\begin{align*}
& \sP\Big[ (c, \epsilon) \geq \sup \big\{(c', \epsilon') : t^a \cdot s \leq v_{\textrm{ub}}^a(c', \epsilon') \ \textrm{and} \ t^b \cdot s \leq v_{\textrm{ub}}^b(c') \big\} \Big] \geq 1 - \beta
\end{align*}
}
\end{corollary}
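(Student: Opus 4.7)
\begin{proofsketch}
My plan is to apply Propositions~\ref{prop:gen-test} and~\ref{prop:dp-test} separately at the true parameters, combine the two resulting tail bounds via a union bound at level $\beta$, and finally re-express the resulting pair of simultaneous inequalities using the lex order of Equation~\ref{eq:ordering} to obtain the supremum statement.

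For the probabilistic step, I would fix the true $(c, \epsilon)$ and condition on the observed score vectors $T^a = t^a$, $T^b = t^b$. By the definition of $v_{\textrm{ub}}^b(c)$ as the threshold at which $\beta^b(m, c, \cdot, t^b)$ crosses $\beta/2$, Proposition~\ref{prop:gen-test} gives $\sP[\, t^b \cdot S > v_{\textrm{ub}}^b(c) \mid t^b\,] \leq \beta/2$, and analogously Proposition~\ref{prop:dp-test} gives $\sP[\, t^a \cdot S > v_{\textrm{ub}}^a(c, \epsilon) \mid t^a\,] \leq \beta/2$. A union bound then yields, with probability at least $1 - \beta$, that both $t^b \cdot s \leq v_{\textrm{ub}}^b(c)$ and $t^a \cdot s \leq v_{\textrm{ub}}^a(c, \epsilon)$ hold, placing the true pair $(c, \epsilon)$ inside the set appearing in the corollary.

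For the deterministic step, I would use two monotonicity facts: $v_{\textrm{ub}}^b(c')$ is non-decreasing in $c'$ (since the Bernoulli parameter $e^{c'}/(1+e^{c'})$ entering $\beta^b$ is), and $v_{\textrm{ub}}^a(c', \epsilon')$ depends on $(c', \epsilon')$ only through $c' + \epsilon'$ and is non-decreasing in that sum. Combined with the lex order of Equation~\ref{eq:ordering}, the two inequalities just derived translate to $c \geq \clb$ and $c + \epsilon \geq \cpepslb$, where $\clb, \cpepslb$ are precisely the quantities identifying the extremum of the set in the corollary. The desired lex comparison then follows from a case split: if $c > \clb$ the lex inequality is immediate, and if $c = \clb$ the $a$-inequality forces $\epsilon \geq \cpepslb - c = \cpepslb - \clb$.

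The main subtlety I expect is the ordering bookkeeping at the boundary $c = \clb$, where the $a$-test constraint on $c' + \epsilon'$ becomes active and must be carefully combined with the lex comparison. Once this case is handled, the rest of the argument is just the union bound and the definitions.
\end{proofsketch}
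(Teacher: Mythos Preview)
Your proposal is correct and follows essentially the same route as the paper. The paper's proof is a one-liner: it invokes Lemma~4.7 of \citet{steinke2023privacy} (a general hypothesis-test-to-confidence-interval conversion) instantiated with the ordering of Equation~\ref{eq:ordering} and the combined test of Corollary~\ref{corollary:hyp-test}. Your sketch unpacks exactly what that lemma does in this setting: apply the two tail bounds from Propositions~\ref{prop:gen-test} and~\ref{prop:dp-test} at the true parameters, union-bound to control both at total level $\beta$, and then use the monotonicity of $v_{\textrm{ub}}^{a,b}$ in their parameters together with the lex order to identify the extremal pair $(\clb,\ \cpepslb-\clb)$. The boundary case $c=\clb$ you flag is precisely where the lexicographic comparison requires the $c+\epsilon$ constraint, and your case split handles it correctly.
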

\begin{proof}
Apply Lemma 4.7 from \cite{steinke2023privacy} with the ordering from Equation~\ref{eq:ordering} and the test from Corollary \ref{corollary:hyp-test}.
\end{proof}
The lower confidence interval for $(c, \epsilon)$ at confidence $1-\beta$ is the largest $(c, \epsilon)$ pair that cannot be rejected using Corollary~\ref{corollary:hyp-test} with false rejection probability at most $\beta$.
Hence, for a given confidence level $1-\beta$, \acronym computes $(c_{\textrm{lb}}, \tilde\epsilon)$, the largest value for $(c, \epsilon)$ that it cannot reject. 
Note that Corollaries~\ref{corollary:hyp-test} and \ref{result:confidence-interval} rely on a union bound between two tests, one for $c$ and one for $c+\epsilon$.
We can thus consider each test separately (at level $\beta/2$).
In practice,
we use the whole precision/recall curve to achieve tighter bounds:
each level of recall (\emph{i.e.}, threshold on $t^{a,b}$ to predict membership) corresponds to a bound on the precision, which we can compare to the empirical value. 
Separately for each test, we pick the level of recall yielding the highest lower-bound (Lines~\ref{algo:max-t-start}-\ref{algo:max-t-end} in the last section of Algorithm \ref{algo:panoramia}).
For all bounds to hold together, we apply a union bound over all tests at each level of recall. Specially, if we have $m$ test datapoints, we have at most $m$ different values of recall, and the union bound corresponds to $\beta \leftarrow \beta/m$ (notice in \cref{eq:bound-example} that the bounds depend on $\sqrt{\log(1/\beta)}$, so this is not too costly).
We next discuss the semantics of returned measurement values $(c_{\textrm{lb}}, \tilde\epsilon)$.

{\bf Measurement semantics.} \cref{result:confidence-interval} gives us a lower-bound for $(c, \epsilon)$, based on the ordering from \cref{eq:ordering}. 
To understand the value $\tilde\epsilon$ returned by \acronym, we need to understand what the hypothesis test rejects. 
Rejecting $\gH$ means either rejecting the claim about $c$, or the claim about $c + \epsilon$ (which is the reason for the ordering in Equation~\ref{eq:ordering}). 
With Corollary~\ref{result:confidence-interval}, we hence get both a lower-bound $\clb$ on $c$, and $\cpepslb$ on $c + \epsilon$. 
Unfortunately, $\tilde\epsilon \triangleq \max\{0, \cpepslb - \clb \}$, which is the value \acronym returns, does not imply a lower-bound on $\epsilon$. 
Instead, we can claim that ``\acronym could not reject a claim of $c$-closeness for $\gG$, and if this claim is tight, then $f$ cannot be the output of a mechanism that is $\tilde\epsilon$-DP''.

While this is not as strong a claim as typical lower-bounds on $\epsilon$-DP from prior privacy auditing works, we believe that this measure is useful in practice. 
Indeed, the $\tilde\epsilon$ measured by \acronym~is a quantitative privacy measurement, that is accurate (close to a lower-bound on $\epsilon$-DP) when the baseline performs well (and hence $c_{\textrm{lb}}$ is tight). 
Thus, when the baseline is good, $\tilde\epsilon$ can be interpreted as (close to) a lower bound on (pure) DP. 
\acronym~opens a new capability, measuring privacy leakage of a trained model $f$ without access or control of the training pipeline, with a meaningful and practically useful measurement, as we empirically show next.

\section{Evaluation}
\label{sec:eval}
    \begin{table*} [t]
    \centering
    \small
     \resizebox{1\textwidth}{!}{
    \begin{tabular}{|c|c|c|c|c|}
        \hline
        ML Model & Dataset & Training Epoch  & Test Accuracy &  Model Variants Names* \\
        \hline
        ResNet101 & CIFAR10 & 20, 50, 100  & 91.61\%, 90.18\%, 87.93\%& ResNet101\_E20, ResNet101\_E50, ResNet101\_E100,\\
        WideResNet-28-10 & CIFAR10 & 150, 300  & 95.67\%, 94.23\% & WRN-28-10\_E150, WRN-28-10\_E300\\
        ViT-small (pretrained) & CIFAR10 & 35  & 96.38\% & ViT\_E35\\
        Multi-Label CNN & CelebA & 50, 100  & 81.77\%, 78.12\% & CNN\_E50, CNN\_E100 \\
        GPT-2 & WikiText & 
        37, 75, 150
        & Refer to Appendix \ref{subsec:llmfullexpdetails} & 
        GPT-2\_E37, GPT-2\_E75, GPT-2\_E150 
        \\
        \hline
    \end{tabular}}
    \caption{Train and Test Metrics for ML Models Audited. *"Model Variants" trained for different numbers of epochs $E$. For ViT-small, we use a model pre-trained on imagenet and tune it on CIFAR10. \\
}
    \label{table_metrics}
    \vspace{-0.52cm}
\end{table*}

We instantiate \acronym on target models for four tasks from three data modalities\footnote[1]{Code available here: \url{https://github.com/ubc-systopia/panoramia-privacy-measurement}.}. 
For {\bf image classification}, we consider the CIFAR10~\cite{Krizhevsky2009LearningML}, and CelebA~\cite{liu2015faceattributes} datasets, with varied target models: a four-layers CNN~\cite{oshea2015introduction}, a ResNet101~\cite{he2015deep}, a Wide ResNet-28-10 ~\cite{zagoruyko2017wideresidualnetworks}, a Vision Transformer (ViT)-small ~\cite{dosovitskiy2021imageworth16x16words}, and a DP ResNet18~\cite{he2015deep} trained with DP-SGD~\citep{abadi2016deep} using Opacus~\citep{opacus} at different values of $\epsilon$.
We use StyleGAN2~\cite{karras2020training} for $\gG$.
For {\bf language models}, we fine-tune small GPT-2~\cite{radford2019language} on the WikiText train dataset~\cite{merity2016pointer} (we take a subset of the documents in WikiText-103). 
$\gG$ is again based on small GPT-2, and then fine-tuned on $\DG$. 
We generate samples using top-$k$ sampling~\cite{holtzman2019curious} and a held-out prompt dataset $\DG^{
\textnormal{prompt}} \subset \DG$.
Finally, we conduct experiments on {\bf classification on tabular data}. However for this data modality, \acronym did not detect any meaningful leakage. 
More precisely, we have observed significant variance in the audit values returned by \acronym which make the results inconclusive. 
Nonetheless, we present the results obtained for this task and discuss this issue further in \cref{appendix:tabular-results}. 
\cref{table_metrics} summarizes the tasks, models, and performance. 
More details on the settings, the models used, and implementation details are provided in \cref{appendix:exp_details}. 
 


\subsection{Baseline Design and Evaluation}
\label{subsec:baseline}
Our MIA is a loss-based attack, which uses an ML model taking as input both a datapoint $x$ and the value of the loss of target model $f$ on point $x$. 
\cref{appendix:exp_details} details the architectures used for the attack model for each data modality.
\begin{wrapfigure}{r}{0.55\textwidth}
    \begin{minipage}[h!]{0.28\textwidth}  
        \centering
        \begin{adjustbox}{valign=t}  
            \resizebox{1\textwidth}{!}{
                \begin{tabular}{|l|l|}
                    \hline
                    Baseline model                               & \(c_{lb}\) \\
                    \hline
                    CIFAR-10 Baseline\textsubscript{\(D_{h}^{\text{tr}}=\text{gen}\)}       & $\mathbf{2.44 \pm 0.19 }$  \\
                    CIFAR-10 Baseline\textsubscript{\(D_{h}^{\text{tr}}=\text{real}\)}      & $2.21 \pm$ 0.17 \\
                    CIFAR-10 Baseline\textsubscript{no helper}                                    & $1.25 \pm 0.24$ \\
                    \hline
                    CelebA Baseline\textsubscript{\(D_{h}^{\text{tr}}=\text{gen}\)}      & $\mathbf{2.05 \pm 0.21}$ \\
                    CelebA Baseline\textsubscript{\(D_{h}^{\text{tr}}=\text{real}\)}      & $1.97\pm 0.22$ \\
                    CelebA Baseline\textsubscript{no helper}                                    & $0.947\pm 0.25$ \\
                    \hline
                    WikiText Baseline\textsubscript{\(D_{h}^{\text{tr}}=\text{gen}\)}      & \bm{$3.31 \pm 0.15$} \\
                    WikiText Baseline\textsubscript{\(D_{h}^{\text{tr}}=\text{real}\)}      & $3.26 \pm 0.14$ \\
                    WikiText Baseline\textsubscript{no helper}                                    & $3.11 \pm 0.15$ \\
                    \hline
                \end{tabular}
            }
        \end{adjustbox}
        \captionof{table}{Baseline evaluation with different helper model scenarios.}
        \label{table:baseline_eval-new}
    \end{minipage}%
    \hfill
    \begin{minipage}[h!]{0.25\textwidth}  
        \centering
        \begin{adjustbox}{valign=t}  
            \includegraphics[width=\textwidth]{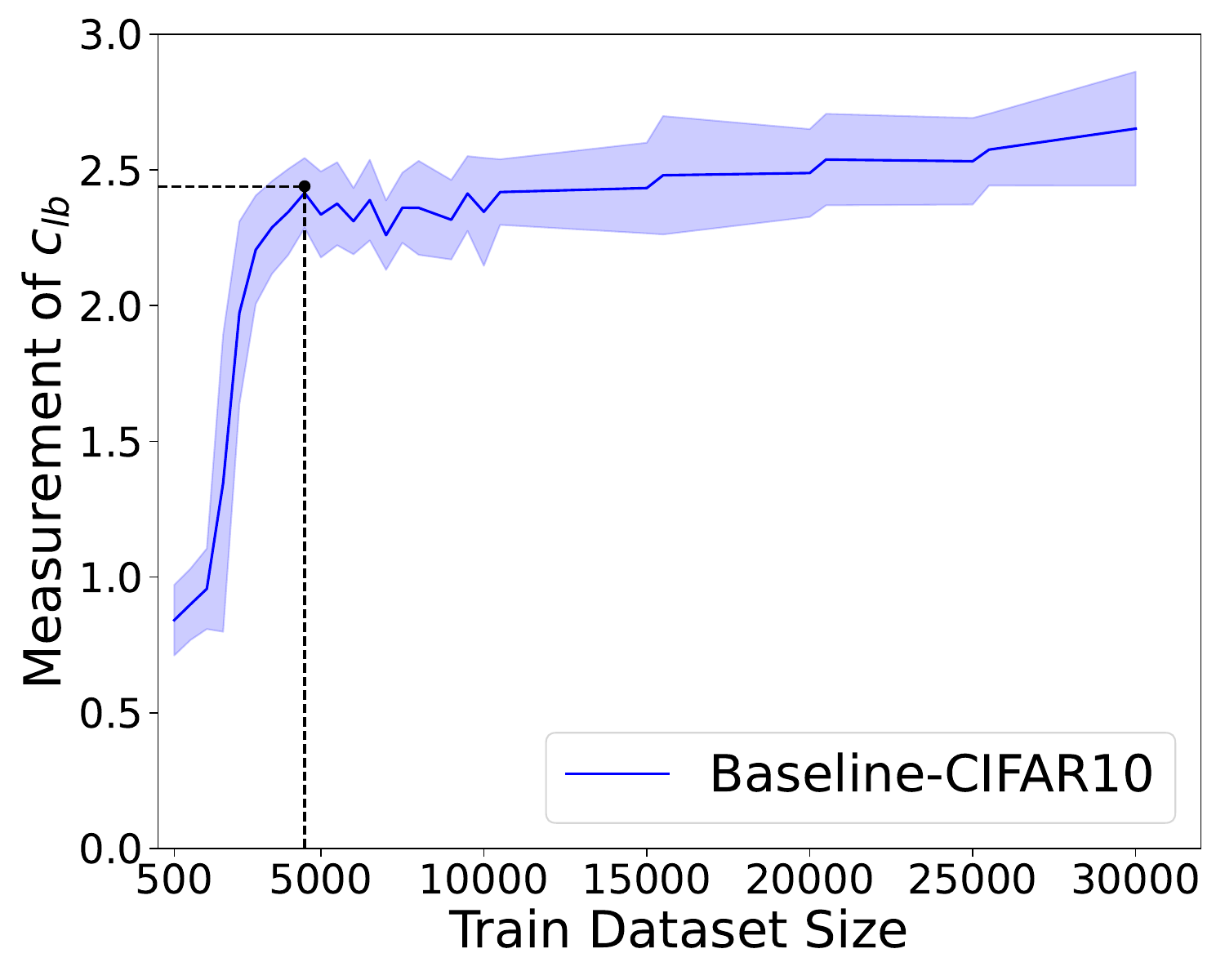}
        \end{adjustbox}
        \captionof{figure}{CIFAR-10 baseline on increasing training size. }
        \label{fig:baseline_eval_figure}
    \end{minipage}
    \label{baseline-evals}
\end{wrapfigure}
Recall from \S\ref{subsec:dp-measurement} the importance of having a tight $\clb$ for our measure $\tilde\epsilon$ to be close to a lower-bound on $\epsilon$-DP.
To increase the performance of our baseline $b$, we mimic the role of the target model $f$'s loss in the MIA using a helper model $\helper$ trained on synthetic data, which adds a loss-based feature to $b$. 

This new feature can be viewed as side information about the data distribution. 
\cref{table:baseline_eval-new} shows the $\clb$ value for different designs for $\helper$. 
The best performance is consistently when $\helper$ is trained on synthetic data before being used as a feature to train the $b$. 
Indeed, such a design reaches a $\clb$ up to $1.19$ larger that without any helper (CIFAR10) and $0.23$ higher than when training on real non-member data, {\em without requiring access to real non-member data}, a key requirement in \acronym. 
We adopt this design in all the following experiments. 
In \cref{fig:baseline_eval_figure} we show that the baseline has enough training data (vertical dashed line) to reach its best performance. \cref{subsec:baseline_moreeval}, shows similar results on GPT-2 as well as different architectures that we tried for the helper model. 
All these pieces of evidence confirm the strength of our baseline.
\vspace{-1.5cm}
\subsection{Main Privacy Measurement Results}
\label{subsec:main-auditing-results}

\vspace{-0.22cm}
We run \acronym on models with different degrees of over-fitting by varying the number of epochs, (see the final accuracy on \cref{table_metrics}) for each data modality. 
More over-fitted models are known to leak more information about their training data due to memorization~\citep{yeom2018privacy, carlini2019secret,LIRA}. 
To show the privacy loss measurement power of \acronym, we compare it with two strong approaches to lower-bounding privacy loss.
\vspace{-0.5cm}
\begin{figure}[htb]
    \centering
    \subfigure[ResNet-101, CIFAR10]{
    \includegraphics[width=0.31\textwidth]{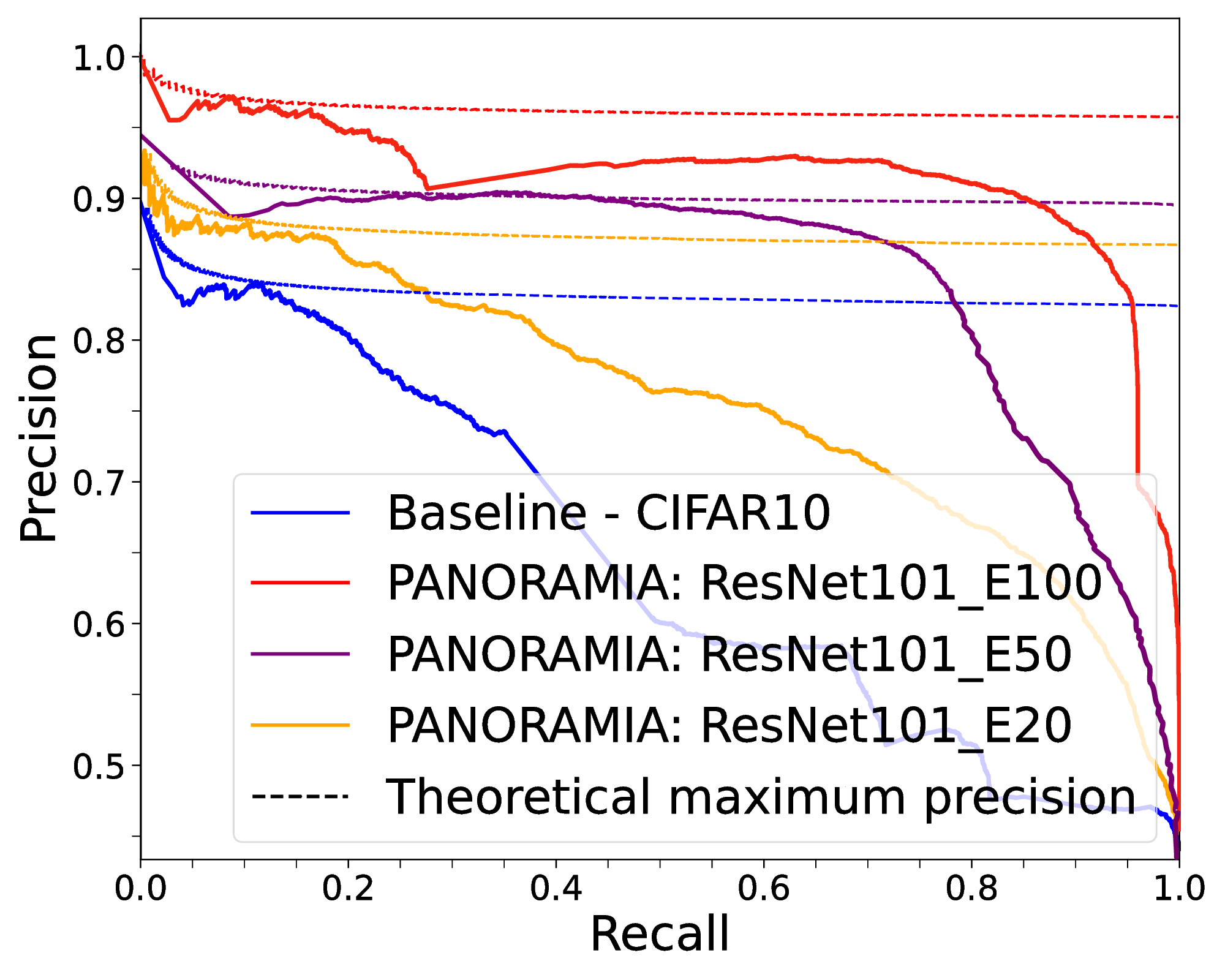}
    \label{fig:image2}
    }
    \subfigure[CNN, CelebA]{
    \includegraphics[width=0.31\textwidth]{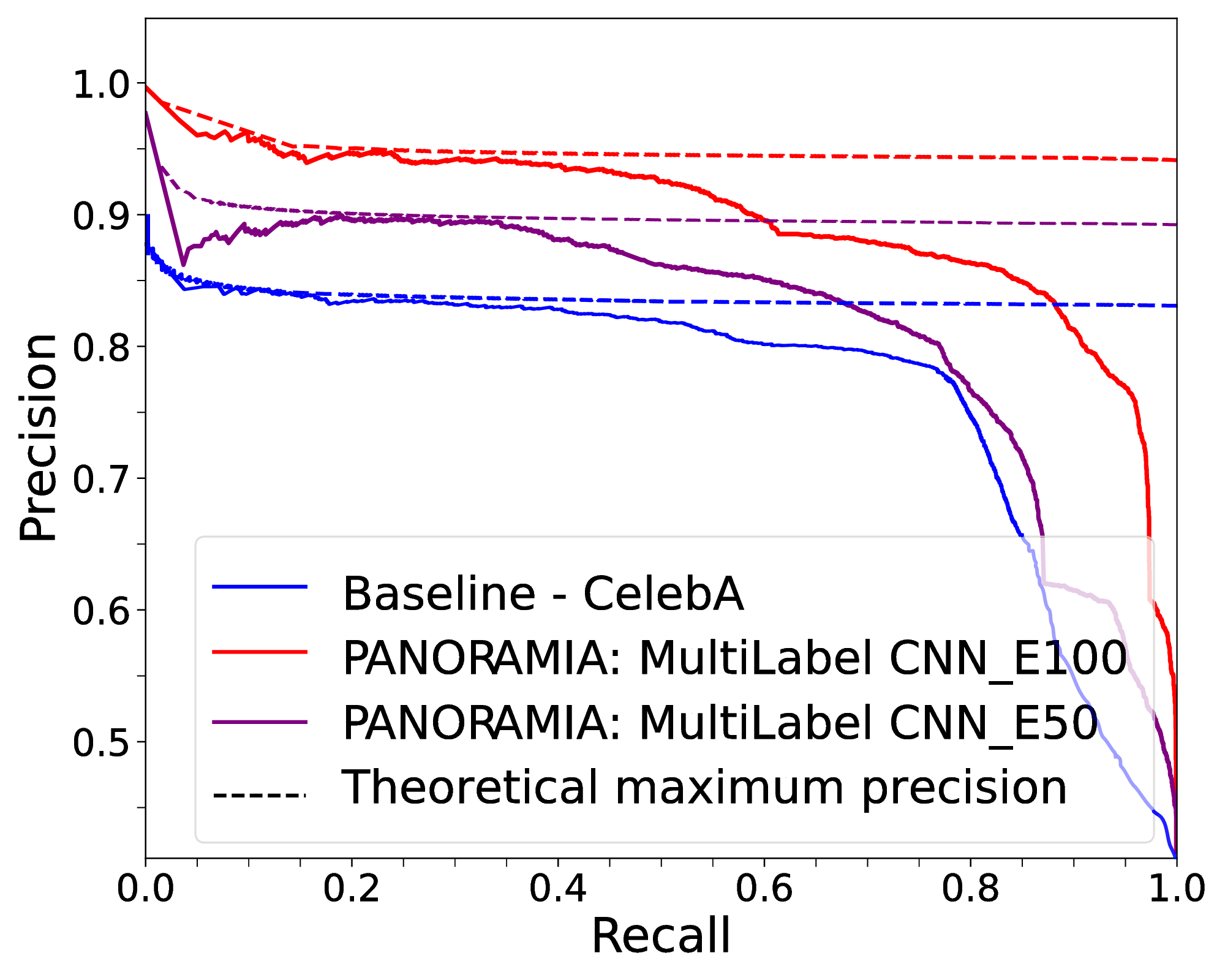}
    \label{fig:image3}
    }
    \subfigure[GPT-2, WikiText]{
    \includegraphics[width=0.31\textwidth]{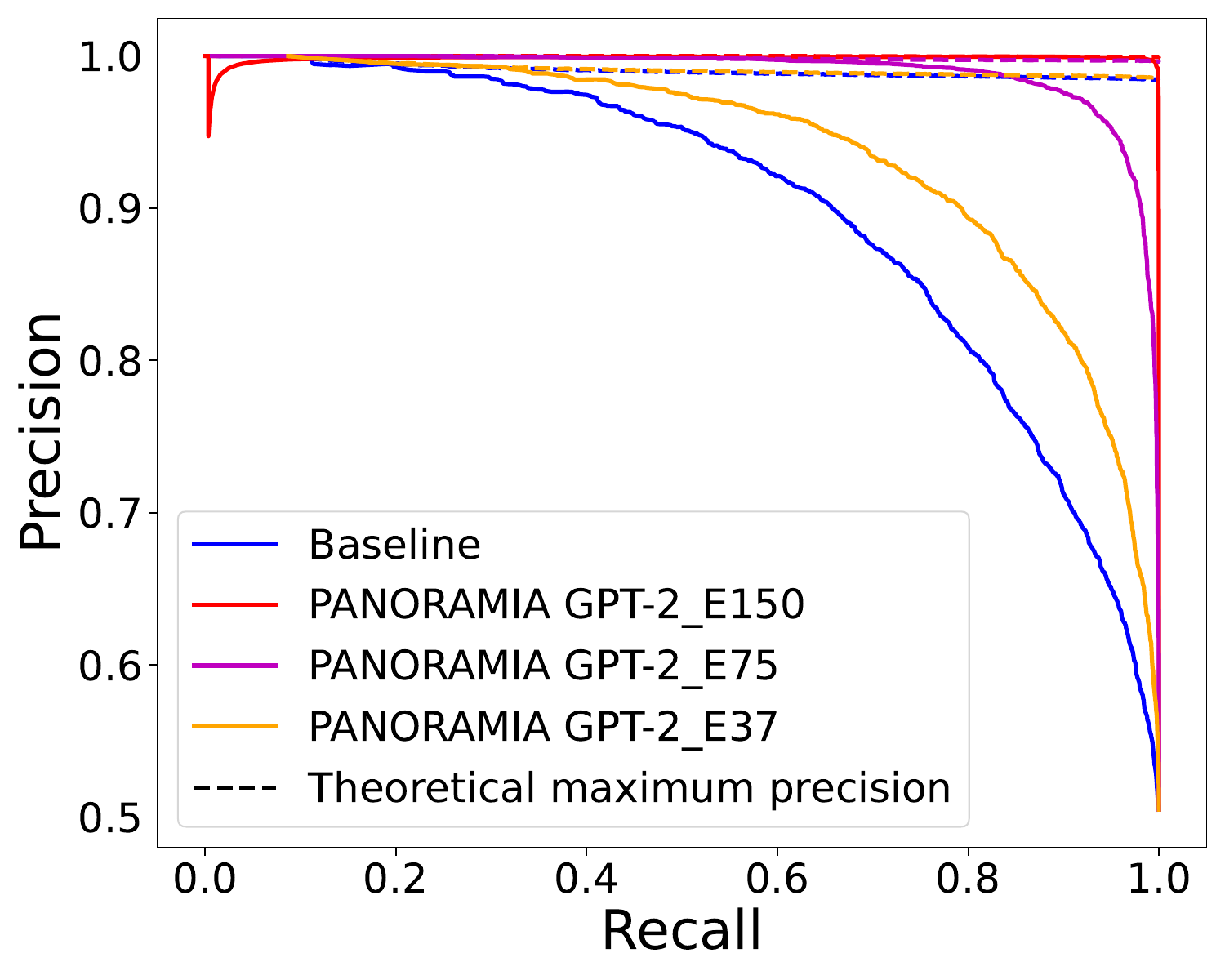}
    \label{fig:gpt2_precision_recall}
    }
    \vspace{-0.1cm}
  \caption{Precision vs. recall comparison between \acronym and the baseline $b$ for our target models, based on one experiment run. 
  The maximum $\clb$ or $\cpepslb$ values set an upper bound on the empirical precision values across different recall levels, indicated by the dashed line.}
  \label{fig:eval:comparison_precision}
 \vspace{-1.73cm}
\end{figure}
\vspace{-1.2cm}
\begin{figure}[htb]
    \centering
    \subfigure[ResNet101, CIFAR10]{
    \includegraphics[width=0.31\textwidth]{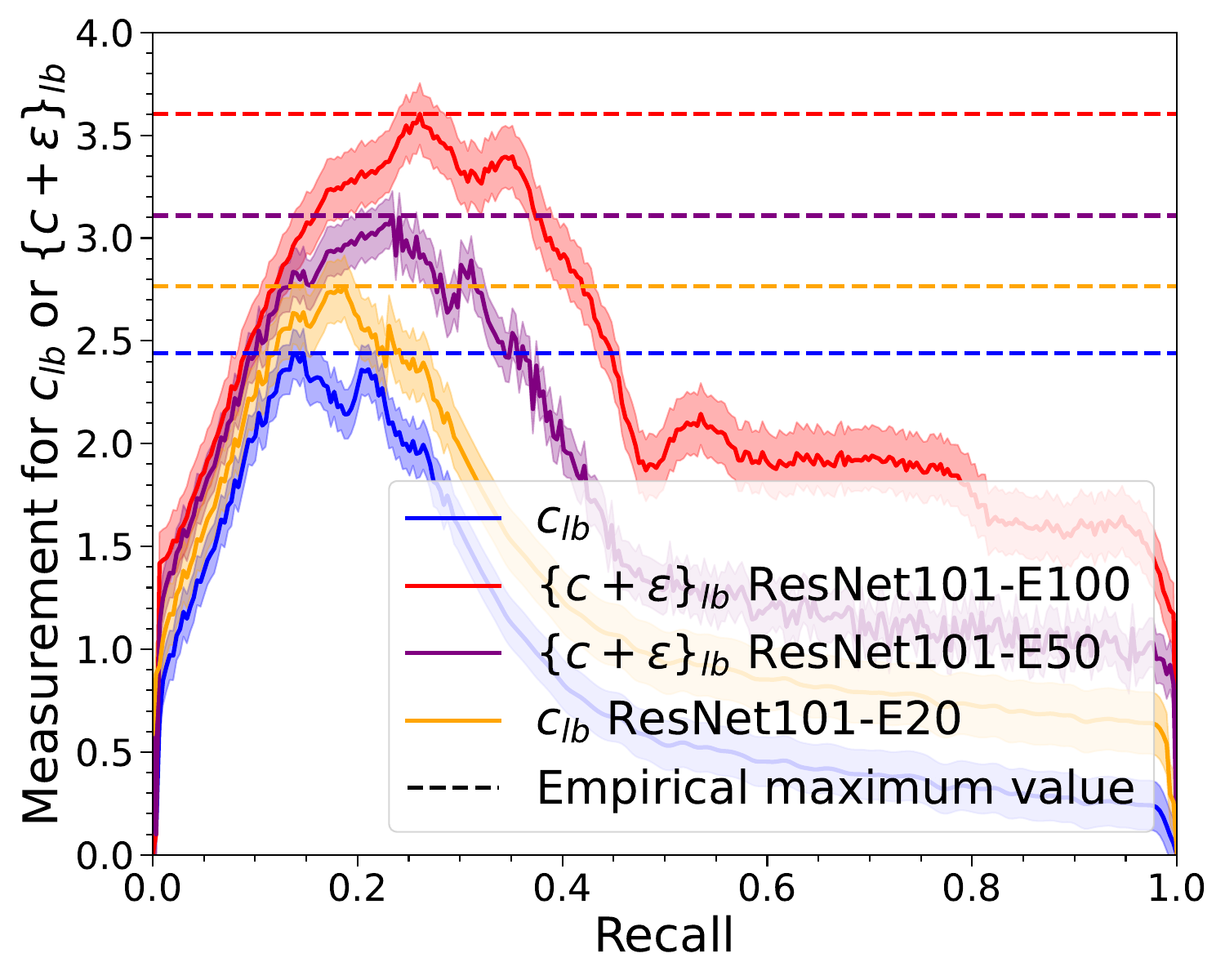}
    \label{fig:image3} 
    }
    \subfigure[CNN, CelebA]{
    \includegraphics[width=0.31\textwidth]{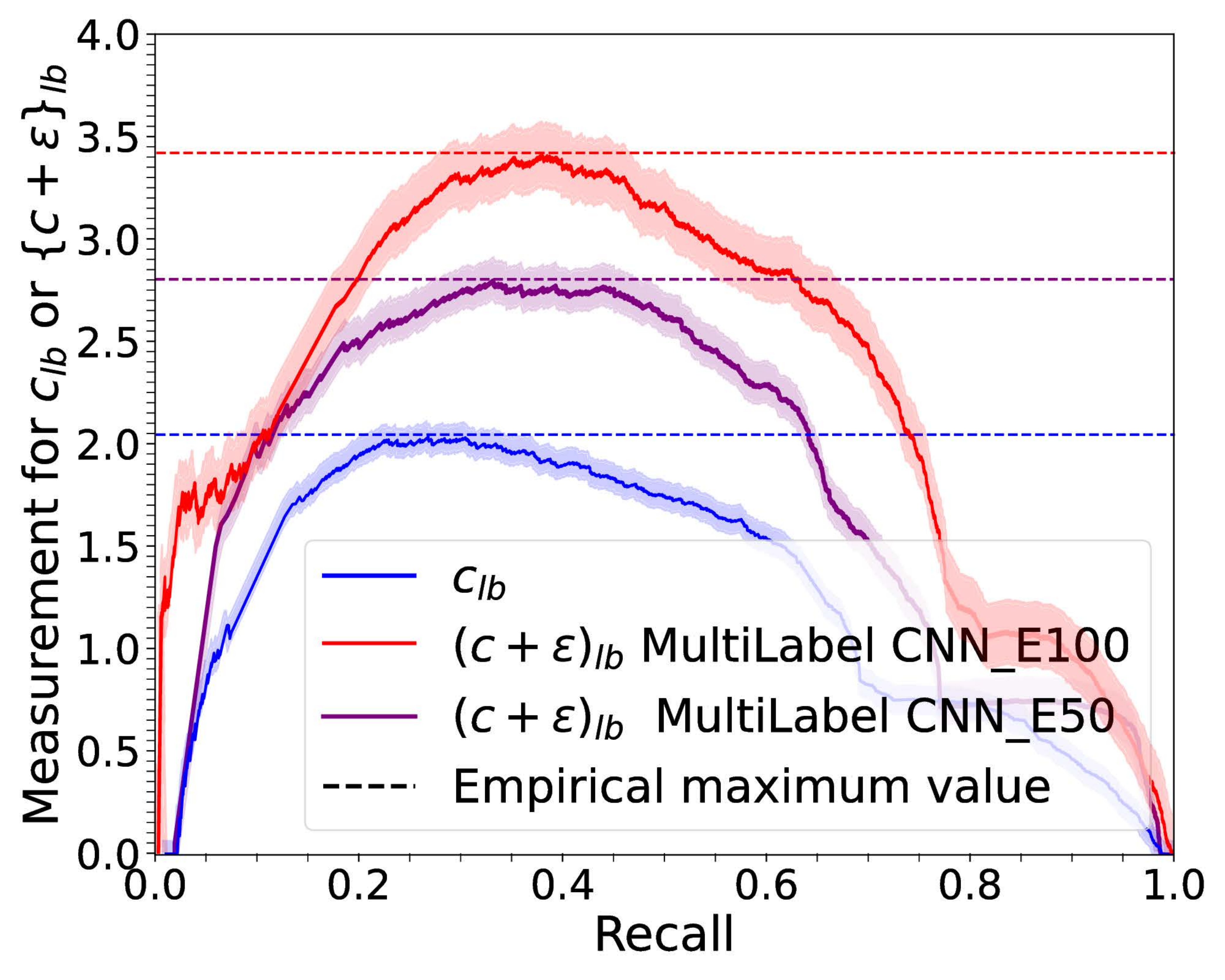}
    \label{fig:image3}
    }
   \subfigure[GPT-2, WikiText]{
    \includegraphics[width=0.31\textwidth]{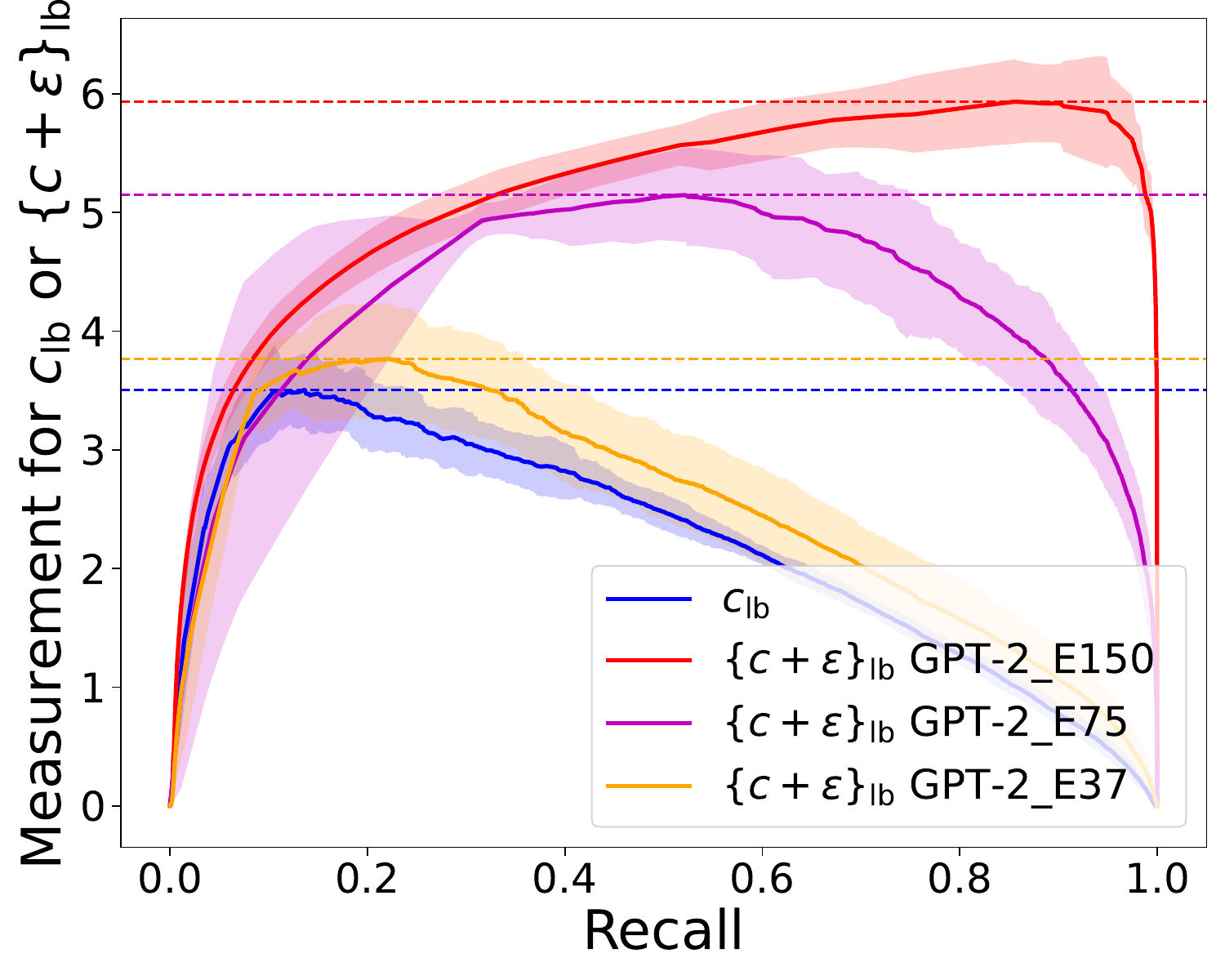}
    \label{fig:gpt2_privacy_recall}
    }
    \vspace{-0.1cm}
    \caption{$\{c+\epsilon\}_{\textnormal{lb}}$ (or $\clb$) vs recall, for our target models, reported over $5$ independent experiment runs.}
    \label{fig:eval:comparison_privacy}
\end{figure}
\vspace{-0.98cm}First, we compare \acronym with the O(1) audit \citep{steinke2023privacy} (in input space, without canaries in a black box setting), which uses a loss threshold MIA 
to detect both members and non-members.
Second, we use a hybrid between O(1) and \acronym, in which we use real non-member data instead of generated data with our ML-based MIA 
(called RM;RN for Real Members; Real Non-members).
In both cases $\clb = 0$, but it requires control over training data inclusion and a large set of non-member data.
The privacy loss measured by these techniques gives a target that we hope \acronym to detect.


\cref{fig:eval:comparison_precision} shows the precision of $b$ and \acronym at different levels of recall, and \cref{fig:eval:comparison_privacy} the corresponding value of $\{c+\epsilon\}_{\textnormal{lb}}$ (or $\clb$ for $b$). 
Dashed lines show the maximum value of $\{c+\epsilon\}_{\textnormal{lb}}$ (or $\clb$) achieved (Fig. \ref{fig:eval:comparison_privacy}) (returned by \acronym), and the precision implying these values at different recalls (Fig. \ref{fig:eval:comparison_precision}). \cref{table:audit_values} summarizes those $\{c+\epsilon\}_{\textnormal{lb}}$ (or $\clb$) values, as well as the $\epsilon$ measured by existing approaches. 
We make two key observations.

 \vspace{10cm}
\begin{wraptable}{r}{0.57\textwidth}
\centering
\resizebox{0.55\textwidth}{!}{
\begin{tabular}{|c|c|c|c|c|c|}
\hline
\textbf{Target model} $f$    & Audit            & $\mathbf{c_\textrm{lb}}$ & $\mathbf{\{\epsilon + c\}_\textrm{lb}}$ & $\mathbf{\tilde\epsilon}$ & $\mathbf{\epsilon}$\\
\hline
               & \acronym & $2.44\pm 0.190$ & $2.76\pm 0.250$ & $0.33 \pm 0.21$ & - \\
ResNet101\_E20 & \acronym RM;RN & $0.00$ & $0.421 \pm 0.19$ & - &$0.421\pm 0.19$\\
             & O (1) RM;RN    & -   & -       & - & $0.450\pm 0.150$\\
\hline
               &  \acronym & $2.44\pm 0.190$ & $3.11\pm 0.18$ & $0.67 \pm 0.17$& -\\
ResNet101\_E50 & \acronym RM;RN & $0.00$ & $0.71\pm 0.21$  & - &$0.71\pm 0.21$\\
               & O (1) RM;RN    & -   & -     & - &$0.79\pm 0.23$\\
\hline

               & \acronym & $2.44\pm 0.190$ & $3.60\pm 0.21$ & $1.16 \pm 0.19$ & - \\
 ResNet101\_E100 & \acronym RM;RN& $0.00$ & $1.22\pm 0.190$  & - &$1.22\pm 0.19$\\
 & O (1) RM;RN    & -   & -     & - &$1.41\pm 0.180$\\
\hline 
& \acronym & $2.44\pm 0.190$ & $5.01\pm 0.25$ & $2.57 \pm 0.23$ &-\\
 WRN-28-10\_E150& \acronym RM;RN& $0.00$ & $2.87\pm 0.20$  & - &$2.87\pm 0.20$\\
 & O (1) RM;RN    & -   & -     & -& $2.96\pm 0.19$\\
\hline
 & \acronym & $2.44\pm 0.190$ & $5.98\pm 0.23$ & $3.54 \pm 0.23$ &-\\
WRN-28-10\_E300 & \acronym RM;RN& $0.00$ & $3.65\pm 0.22$  & - &$3.65\pm 0.22$\\
 & O (1) RM;RN    & -   & -     & - &$3.73\pm 0.19$\\
\hline
& \acronym & $2.44\pm 0.190$ & $5.03\pm 0.24$ & $2.59 \pm 0.22$ &-\\
ViT\_E35 & \acronym RM;RN& $0.00$ & $2.77\pm 0.21$  & - &$2.77\pm 0.21$\\
 & O (1) RM;RN    & -   & -     & - &$2.86\pm 0.19$\\
\hline
& \acronym & $2.05\pm 0.21$ & $2.81\pm 0.24$ & $ 0.76 \pm 0.23$ &-\\
 CNN\_E50& \acronym RM;RN& $0.00$ & $0.91\pm 0.22$  & - &$0.91\pm 0.22$\\
 & O (1) RM;RN    & -   & -     & -& $1.01\pm 0.18$\\
\hline
 & \acronym & $2.05\pm 0.21$ & $3.38\pm 0.24$ & $1.33 \pm 0.22$ &-\\
CNN\_E100 & \acronym RM;RN& $0.00$ & $1.38\pm 0.21$  & - &$1.38\pm 0.21$\\
 & O (1) RM;RN    & -   & -     & - &$1.46\pm 0.16$\\
\hline
 & \acronym & $3.62 \pm 0.32$ & $3.85 \pm 0.46$ & $0.22 \pm 0.37$&- \\
 GPT2\_E37 & \acronym RM;RN & $0.00$ & $1.04 \pm 0.42$  & - &$1.04 \pm 0.42$\\
 & O (1) RM;RN    & -   & -     & - &$2.82 \pm 0.31$\\
\hline
 & \acronym & $3.62 \pm 0.32$ & $5.20 \pm 0.34$ & $1.57 \pm 0.45$ &-\\
GPT2\_E75 & \acronym RM;RN& $0.00$ & $2.52 \pm 0.36$  & - &$2.52 \pm 0.36$\\
 & O (1) RM;RN    & -   & -     & - &$4.71 \pm 0.34$\\
\hline
 & \acronym & $3.62 \pm 0.32$ & $6.02 \pm 0.34$ & $2.40 \pm 0.53$ &-\\
 GPT2\_E150 & \acronym RM;RN& $0.00$ & $3.60 \pm 0.41$  & - &$3.60 \pm 0.41$\\
 & O (1) RM;RN    & -   & -     & - &$5.73 \pm 0.08$\\
\hline
ResNet18 $\epsilon=\infty$ & \acronym  & $2.44\pm 0.190$ & $3.87\pm 0.20$ & $1.43\pm 0.21$ & - \\
                             & O (1) RM;RN & - & - & - & $1.471\pm 0.13$ \\
        \hline
        ResNet18 $\epsilon=15$ & \acronym  & $2.44\pm 0.190$ & $3.57\pm 0.19$ & $1.13\pm 0.19$ & - \\
                           & O (1) RM;RN & - & - & - & $1.20\pm 0.18$ \\
        \hline
        ResNet18 $\epsilon=10$ & \acronym & $2.44\pm 0.190$ & $2.70\pm 0.25$ & $0.26 \pm 0.22$ & - \\
                           & O (1) RM;RN & - & - & - & $0.28\pm0.14$ \\
        \hline
        ResNet18 $\epsilon=2$ & \acronym  & $2.44\pm 0.190$& $1.709\pm 0.23$ & $0.00$ & - \\
                          & O (1) RM;RN & - & - & - & $0.05\pm0.12$ \\
        \hline
        ResNet18 $\epsilon=1$ & \acronym  & $2.44\pm 0.190$ & $1.412\pm 0.12$ & $0.00$ & - \\
                          & O (1) RM;RN & - & - & - & $0.00$ \\
        \hline
\end{tabular}}
\caption{Privacy measurements on different target models ($m=10k$). O(1) RM;RN with our ML-based attack for $\epsilon$-DP models shown in \cref{subfig:panoramia_vs_O1_m_dp_increase}.
}
\vspace{-0.2cm}
\label{table:audit_values}
\end{wraptable}
First, the best prior method (whether RM;RN or O(1)) measures a larger privacy loss ($\tilde\epsilon \leq \epsilon$). 
Overall, these results empirically confirm the strength of $b$, as we do not seem to spuriously assign differences between $\gG$ and $\gD$ to our privacy loss proxy $\tilde\epsilon$. 
We also note that O(1) tends to perform better, due to its ability to rely on non-member detection, which improves the power of the statistical test at equal data sizes. 
Such tests are not available in \acronym given our one-sided closeness definition for $\gG$ (see \S\ref{sec:theory}), and we keep that same one-sided design for RM;RN for the sake of comparison.

Second, the values of $\tilde\epsilon$ measured by \acronym are close to those of the methods against which we compared.
In particular, despite a more restrictive adversary model (\emph{i.e.}, no non-member data, no control over the training process, and no shadow model training), \acronym is able to detect meaningful amounts of privacy loss, comparable to that of state-of-the-art methods.

For instance, on a non-overfitted CIFAR-10 ResNet101 model (E20), \acronym detects a privacy loss of $0.33$, while using real non-member (RM;RN) data yields $0.42$, and controlling the training process $O(1)$ gets $0.45$.
The relative gap gets even closer on models that reveal more about their training data. 
Indeed, for the most over-fitted model (E100), $\tilde\epsilon=1.16$ is very close to RM;RN ($\epsilon=1.22$) and O(1) ($\epsilon=1.41$).
This also confirms that the leakage detected by \acronym on increasingly overfitted models does augment, which is confirmed by prior state-of-the-art methods. 
For instance, NLP models' $\tilde\epsilon$ goes from $0.22$ to $2.40$ ($2.82$ to $5.73$ for O(1)). 
\cref{appendix:tabular-results} details results on tabular data, in which \acronym is not able to detect significant privacy loss.

\textbf{Privacy Measurement of $\epsilon$-DP Models:} 

\begin{wrapfigure}{r}{0.35\textwidth}
\begin{center}
    \includegraphics[width=\linewidth]{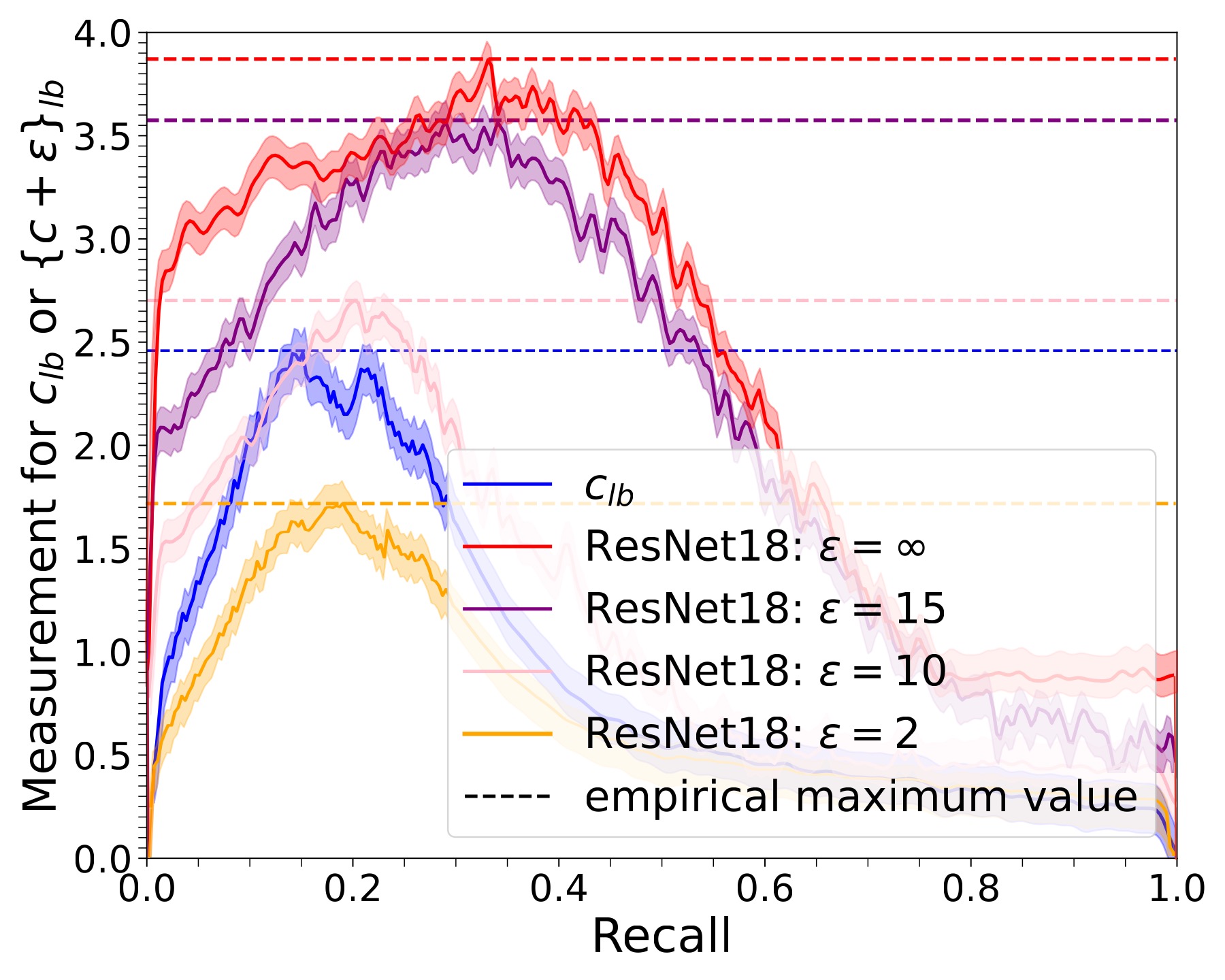}
    \caption{ResNet18, CIFAR-10, $\epsilon$-DP for various $\epsilon$ values. 
    }
    \label{fig:dp-res18-plot}
\end{center}
    \vspace{-0.758cm}

\end{wrapfigure}
Another avenue to varying privacy leakage is by training Differentially-Private (DP) models with different values of $\epsilon$. 
We evaluate \acronym on DP ResNet-18 models on CIFAR10, with $\epsilon$ values shown on \cref{table:audit_values} and \cref{fig:dp-res18-plot}. 
The hyper-parameters were tuned independently for the highest train accuracy (see Appendix \ref{appendix:dp_sup_results} for more results and discussion).
As the results show, neither \acronym nor O(1) detect privacy loss on the most private models ($\epsilon=1, 2$). 
At higher values of $\epsilon=10,15$ (\emph{i.e.}, less private models) and $\epsilon=\infty$ (\emph{i.e.}, non-private model) \acronym does detect an increasing level of privacy leakage with $\tilde{\epsilon}_{\epsilon=10} < \tilde{\epsilon}_{\epsilon=15} < \tilde{\epsilon}_{\epsilon=\infty}$. 
In this regime, the O(1) approach detects a larger, though comparable, amount of privacy loss. 

We also evaluate \acronym on DP (fine-tuned) large language models (see Appendix \ref{appendix:dp_sup_results} and \cref{table:audit-nlp-dp-target-models} for details). We fine-tune GPT2-Small target models with DP-SGD on the WikiText dataset, with various values of $\epsilon$. Neither \acronym nor O(1) are able to measure a positive privacy loss. 


\subsection{Leveraging More Data to Improve Privacy Measurements}

\acronym can leverage much more data for its measurement (up to the whole training set size for training and testing the MIA), while in our setting O(1) is limited by the size of the test set (for non-members). 
As we have seen in \cref{subfig:main-test-inc-resnet100}, \acronym can leverage larger test set size to measure higher privacy loss values than O(1), up to $2.62$ on the ResNet101 trained for 100 epochs, versus $2.23$ for O(1). 
It is also important to note that at small amounts of data O(1) measures a larger privacy loss ($\tilde\epsilon \leq \epsilon$). 
These findings provide additional empirical evidence for the robustness of $b$, indicating that we are not erroneously attributing differences between $\gG$ and $\gD$ to our privacy loss proxy $\tilde\epsilon$.
However, in the case of text data (on the GPT-2 trained for 150 epochs), \cref{subfig:main-test-inc-gpt2},  we hit an upper-bound on the power of the hypothesis test which estimates $\cpepslb$  using $20k$ test samples. The maximum measurement for $\tilde\epsilon$ we can achieve is around $2.59$ with a $20k$ test set, as opposed to O(1), which reaches $5.73$ with $10k$ test set size.

\begin{figure}[htb]
    \centering
    \subfigure[WideResNet-28-10, CIFAR10]
    {
        \includegraphics[width=0.22\textwidth]{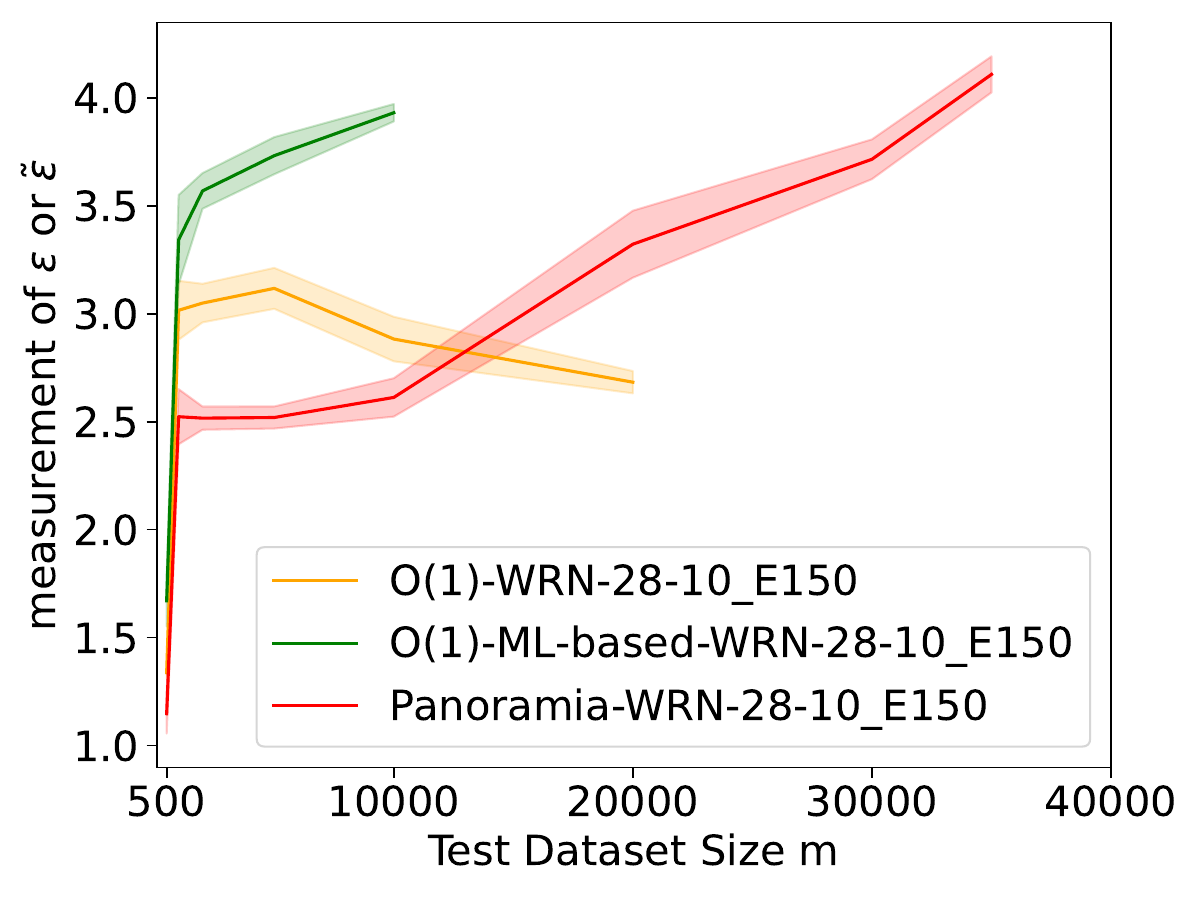}
        \label{subfig:main-test-inc-wrn150}
    } \hfill
    \subfigure[ResNet-101, CIFAR10]
    {
        \includegraphics[width=0.22\textwidth]{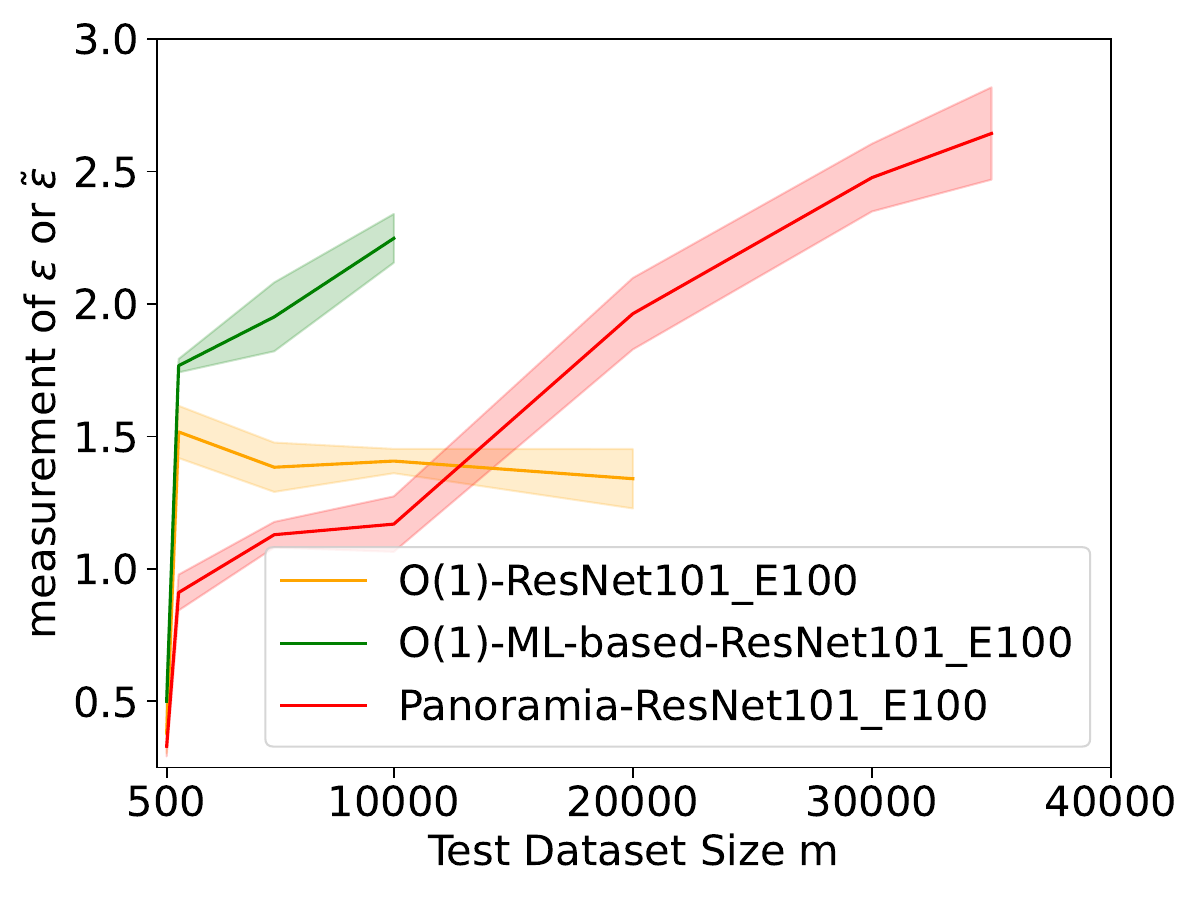}
        \label{subfig:main-test-inc-resnet100}
    } \hfill
    \subfigure[$\epsilon$-DP ResNet18, CIFAR10]
    {
        \includegraphics[width=0.22\textwidth]{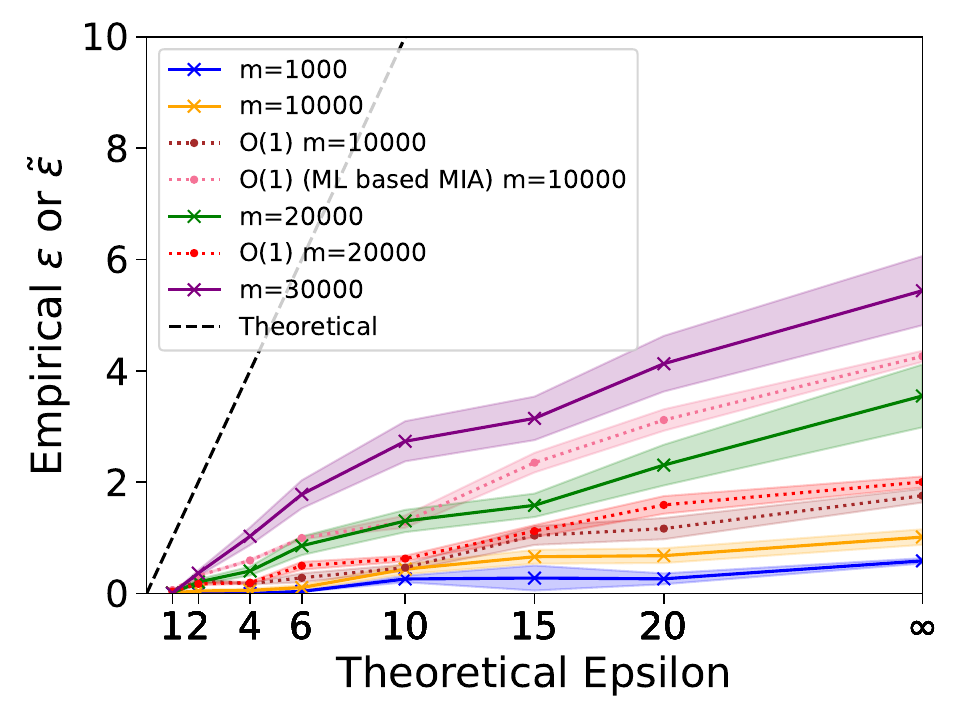}
        \label{subfig:panoramia_vs_O1_m_dp_increase}
    } \hfill
    \subfigure[GPT-2, WikiText]
    {
        \includegraphics[width=0.22\textwidth]{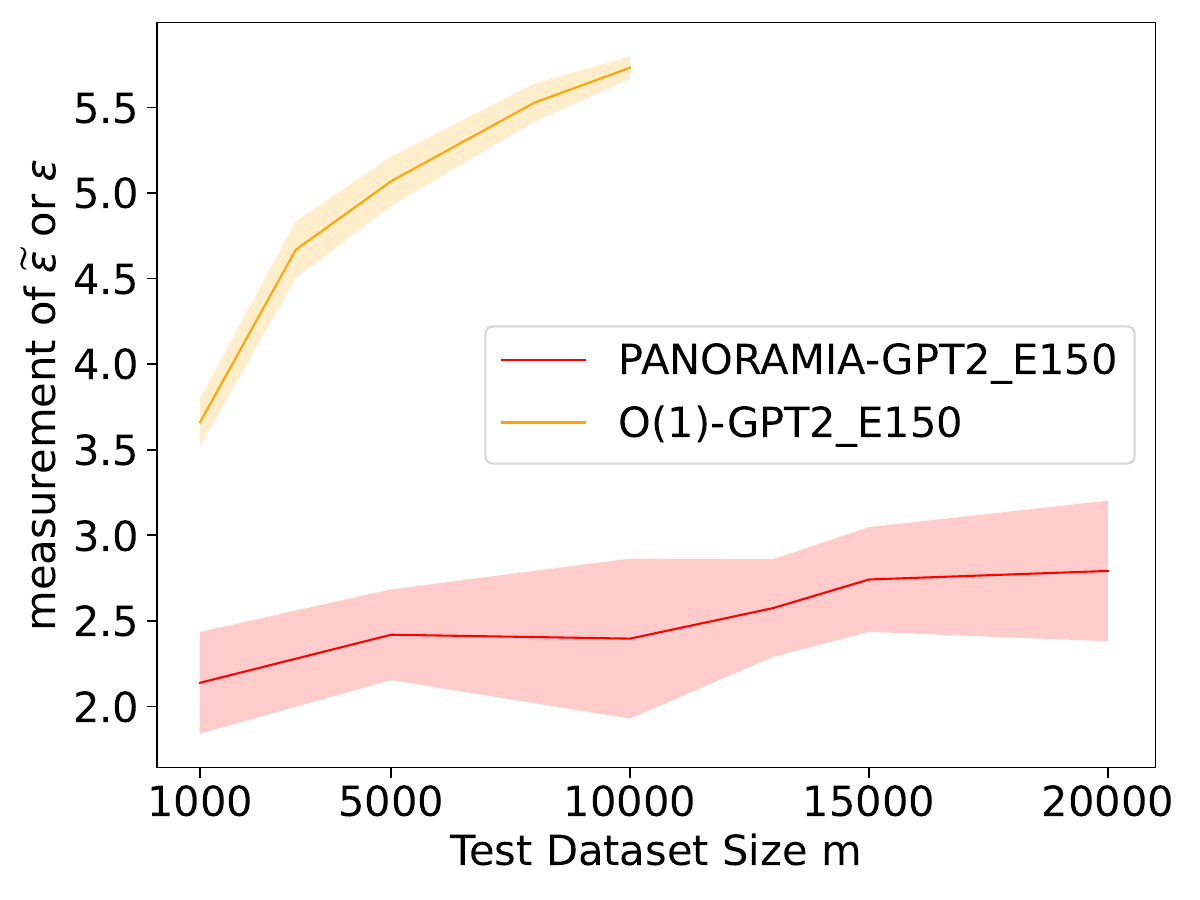}
        \label{subfig:main-test-inc-gpt2}
    }
    
    \caption{Effect of increasing test set size on \acronym's privacy measurement for our target models. In the case of the image dataset, increasing the number of auditing examples allows us to achieve tighter empirical measurements for privacy leakage despite a restricted adversary. For the case of the language modeling task, when increasing test set size $m$, we hit an upper bound on the power of the hypothesis test due to test dataset size, and do not see significant improvement in privacy measurement in this case.}
    \label{fig:eval:test-size-growth}
\end{figure}


\textbf{Increasing Test Set Size for DP models:} With an increase in the test set size $m$, we demonstrate that we can achieve tighter bounds on the privacy measurement $\tilde\epsilon$. \cref{subfig:panoramia_vs_O1_m_dp_increase} shows that as $m$ increases, \acronym (solid lines) can measure higher privacy leakage for various $\epsilon$-DP ResNet18 models. 
The reason for this is that \acronym can leverage a much higher test set size (up to the whole training set size for training and testing the MIA), in comparison to O(1) in our setting (due to being limited by the number of real non-member data points available).

\section{Impact, Limitations and Future Directions}
\label{limitations}
We have introduced a new approach to quantify the privacy leakage from ML models, in settings in which individual data contributors (such as a hospital in a cross-site FL setting or a user of a text auto-complete service) measure the leakage of their own, known partial training data in the final trained model.
Our approach does not introduce new attack capabilities (it would likely benefit from future progress on MIA performance though), but can help model providers and data owners assess the privacy leakage incurred by their data, due to inclusion in ML models training datasets.
Consequently, we believe that any impact from our proposed work is likely to be positive.

However, \acronym suffers from a major limitation: its privacy measurement is not a lower-bound on the privacy loss (see details in \cref{subsec:dp-measurement}). Providing a proper lower-bound $\epsilon_{lb}$ on privacy leakage in this setting is an important avenue for future work. A promising direction is to devise a way to measure or enforce an upper-bound on $c$, thereby yielding $\epsilon_{lb}$ via $\{c+\epsilon\}_{lb} - c_{ub}$.
Despite this shortcoming, we believe that the new measurement setting we introduce in this work is important. 
Indeed, recent work has shown that MIA evaluation on foundation models suffers from a similar lack of non-member data (since all in distribution data is included in the model) \cite{das2024blind,duan2024membership,meeus2024inherent}.
The theory we develop in this paper provides a meaningful step towards addressing privacy measurements in this setting, and provides a more rigorous approach to privacy benchmarks for such models.
We also demonstrate that \acronym's privacy measurements can also be empirically valuable for instance for providing improved measurements with more data (\cref{fig:eval:test-size-growth}).

\section*{Acknowledgments}
Mathias L\'ecuyer is grateful for the support of the Natural Sciences and Engineering Research Council of Canada (NSERC) [reference number RGPIN-2022-04469], as well as a Google Research Scholar award. Sébastien Gambs is supported by the Canada Research Chair program, a Discovery Grant from the Natural Sciences and Engineering Research Council of Canada (NSERC) as well as the DEEL Project CRDPJ 537462-18 funded by the NSERC and the Consortium for Research and Innovation in Aerospace in Québec (CRIAQ).
This research was enabled by computational support provided by the Digital Research Alliance of Canada (alliancecan.ca). Hadrien Lautraite is supported by National Bank of Canada and a bursary from the Canada Research Chair in Privacy-preserving and Ethical Analysis of Big Data..

\bibliography{panoramia_2024}

\begin{thebibliography}{40}
\providecommand{\natexlab}[1]{#1}
\providecommand{\url}[1]{\texttt{#1}}
\expandafter\ifx\csname urlstyle\endcsname\relax
  \providecommand{\doi}[1]{doi: #1}\else
  \providecommand{\doi}{doi: \begingroup \urlstyle{rm}\Url}\fi

\bibitem[Abadi et~al.(2016)Abadi, Chu, Goodfellow, McMahan, Mironov, Talwar, and Zhang]{abadi2016deep}
Abadi, M., Chu, A., Goodfellow, I., McMahan, H.~B., Mironov, I., Talwar, K., and Zhang, L.
\newblock Deep learning with differential privacy.
\newblock In \emph{Proceedings of the 2016 ACM SIGSAC conference on computer and communications security}, pp.\  308--318, 2016.

\bibitem[Andrew et~al.(2023)Andrew, Kairouz, Oh, Oprea, McMahan, and Suriyakumar]{andrew2023oneshot}
Andrew, G., Kairouz, P., Oh, S., Oprea, A., McMahan, H.~B., and Suriyakumar, V.
\newblock One-shot empirical privacy estimation for federated learning, 2023.

\bibitem[Carlini et~al.(2019)Carlini, Liu, Erlingsson, Kos, and Song]{carlini2019secret}
Carlini, N., Liu, C., Erlingsson, {\'U}., Kos, J., and Song, D.
\newblock The secret sharer: Evaluating and testing unintended memorization in neural networks.
\newblock In \emph{28th USENIX Security Symposium (USENIX Security 19)}, pp.\  267--284, 2019.

\bibitem[Carlini et~al.(2021)Carlini, Tramer, Wallace, Jagielski, Herbert-Voss, Lee, Roberts, Brown, Song, Erlingsson, et~al.]{carlini2021extracting}
Carlini, N., Tramer, F., Wallace, E., Jagielski, M., Herbert-Voss, A., Lee, K., Roberts, A., Brown, T., Song, D., Erlingsson, U., et~al.
\newblock Extracting training data from large language models.
\newblock In \emph{30th USENIX Security Symposium (USENIX Security 21)}, 2021.

\bibitem[Carlini et~al.(2022{\natexlab{a}})Carlini, Chien, Nasr, Song, Terzis, and Tramer]{LIRA}
Carlini, N., Chien, S., Nasr, M., Song, S., Terzis, A., and Tramer, F.
\newblock Membership inference attacks from first principles.
\newblock In \emph{2022 2022 IEEE Symposium on Security and Privacy (SP) (SP)}, pp.\  1519--1519, Los Alamitos, CA, USA, may 2022{\natexlab{a}}. IEEE Computer Society.
\newblock \doi{10.1109/SP46214.2022.00090}.
\newblock URL \url{https://doi.ieeecomputersociety.org/10.1109/SP46214.2022.00090}.

\bibitem[Carlini et~al.(2022{\natexlab{b}})Carlini, Chien, Nasr, Song, Terzis, and Tramer]{carlini2022membership}
Carlini, N., Chien, S., Nasr, M., Song, S., Terzis, A., and Tramer, F.
\newblock Membership inference attacks from first principles.
\newblock In \emph{2022 IEEE Symposium on Security and Privacy (SP)}, pp.\  1897--1914. IEEE, 2022{\natexlab{b}}.

\bibitem[Das et~al.(2024)Das, Zhang, and Tram{\`e}r]{das2024blind}
Das, D., Zhang, J., and Tram{\`e}r, F.
\newblock Blind baselines beat membership inference attacks for foundation models.
\newblock \emph{arXiv preprint arXiv:2406.16201}, 2024.

\bibitem[Dong et~al.(2019)Dong, Roth, and Su]{dong2019gaussian}
Dong, J., Roth, A., and Su, W.~J.
\newblock Gaussian differential privacy.
\newblock \emph{arXiv preprint arXiv:1905.02383}, 2019.

\bibitem[Dosovitskiy et~al.(2021)Dosovitskiy, Beyer, Kolesnikov, Weissenborn, Zhai, Unterthiner, Dehghani, Minderer, Heigold, Gelly, Uszkoreit, and Houlsby]{dosovitskiy2021imageworth16x16words}
Dosovitskiy, A., Beyer, L., Kolesnikov, A., Weissenborn, D., Zhai, X., Unterthiner, T., Dehghani, M., Minderer, M., Heigold, G., Gelly, S., Uszkoreit, J., and Houlsby, N.
\newblock An image is worth 16x16 words: Transformers for image recognition at scale, 2021.
\newblock URL \url{https://arxiv.org/abs/2010.11929}.

\bibitem[Duan et~al.(2024)Duan, Suri, Mireshghallah, Min, Shi, Zettlemoyer, Tsvetkov, Choi, Evans, and Hajishirzi]{duan2024membership}
Duan, M., Suri, A., Mireshghallah, N., Min, S., Shi, W., Zettlemoyer, L., Tsvetkov, Y., Choi, Y., Evans, D., and Hajishirzi, H.
\newblock Do membership inference attacks work on large language models?
\newblock \emph{arXiv preprint arXiv:2402.07841}, 2024.

\bibitem[Dwork et~al.(2006)Dwork, McSherry, Nissim, and Smith]{dwork2006calibrating}
Dwork, C., McSherry, F., Nissim, K., and Smith, A.
\newblock Calibrating noise to sensitivity in private data analysis.
\newblock In \emph{Theory of cryptography conference}. Springer, 2006.

\bibitem[He et~al.(2015)He, Zhang, Ren, and Sun]{he2015deep}
He, K., Zhang, X., Ren, S., and Sun, J.
\newblock Deep residual learning for image recognition, 2015.

\bibitem[Holtzman et~al.(2019)Holtzman, Buys, Du, Forbes, and Choi]{holtzman2019curious}
Holtzman, A., Buys, J., Du, L., Forbes, M., and Choi, Y.
\newblock The curious case of neural text degeneration.
\newblock \emph{arXiv preprint arXiv:1904.09751}, 2019.

\bibitem[Jagielski et~al.(2020)Jagielski, Ullman, and Oprea]{jagielski2020auditing}
Jagielski, M., Ullman, J., and Oprea, A.
\newblock Auditing differentially private machine learning: How private is private sgd?
\newblock \emph{Advances in Neural Information Processing Systems}, 33:\penalty0 22205--22216, 2020.

\bibitem[Jayaraman \& Evans(2019)Jayaraman and Evans]{Jayaraman2019EvaluatingDP}
Jayaraman, B. and Evans, D.~E.
\newblock Evaluating differentially private machine learning in practice.
\newblock In \emph{USENIX Security Symposium}, 2019.

\bibitem[Kairouz et~al.(2015)Kairouz, Oh, and Viswanath]{kairouz2015composition}
Kairouz, P., Oh, S., and Viswanath, P.
\newblock The composition theorem for differential privacy.
\newblock In \emph{International conference on machine learning}. PMLR, 2015.

\bibitem[Karras et~al.(2020)Karras, Aittala, Hellsten, Laine, Lehtinen, and Aila]{karras2020training}
Karras, T., Aittala, M., Hellsten, J., Laine, S., Lehtinen, J., and Aila, T.
\newblock Training generative adversarial networks with limited data, 2020.

\bibitem[Krizhevsky(2009)]{Krizhevsky2009LearningML}
Krizhevsky, A.
\newblock Learning multiple layers of features from tiny images.
\newblock 2009.
\newblock URL \url{https://api.semanticscholar.org/CorpusID:18268744}.

\bibitem[Li et~al.(2021)Li, Tramer, Liang, and Hashimoto]{li2021large}
Li, X., Tramer, F., Liang, P., and Hashimoto, T.
\newblock Large language models can be strong differentially private learners.
\newblock \emph{arXiv preprint arXiv:2110.05679}, 2021.

\bibitem[Liu et~al.(2015)Liu, Luo, Wang, and Tang]{liu2015faceattributes}
Liu, Z., Luo, P., Wang, X., and Tang, X.
\newblock Deep learning face attributes in the wild.
\newblock In \emph{Proceedings of International Conference on Computer Vision (ICCV)}, December 2015.

\bibitem[Lu et~al.(2023)Lu, Munoz, Fuchs, LeBlond, Zaresky-Williams, Raff, Ferraro, and Testa]{lu2023general}
Lu, F., Munoz, J., Fuchs, M., LeBlond, T., Zaresky-Williams, E., Raff, E., Ferraro, F., and Testa, B.
\newblock A general framework for auditing differentially private machine learning, 2023.

\bibitem[Maddock et~al.(2023)Maddock, Sablayrolles, and Stock]{maddock2023canife}
Maddock, S., Sablayrolles, A., and Stock, P.
\newblock Canife: Crafting canaries for empirical privacy measurement in federated learning, 2023.

\bibitem[McKenna et~al.(2021)McKenna, Miklau, and Sheldon]{mckenna2021winning}
McKenna, R., Miklau, G., and Sheldon, D.
\newblock Winning the nist contest: A scalable and general approach to differentially private synthetic data.
\newblock \emph{arXiv preprint arXiv:2108.04978}, 2021.

\bibitem[Meeus et~al.(2024)Meeus, Jain, Rei, and de~Montjoye]{meeus2024inherent}
Meeus, M., Jain, S., Rei, M., and de~Montjoye, Y.-A.
\newblock Inherent challenges of post-hoc membership inference for large language models.
\newblock \emph{arXiv preprint arXiv:2406.17975}, 2024.

\bibitem[Merity et~al.(2016)Merity, Xiong, Bradbury, and Socher]{merity2016pointer}
Merity, S., Xiong, C., Bradbury, J., and Socher, R.
\newblock Pointer sentinel mixture models.
\newblock \emph{arXiv preprint arXiv:1609.07843}, 2016.

\bibitem[Nasr et~al.(2019)Nasr, Shokri, and Houmansadr]{nasr2019comprehensive}
Nasr, M., Shokri, R., and Houmansadr, A.
\newblock Comprehensive privacy analysis of deep learning: Passive and active white-box inference attacks against centralized and federated learning.
\newblock In \emph{2019 IEEE symposium on security and privacy (SP)}, pp.\  739--753. IEEE, 2019.

\bibitem[Nasr et~al.(2021)Nasr, Songi, Thakurta, Papernot, and Carlin]{inst}
Nasr, M., Songi, S., Thakurta, A., Papernot, N., and Carlin, N.
\newblock Adversary instantiation: Lower bounds for differentially private machine learning.
\newblock In \emph{2021 IEEE Symposium on Security and Privacy (SP)}, pp.\  866--882, 2021.
\newblock \doi{10.1109/SP40001.2021.00069}.

\bibitem[Nasr et~al.(2023)Nasr, Hayes, Steinke, Balle, Tramèr, Jagielski, Carlini, and Terzis]{nasr2023tight}
Nasr, M., Hayes, J., Steinke, T., Balle, B., Tramèr, F., Jagielski, M., Carlini, N., and Terzis, A.
\newblock Tight auditing of differentially private machine learning, 2023.

\bibitem[Negoescu et~al.(2023)Negoescu, Gonzalez, Orjany, Yang, Lut, Tandra, Zhang, Zheng, Douglas, Nolkha, et~al.]{negoescu2023epsilon}
Negoescu, D.~M., Gonzalez, H., Orjany, S. E.~A., Yang, J., Lut, Y., Tandra, R., Zhang, X., Zheng, X., Douglas, Z., Nolkha, V., et~al.
\newblock Epsilon*: Privacy metric for machine learning models.
\newblock \emph{arXiv preprint arXiv:2307.11280}, 2023.

\bibitem[O'Shea \& Nash(2015)O'Shea and Nash]{oshea2015introduction}
O'Shea, K. and Nash, R.
\newblock An introduction to convolutional neural networks, 2015.

\bibitem[Radford et~al.(2019)Radford, Wu, Child, Luan, Amodei, Sutskever, et~al.]{radford2019language}
Radford, A., Wu, J., Child, R., Luan, D., Amodei, D., Sutskever, I., et~al.
\newblock Language models are unsupervised multitask learners.
\newblock \emph{OpenAI blog}, 1\penalty0 (8):\penalty0 9, 2019.

\bibitem[Steinke et~al.(2023)Steinke, Nasr, and Jagielski]{steinke2023privacy}
Steinke, T., Nasr, M., and Jagielski, M.
\newblock Privacy auditing with one (1) training run.
\newblock In \emph{Thirty-seventh Conference on Neural Information Processing Systems}, 2023.

\bibitem[Suri et~al.()Suri, Zhang, and Evans]{suri2024parameters}
Suri, A., Zhang, X., and Evans, D.
\newblock Do parameters reveal more than loss for membership inference?
\newblock In \emph{High-dimensional Learning Dynamics 2024: The Emergence of Structure and Reasoning}.

\bibitem[Wasserman \& Zhou(2010)Wasserman and Zhou]{wasserman2010statistical}
Wasserman, L. and Zhou, S.
\newblock A statistical framework for differential privacy.
\newblock \emph{Journal of the American Statistical Association}, 2010.

\bibitem[Yeom et~al.(2018)Yeom, Giacomelli, Fredrikson, and Jha]{yeom2018privacy}
Yeom, S., Giacomelli, I., Fredrikson, M., and Jha, S.
\newblock Privacy risk in machine learning: Analyzing the connection to overfitting.
\newblock In \emph{2018 IEEE 31st computer security foundations symposium (CSF)}, pp.\  268--282. IEEE, 2018.

\bibitem[Yousefpour et~al.(2021)Yousefpour, Shilov, Sablayrolles, Testuggine, Prasad, Malek, Nguyen, Ghosh, Bharadwaj, Zhao, Cormode, and Mironov]{opacus}
Yousefpour, A., Shilov, I., Sablayrolles, A., Testuggine, D., Prasad, K., Malek, M., Nguyen, J., Ghosh, S., Bharadwaj, A., Zhao, J., Cormode, G., and Mironov, I.
\newblock Opacus: {U}ser-friendly differential privacy library in {PyTorch}.
\newblock \emph{arXiv preprint arXiv:2109.12298}, 2021.

\bibitem[Zagoruyko \& Komodakis(2017{\natexlab{a}})Zagoruyko and Komodakis]{zagoruyko2017wide}
Zagoruyko, S. and Komodakis, N.
\newblock Wide residual networks, 2017{\natexlab{a}}.

\bibitem[Zagoruyko \& Komodakis(2017{\natexlab{b}})Zagoruyko and Komodakis]{zagoruyko2017wideresidualnetworks}
Zagoruyko, S. and Komodakis, N.
\newblock Wide residual networks, 2017{\natexlab{b}}.
\newblock URL \url{https://arxiv.org/abs/1605.07146}.

\bibitem[Zanella-Béguelin et~al.(2022)Zanella-Béguelin, Wutschitz, Tople, Salem, Rühle, Paverd, Naseri, Köpf, and Jones]{zanellabéguelin2022bayesian}
Zanella-Béguelin, S., Wutschitz, L., Tople, S., Salem, A., Rühle, V., Paverd, A., Naseri, M., Köpf, B., and Jones, D.
\newblock Bayesian estimation of differential privacy, 2022.

\bibitem[Zarifzadeh et~al.(2023)Zarifzadeh, Liu, and Shokri]{zarifzadeh2023low}
Zarifzadeh, S., Liu, P. C.-J.~M., and Shokri, R.
\newblock Low-cost high-power membership inference by boosting relativity.
\newblock 2023.

\end{thebibliography}
\bibliographystyle{icml2024}



\clearpage
\newpage
\appendix
\section{Notations}
\label{appendix:notations}
\cref{table:notations} summarizes the main notations used in the paper.
\begin{table}[h]
  
  \centering
  \resizebox{0.65\textwidth}{!}{
  \begin{tabular}{@{}l@{~~}l@{}}
  \toprule
  \bf Notation    & \bf Description \\
  \midrule
  $f$ & the target model to be audited. \\
  $\gG$ & the generative model \\
  $\gD$   & distribution over auditor samples \\
  $\DT$ & (subset of the) training set of the target model $f$ from $\gD$ \\
  $\DG$ & training set of the generative model $\gG$, with $\DG \subset \DT$ \\
  $\Din$ & member auditing set, with $\Din \subset \DT$ and $\Din \cap \DG = \{\}$ \\
  $\Dout$ & non-member auditing set, with $\Dout \sim \gG$ \\
  $\Dinouttrte$ & training and testing splits of $\Din$ and $\Dout$ \\
  $m$ & $|\Din| = |\Dout| \triangleq m$ \\
  $b$ & baseline classifier for $\Din$ vs. $\Dout$ \\
  \bottomrule

  \end{tabular}}

  \vspace{0.11cm}
    \caption{Summary of notations}
  \label{table:notations}  
 
\end{table}

\begin{hideIfProofs}
\section{Proofs}
\label{appendix:proofs}

For both Proposition~\ref{prop:gen-test} and Proposition~\ref{prop:dp-test}, we state the proposition again for convenience before proving it.

\subsection{Proof of Proposition~\ref{prop:gen-test}}
\label{appendix:proof-prop1}

\begin{numberedprop}[\ref{prop:gen-test}]
Let $\gG$ be $c$-close, and $T^b \triangleq B(S, X)$ be the guess from the baseline. 
Then, for all $v \in \sR$ and all $t$ in the support of $T$:
\begin{align*}
& \sP_{S, X, T^b}\Big[ \sum_{i=1}^m T^b_i \cdot S_i \geq v \ | \ T^b = t^b \Big] \leq \underset{{S' \sim \textrm{Bernoulli}(\frac{e^c}{1+e^c})^m}}\sP\Big[ \sum_{i=1}^m t^b_i \cdot S'_i \geq v \Big] \triangleq \beta^b(m, c, v, t^b)
\end{align*}
\end{numberedprop}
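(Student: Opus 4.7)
The goal is to show that the weighted sum $\sum_i T^b_i \cdot S_i$ of correct membership guesses made by the baseline is stochastically dominated by $\sum_i t^b_i \cdot S'_i$ where $S'_i$ are i.i.d.\ Bernoulli$(e^c/(1+e^c))$. My plan is to build this up one index at a time: show that conditioned on the past and on the observed guess vector, each $S_i$ is itself stochastically dominated by Bernoulli$(e^c/(1+e^c))$, and then conclude by induction using the stochastic dominance machinery from Steinke et al.\ (their Definition 4.8 and Lemma 4.9).

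First I would exploit the structure of the baseline: since $B(S,X) = \{b(X_1), \ldots, b(X_m)\}$ depends only on $X$, and since the pairs $(S_i, X_i)$ are drawn i.i.d.\ across $i$ (independent coin $S_i$ followed by $X_i$ sampled from $\gD$ or $\gG$), we get that $T^b_i$ depends only on $X_i$, and $S_{<i} \perp T^b_{<i} \mid X_{<i}$, plus $S_{\leq i} \perp T^b_{>i} \mid X_{<i}$. These independence facts let me reduce $\sP[S_i = 1 \mid T^b = t^b, S_{<i} = s_{<i}, X_{\leq i} = x_{\leq i}]$ to $\sP[S_i = 1 \mid X_i = x_i]$.

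Second, I apply Bayes' rule to rewrite $\sP[S_i=1 \mid X_i = x_i]$ as $1/\bigl(1 + \sP_\gG[x_i]/\sP_\gD[x_i]\bigr)$, using that $S_i$ is Bernoulli$(1/2)$ and that conditional on $S_i$, $X_i$ is drawn from either $\gD$ or $\gG$. The $c$-closeness assumption (Definition~\ref{c-closeness-def}) gives $\sP_\gG[x_i]/\sP_\gD[x_i] \geq e^{-c}$ pointwise, so the conditional probability is at most $e^c/(1+e^c)$. Marginalizing over $X_{\leq i}$ via the law of total probability preserves the bound, yielding $\sP[S_i=1 \mid T^b = t^b, S_{<i} = s_{<i}] \leq e^c/(1+e^c)$ uniformly in the conditioning.

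Finally I do induction on $m$: assume $W_{m-1} = \sum_{i<m} t^b_i S_i$ is stochastically dominated by $W'_{m-1} = \sum_{i<m} t^b_i S'_i$ (trivially true for $m=1$). The per-index bound above shows that conditional on $W_{m-1}$, the next increment $t^b_m \cdot S_m$ is stochastically dominated by $t^b_m \cdot \mathrm{Bernoulli}(e^c/(1+e^c))$, and Lemma 4.9 of Steinke et al.\ then lifts this to domination of $W_m$ by $W'_m$. The main subtlety, and where I expect to spend the most care, is the bookkeeping around the conditioning set in the Bayes step: I must ensure that conditioning simultaneously on $T^b = t^b$ and on $S_{<i}, X_{\leq i}$ collapses correctly to conditioning on $X_i$ alone, which is precisely where the conditional independence remarks from the first step are indispensable.
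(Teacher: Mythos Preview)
Your proposal is correct and follows essentially the same approach as the paper's proof: the paper likewise establishes the conditional independence facts from the coordinatewise structure of $B$, applies Bayes' rule and the $c$-closeness bound to get $\sP[S_i=1\mid T^b=t^b,S_{<i}=s_{<i}]\leq e^c/(1+e^c)$ after marginalizing over $X_{\leq i}$, and then concludes by the same induction via Lemma~4.9 of Steinke et al.
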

\begin{proof}
Notice that under our baseline model $B(s, x) = \{b(x_1), b(x_2), \ldots, b(x_m)\}$, and given that the $X_i$ are i.i.d., we have that: $S_{<i} \ind T^b_{<i} \ | \ X_{<i}$, since $T^b_i = B(S, X)_i$'s distribution is entirely determined by $X_i$;
and $S_{\leq i} \ind T^b_{> i} \ | \ X_{<i}$ since the $X_{i}$ are sampled independently from the past.

We study the distribution of $S$ given a fixed prediction vector $t^b$, one element $i \in [m]$ at a time:
\begin{align*}
& \sP\big[ S_i = 1 \ | \ T^b = t^b, S_{<i} = s_{<i}, X_{\leq i} = x_{\leq i}\big] \\
&= \sP\big[ S_i = 1 \ | \ S_{<i} = s_{<i}, X_{\leq i} = x_{\leq i}\big] \\
&= \sP\big[ X_i \ | \ S_i = 1, S_{<i} = s_{<i}, X_{< i} = x_{< i}\big] \frac{\sP\big[ S_i = 1 \ | \ S_{<i} = s_{<i}, X_{< i} = x_{< i} \big]}{\sP\big[ X_i \ | \ S_{<i} = s_{<i}, X_{< i} = x_{< i} \big]} \\
&= \frac{\sP\big[ X_i \ | \ S_i = 1, S_{<i} = s_{<i}, X_{< i} = x_{< i}\big]\sP\big[ S_i = 1 \big]}{\sP\big[ X_i \ | \ S_{<i} = s_{<i}, X_{< i} = x_{< i} \big]} \\
&= \frac{\sP\big[ X_i \ | \ S_i = 1 \big]\frac{1}{2}}{\sP\big[ X_i \ | \ S_i=1 \big]\frac{1}{2} + \sP\big[ X_i \ | \ S_i=0 \big]\frac{1}{2}} \\
&= \frac{1}{1 + \frac{\sP\big[ X_i \ | \ S_i=0 \big]}{\sP\big[ X_i \ | \ S_i=1 \big]}} = \frac{1}{1 + \frac{\sP_\gG\big[ X_i \big]}{\sP_\gD\big[ X_i \big]}} \leq \frac{1}{1 + e^{-c}} = \frac{e^c}{1 + e^{c}}
\end{align*}
The first equality uses the independence remarks at the beginning of the proof, the second relies Bayes' rule, while the third and fourth that $S_i$ is sampled i.i.d from a Bernoulli with probability half, and $X_i$ i.i.d. conditioned on $S_i$. 
The last inequality uses Definition \ref{c-closeness-def} for $c$-closeness.

Using this result and the law of total probability to introduce conditioning on $X_{\leq i}$, we get that:
\begin{align*}
& \sP\big[ S_i = 1 \ | \ T^b = t^b, S_{<i} = s_{<i}\big] \\
& = \sum_{x_{\leq i}} \sP\big[ S_i = 1 \ | \ T^b = t^b, S_{<i} = s_{<i},  X_{\leq i} = x_{\leq i} \big] \sP\big[ X_{\leq i} = x_{\leq i} \ | \ T^b = t^b, S_{<i} = s_{<i} \big] \\
& \leq \sum_{x_{\leq i}} \frac{e^c}{1 + e^{c}} \sP\big[ X_{\leq i} = x_{\leq i} \ | \ T^b = t^b, S_{<i} = s_{<i} \big] ,
\end{align*}
and hence that:
\begin{equation}\label{eq:stoc-dom-gen}
\sP\big[ S_i = 1 \ | \ T^b = t^b, S_{<i} = s_{<i}\big] \leq \frac{e^c}{1 + e^{c}}
\end{equation}

We can now proceed by induction: assume inductively that $W_{m-1} \triangleq \sum_{i=1}^{m-1} T^b_i \cdot S_i$ is stochastically dominated (see Definition 4.8 in \cite{steinke2023privacy}) by $W'_{m-1} \triangleq \sum_{i=1}^{m-1} T^b_i \cdot S'_i$, in which $S' \sim \textrm{Bernoulli}(\frac{e^c}{1 + e^{c}})^{m-1}$. 
Setting $W_1 = W'_1 = 0$ makes it true for $m=1$. Then, conditioned on $W_{m-1}$ and using Eq. \ref{eq:stoc-dom-gen}, $T^b_m \cdot S_m = T_m \cdot \1\{S_m = 1\}$ is stochastically dominated by $T^b_m \cdot \textrm{Bernoulli}(\frac{e^c}{1 + e^{c}})$.
Applying Lemma 4.9 from \cite{steinke2023privacy} shows that $W_{m}$ is stochastically dominated by $W'_{m}$, which proves the induction and implies the proposition's statement.
\end{proof}

\subsection{Proof of Proposition~\ref{prop:dp-test}}
\label{appendix:proof-prop2}

\begin{numberedprop}[\ref{prop:dp-test}]
Let $\gG$ be $c$-close, $f$ be $\epsilon$-DP, and $T^a \triangleq A(S, X, f)$ be the guess from the membership audit. Then, for all $v \in \sR$ and all $t$ in the support of $T$:
\begin{align*}
& \sP_{S, X, T^a}\Big[ \sum_{i=1}^m T^a_i \cdot S_i \geq v \ | \ T^a = t^a \Big] \leq \underset{{S' \sim \textrm{Bernoulli}(\frac{e^{c+\epsilon}}{1+e^{c + \epsilon}})^m}}\sP\Big[ \sum_{i=1}^m t^a_i \cdot S'_i \geq v \Big] \triangleq \beta^a(m, c, \epsilon, v, t^a)
\end{align*}
\end{numberedprop}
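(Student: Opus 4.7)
The plan is to mirror the argument of Proposition~\ref{prop:gen-test} closely, but now carrying the extra factor of $e^\epsilon$ that arises from the differential privacy of $f$ (and hence of $T^a$ with respect to $S$). The overall template is: (i) bound the per-coordinate conditional probability $\sP[S_i = 1 \mid T^a = t^a, S_{<i} = s_{<i}, X_{\leq i} = x_{\leq i}]$ by $\frac{e^{c+\epsilon}}{1+e^{c+\epsilon}}$; (ii) integrate out $X_{\leq i}$ by the law of total probability to obtain the same bound conditional only on $(T^a, S_{<i})$; (iii) conclude by an induction on $m$ using stochastic dominance, exactly as in Proposition~\ref{prop:gen-test}.

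For step~(i), I would begin with Bayes' rule applied to $\sP[S_i = 1 \mid T^a = t^a, S_{<i} = s_{<i}, X_{\leq i} = x_{\leq i}]$, expressing it as
\[
\frac{1}{1 + \frac{\sP[T^a = t^a \mid S_i = 0, S_{<i}, X_{\leq i}]}{\sP[T^a = t^a \mid S_i = 1, S_{<i}, X_{\leq i}]} \cdot \frac{\sP[S_i = 0 \mid S_{<i}, X_{\leq i}]}{\sP[S_i = 1 \mid S_{<i}, X_{\leq i}]}}.
\]
Then I would lower-bound each of the two ratios in the denominator separately. The first ratio is controlled by the $\epsilon$-DP property of $f$ (and thus of $A(S, X, f)$ with respect to $S$, noting that $S_i$ is the coordinate we are swapping while holding all other coordinates of $S$ and $X$ fixed), giving a factor $\geq e^{-\epsilon}$. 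The second ratio, by the same Bayes computation used in Proposition~\ref{prop:gen-test}, reduces to $\frac{\sP_\gG[X_i]}{\sP_\gD[X_i]}$, which is $\geq e^{-c}$ by $c$-closeness. Multiplying yields a denominator factor $\geq e^{-(c+\epsilon)}$, hence the claimed per-coordinate bound $\frac{e^{c+\epsilon}}{1+e^{c+\epsilon}}$.

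Step~(ii) is then a straightforward total-probability argument: summing (or integrating) over $X_{\leq i}$ against $\sP[X_{\leq i} \mid T^a, S_{<i}]$ preserves the uniform-in-$x$ bound, yielding
\[
\sP[S_i = 1 \mid T^a = t^a, S_{<i} = s_{<i}] \leq \frac{e^{c+\epsilon}}{1+e^{c+\epsilon}}.
\]
Step~(iii) then repeats the induction from the previous proposition verbatim, with the new Bernoulli parameter: assuming $W_{m-1} = \sum_{i<m} T_i^a S_i$ is stochastically dominated by the analogous sum with $S'_i \sim \mathrm{Bernoulli}\bigl(\tfrac{e^{c+\epsilon}}{1+e^{c+\epsilon}}\bigr)$, conditioning on $W_{m-1}$ and invoking the per-coordinate bound extends the dominance to $W_m$, by Lemma~4.9 in \cite{steinke2023privacy}. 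The conclusion then rewrites as the stated tail inequality.

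The main obstacle I anticipate is step~(i), specifically justifying that the first ratio is controlled by $\epsilon$-DP in the required \emph{conditional} sense. One must be careful that $A(S, X, f)$ is DP in $S$ for each fixed $x$, and that the conditioning on $X_{\leq i} = x_{\leq i}$ and $S_{<i} = s_{<i}$ does not break this: swapping $S_i$ while keeping everything else fixed corresponds to changing the underlying training dataset by one element, which is exactly the DP neighboring relation. The rest of the proof (Bayes, $c$-closeness, total probability, and the induction) should proceed in close analogy with Proposition~\ref{prop:gen-test}.
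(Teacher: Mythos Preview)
Your proposal is correct and follows essentially the same approach as the paper's proof: Bayes' rule on $\sP[S_i=1\mid T^a,S_{<i},X_{\leq i}]$, bounding the likelihood ratio by $e^{-\epsilon}$ via DP and the prior ratio by $e^{-c}$ via $c$-closeness, then total probability over $X_{\leq i}$ and the same stochastic-dominance induction using Lemma~4.9 of \cite{steinke2023privacy}. The subtlety you flag about conditioning not breaking the DP guarantee is exactly the point the paper handles implicitly by fixing $X_{\leq i}$ and $S_{<i}$ before invoking DP with respect to $S_i$.
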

\begin{proof}
Fix some $t^a \in \sR_+^m$. 
We study the distribution of $S$ one element $i \in [m]$ at a time:
\begin{align*}
& \sP\big[ S_i = 1 \ | \ T^a = t^a, S_{<i} = s_{<i}, X_{\leq i} = x_{\leq i}\big] \\
&= \sP\big[ T^a = t^a \ | \ S_i = 1, S_{<i} = s_{<i}, X_{\leq i} = x_{\leq i}\big] \frac{\sP\big[ S_i = 1 \ | \ S_{<i} = s_{<i}, X_{\leq i} = x_{\leq i} \big]}{\sP\big[ T^a = t^a \ | \ S_{<i} = s_{<i}, X_{\leq i} = x_{\leq i} \big]} \\
&\leq \frac{1}{1 + e^{-\epsilon}\frac{\sP\big[ S_i = 0 \ | \ S_{<i} = s_{<i}, X_{\leq i} = x_{\leq i} \big]}{\sP\big[ S_i = 1 \ | \ S_{<i} = s_{<i}, X_{\leq i} = x_{\leq i} \big]}} \\
&\leq \frac{1}{1 + e^{-\epsilon}e^{-c}} = \frac{e^{c+\epsilon}}{1 + e^{c+\epsilon}}
\end{align*}
The first equality uses Bayes' rule. 
The first inequality uses the decomposition:
\begin{align*}
& \sP\big[ T^a = t^a \ | \ S_{<i} = s_{<i}, X_{\leq i} = x_{\leq i} \big]\\
&= \sP\big[ T^a = t^a \ | \ S_i = 1, S_{<i} = s_{<i}, X_{\leq i} = x_{\leq i} \big] \cdot \sP\big[ S_i = 1 \ | \ S_{<i} = s_{<i}, X_{\leq i} = x_{\leq i} \big] \\
&+ \sP\big[ T^a = t^a \ | \ S_i = 0, S_{<i} = s_{<i}, X_{\leq i} = x_{\leq i} \big] \cdot \sP\big[ S_i = 0 \ | \ S_{<i} = s_{<i}, X_{\leq i} = x_{\leq i} \big] ,
\end{align*}
and the fact that $A(s, x, f)$ is $\epsilon$-DP w.r.t. $s$ and hence that:

\begin{equation}\label{eq:dp-bound-f}
    \frac{\sP\big[ T^a = t^a \ | \ S_i = 0, S_{<i} = s_{<i}, X_{\leq i} = x_{\leq i}\big]}{\sP\big[ T^a = t^a \ | \ S_i = 1, S_{<i} = s_{<i}, X_{\leq i} = x_{\leq i}\big]} \geq e^{-\epsilon} .
\end{equation}

This inequality might seem surprising at first, as we know from the privacy game (\cref{def:auditing-game}) that $S$ is independent of $f$, that is $S \ind f$. However, these two quantities are not conditionally independent when conditioning on $X$ (where $X$ is the test set for the MIA/baseline), (i.e., $S \not\ind f | X$). This is because $X_i$ is either a member or non-member based on $S_i$. So when $S_i=1$, then $f$ was trained on $X_i$, which makes $f$ and $S_i$ dependent if membership can be detected through $f$. If $f$ is $\epsilon$-DP, we can bound this dependency using DP properties, which is what we do in \cref{eq:dp-bound-f}.

The second inequality uses that:
\begin{align*}
& \frac{\sP\big[ S_i = 0 \ | \ S_{<i} = s_{<i}, X_{\leq i} = x_{\leq i} \big]}{\sP\big[ S_i = 1 \ | \ S_{<i} = s_{<i}, X_{\leq i} = x_{\leq i} \big]} \\
&= \frac{\sP\big[ X_i = x_i \ | \ S_i = 0, S_{<i} = s_{<i}, X_{< i} = x_{< i} \big]}{\sP\big[ X_i = x_i \ | \ S_i = 1, S_{<i} = s_{<i}, X_{< i} = x_{< i} \big]} \cdot \frac{\sP\big[ S_i = 0 \ | \  S_{<i} = s_{<i}, X_{< i} = x_{< i} \big]}{\sP\big[ S_i = 1 \ | \ S_{<i} = s_{<i}, X_{< i} = x_{< i} \big]} \\
&= \frac{\sP\big[ X_i = x_i \ | \ S_i = 0, S_{<i} = s_{<i}, X_{< i} = x_{< i} \big]}{\sP\big[ X_i = x_i \ | \ S_i = 1, S_{<i} = s_{<i}, X_{< i} = x_{< i} \big]} \cdot \frac{1/2}{1/2} \\
& = \frac{\sP_\gG\big[ X_i \big]}{\sP_\gD\big[ X_i \big]} \geq e^{-c}
\end{align*}

As in Proposition~\ref{prop:gen-test}, applying the law of total probability to introduce conditioning on $X_{\leq i}$ yields:
\begin{equation}\label{eq:stoc-dom-mia}
\sP\big[ S_i = 1 \ | \ T^a = t^a, S_{<i} = s_{<i} \big] \leq \frac{e^{c+\epsilon}}{1 + e^{c+\epsilon}} ,
\end{equation}
and we can proceed by induction.
Assume inductively that $W_{m-1} \triangleq \sum_{i=1}^{m-1} T^a_i \cdot S_i$ is stochastically dominated (see Definition 4.8 in \cite{steinke2023privacy}) by $W'_{m-1} \triangleq \sum_{i=1}^{m-1} T^a_i \cdot S'_i$, in which $S' \sim \textrm{Bernoulli}(\frac{e^{c+\epsilon}}{1 + e^{c+\epsilon}})^{m-1}$. 
Setting $W_1 = W'_1 = 0$ makes it true for $m=1$. 
Then, conditioned on $W_{m-1}$ and using Eq. \ref{eq:stoc-dom-mia}, $T^a_m \cdot S_m = T^a_m \cdot \1\{S_m = 1\}$ is stochastically dominated by $T^a_m \cdot \textrm{Bernoulli}(\frac{e^{c+\epsilon}}{1 + e^{c+\epsilon}})$. 
Applying Lemma 4.9 from \cite{steinke2023privacy} shows that $W_{m}$ is stochastically dominated by $W'_{m}$, which proves the induction and implies the proposition's statement.
\end{proof}

\end{hideIfProofs}


\section{Experimental Details}
\label{appendix:exp_details}

Hereafter, we provide the details about the datasets and models trained in our experiments.

\subsection{Image data}
\label{subsec:imgexpdetails}

{\bf Target models.} We audit target models with the following architectures: a Multi-Label Convolutional Neural Network (CNN) with four layers \cite{oshea2015introduction}, and the ResNet101 \cite{he2015deep} as well as a Wide ResNet-28-10 ~\cite{zagoruyko2017wideresidualnetworks}, and a Vision Transformer (ViT)-small ~\cite{dosovitskiy2021imageworth16x16words}. 
We also include in our analysis, differentially-private models for ResNet18~\cite{he2015deep} and WideResNet-16-4~\cite{zagoruyko2017wide} models as targets, with $\epsilon = 1, 2, 4, 6, 10, 15, 20$. 
The ResNet-based models are trained on CIFAR10 using $50k$ images \cite{Krizhevsky2009LearningML} of $32$x$32$ resolution. 
For ResNet101 CIFAR10-based classification models, we use a training batch size of 32; for WideResNet-28-10, we use batch size 128. The details for training DP models are given in \ref{appendix:dp_sup_results}. 
The associated test accuracies and epochs are mentioned in Table~\ref{table_metrics}. 
The Multi-Label CNN is trained on $200k$ images of CelebA \cite{liu2015faceattributes} of $128$x$128$ resolution, training batch-size 32, to predict $40$ attributes associated with each image.

{\bf Generator.} For both image datasets, we use StyleGAN2~\cite{karras2020training} to train the generative model $\gG$ from scratch on $\DG$, and produce non-member images. 
For CIFAR10 dataset, we use a $10,000$ out of $50,000$ images from the training data of the target model to train the generative model.
For the CelebA dataset, we select $35,000$ out of $200,000$ images from the training data of the target model to train the generative model.
Generated images will in turn serve as non-members for performing the MIAs. 
Figure~\ref{datasets} shows examples of member and non-member images used in our experiments. 
In the case of CelebA, we also introduce a vanilla CNN as a classifier or filter to distinguish between fake and real images, and remove any poor-quality images that the classifier detects with high confidence. 
The data used to train this classifier was the same data used to train StyleGAN2, which ensures that the generated high-resolution images are of high quality.

\textbf{MIA and Baseline training.}
For the MIA, we follow a loss-based attack approach: \acronym takes as input raw member and non-member data points for training along with the loss values the target model $f$ attributes to these data points.
More precisely, the training set of \acronym is:
\[(D^{tr}_{in}, f(D^{tr}_{in})) \cup  (D^{tr}_{out}, f(D^{tr}_{out}))\]
In \S\ref{subsec:dp-measurement}, we discussed the importance of having a tight $\clb$ so that our measure, $\tilde\epsilon$, becomes close to a lower-bound on $\epsilon$-DP, which requires a strong baseline. 
To strengthen our baseline, we introduce the helper model $\helper$, which helps the baseline model $b$ by supplying additional features (\emph{i.e.}, embeddings) that can be viewed as side information about the data distribution.
The motivation is that $\helper$'s features might differ between samples from $\gD$ and $\gD'$, enhancing the performance of the baseline classifier.
This embedding model $\helper$ is similar in design to $f$ (same task and architecture) but is trained on synthetic data that is close in distribution to the real member data distribution.

The helper model has the same classification task and architecture as the target model. To train it, we generate separate sets of training and validation data using our generator. For image data, the synthetic samples do not have a label. We thus train a labeler model, also a classifier with the same task, on the same dataset used to train the generator. We use the labeler model to provide labels for our synthetic samples (training and validation sets above). We train the helper model on the resulting training set, and select hyperparameters on the validation set.

Whether for the baseline or MIA, we use side information models ($\helper$ and $f$, respectively) by concatenating the loss of $\helper(x)$ and $f(x)$ to the final feature representation (more details are provided later) before a last layer of the MIA/Baseline makes the membership the prediction. 
Since we need labels to compute the loss, we label synthetic images with a Wide ResNet-28-2 in the case of CIFAR10, and a Multi-Label CNN of similar architecture as the target model in the case of CelebA labeling. 
For both instances, we used a subset of the data, that was used to train the respective generative models, to train the ``labeler'' classifiers as well. The labeler used to train the helper model for image data has 80.4\% test accuracy on the CIFAR10 real data test set. 
The rationale for this approach is to augment the baseline with a model providing good features (here for generated data) to balance the good features provided to the MIA by $f$ outside of the membership information. 
In practice, the labeled generated data seems enough to provide such good features, despite the fact that the labeler is not extremely accurate.
We studied alternative designs (e.g., a model trained on non-member data, no helper model) in Appendix D.1, Table 5, and the helper model trained on the synthetic data task performs best (while not requiring non-member data, which is a key point).

We use two different modules for each MIA and Baseline training.
More precisely, the first module optimizes image classification using a built-in Pytorch ResNet101 classifier. 
The second module, in the form of a multi-layer perceptron, focuses on classifying member and non-member labels via loss values attributed to these data points by $f$ as input for the loss module of MIA and losses of $e$ to the baseline $b$ respectively. 
We then stack the scores of both image and loss modules into a logistic regression task (as a form of meta-learning) to get the final outputs for member and non-member data points by MIA and baseline $b$. When training the baseline and MIA models, we use a validation set to select hyper-parameters and training stopping time that maximize the lower bounds (effectively maximizing the $c_{lb}$ or $\tilde\epsilon_{lb}$ for the Baseline and MIA respectively) on the validation set. 
The MIA and baseline are trained on $4500$ data samples (half members and half generated non-members). 
The test dataset consists of $10000$ samples, again half members and half generated non-members. 
The actual and final number of members and non-members that ended up in the test set depends on the Bernoulli samples in our auditing game.
We repeat the training process over 5 times, each time independently resampling the training dataset for the MIA and baseline (keeping the train dataset size fixed). We report all our results over a $95\%$ confidence interval and report the standard deviation for our results in \cref{table:audit_values}.
The \textbf{compute resources} used for MIA and Baseline training were mainly running all experiments on cloud-hosted VMs, using 1 v100l GPU, 4 nodes per task, and 32G memory requested for each job on the cloud cluster. The time to run the MIA and baseline attack pipeline was around 10 hours including hyperparameter tuning on $\epsilon_{lb}$ and $c_{lb}$ respectively. 
\begin{figure}[htbp]
  \centering
  \subfigure[Real]{\includegraphics[width=0.18\textwidth]{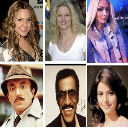}
  \label{subfig:celebAreal}}
  \hfill
  \subfigure[Synthetic]{\includegraphics[width=0.18\textwidth]{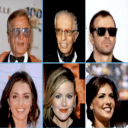}
  \label{subfig:celebFake}}
  \hfill
  \subfigure[Real]{\includegraphics[width=0.18\textwidth]{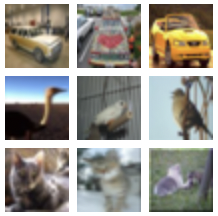}
  \label{subfig:cifarReal}}
  \hfill
  \subfigure[Synthetic]{\includegraphics[width=0.18\textwidth]{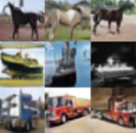}
  \label{subfig:cifarFake}}
  \caption{Member and Non-Member datasets used in our experiments for CelebA at $128$ x $128$ resolution (\ref{subfig:celebAreal}, \ref{subfig:celebFake}) and CIFAR10 (\ref{subfig:cifarReal}, \ref{subfig:cifarFake}) image data at $32$ x $32$ resolution. } 
  \label{datasets}
\end{figure}

\subsection{Language Modeling}

\label{subsec:llmfullexpdetails}
{\bf Target model.}
The target model (a small GPT-2~\cite{radford2019language}) task is causal language modeling (CLM), a self-supervised task in which the model predicts the next word in a given sequence of words (\emph{i.e.}, context), which is done on a subset of WikiText-103 dataset~\cite{merity2016pointer}, a collection of Wikipedia articles. 
While the GPT-2 training dataset is undisclosed, \citet{radford2019language} do state that they did not include any Wikipedia document.



One common standard pre-processing in CLM is to break the tokenized training dataset into chunks of equal sizes to avoid padding of the sequences.
To achieve this, the entire training dataset is split into token sequences of fixed length, specifically 64 tokens per sequence, refer hereafter as ``chunks''. 
Each of these chunks serves as an individual sample within our training dataset. 
Therefore, in the experiments, the membership inference will be performed for a chunk. In other words, we adopt an example-level differential privacy (DP) viewpoint, where an example is considered a sequence of tokens that constitute a row in a batch of samples.
This chunk-based strategy offers additional benefits in synthetic sample generation. 
By maintaining uniformity in the length of these synthetic samples, which also consists of 64 tokens, it is possible to effectively mitigate any distinguishability between synthetic and real samples based solely on their length.
We also consider the possibility of weak correlations between chunks (e.g., being from the same article). To mitigate this, we perform a form of stratified sampling as a pre-processing step. From each document, we include only the first 50 chunks and discard the rest. We select 2,113 documents in total, resulting in 105,650 chunks in our dataset.

We split the dataset into train and test with a 90:10 ratio for the target model (We insist on following the typical training pipelines in machine learning, since we plan to conduct post-hoc audits with \acronym). The test dataset provides real non-member samples, allowing comparison to the O(1) auditor and another version of our approach which uses real non-member data.  The training dataset ($\DT$) contains 95,085 samples, while the test dataset consists of 10,565 samples.


We train the target model for 200 epochs in total. 
Figure \ref{fig:gpt-2-target-overtraining-loss} displays how the training and validation cross-entropy loss changes throughout the training, when we audit models from epochs 37 (best generalization), 75, and 150.
To check how well the target model generalizes, we look at the cross-entropy loss (on a validation set), which is the only metric in a causal language model task to report (or the perplexity~\cite{radford2019language} which conveys the same information).

\begin{figure}[!h]
  \centering  
\includegraphics[width=0.8\columnwidth]{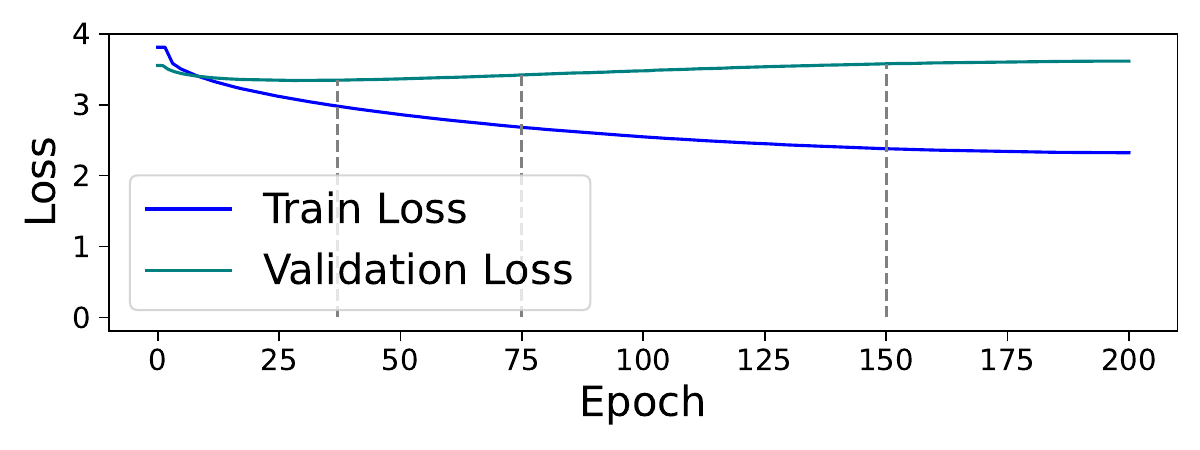}
  \caption{Target model (GPT-2) loss during overtraining on WikiText dataset. We pick GPT-2\_E37, GPT-2\_E75, GPT-2\_E150 as target models to audit corresponding to the gray dashed lines.}
  \label{fig:gpt-2-target-overtraining-loss}
\end{figure}
\vspace{-0.07\baselineskip}


The \textbf{compute resources} used for training the target models are 4 V100-32gb GPUs, running on cloud-hosted VMs. 
Using this setup, fine-tuning the target model for 200 epochs required about 13 hours.

{\bf Generator.}
The $\gG$ is a GPT-2 fine-tuned using a CLM task on dataset $\DG$, a subset of $\DT$. 
To create synthetic samples with $\gG$, we use top-$k$ sampling \cite{holtzman2019curious} method in an auto-regressive manner while keeping the generated sequence length fixed at 64 tokens. 
To make $\gG$ generate samples less like the real members it learned from (and hopefully, more like real non-members), we introduce more randomness in the generation process in the following manner.
We choose a top\_k value of 200, controlling the number of available tokens for sampling. 
In addition, we balance the value of top\_k, making it not too small to avoid repetitive generated texts but also not too large to maintain the quality.
Finally, we fix the temperature parameter that controls the randomness in the softmax function at one. 
Indeed, a larger value increases the entropy of the distribution over tokens resulting in a more diverse generated text.

However, the quality of synthetic text depends on the prompts used. 
To mitigate this issue, we split the $\DG$ dataset into two parts: $\DG^{train}$ and $\DG^{prompt}$.
$\gG$ is fine-tuned on $\DG^{train}$, while $\DG^{prompt}$ is used for generating prompts. 
Numerically, the size of the $\DG^{train}$ is $35\%$ of the $\DT$ and the size of $\DG^{prompt}$ is $15\%$ of $\DT$.
During generation, we sample from $\DG^{prompt}$, which is a sequence of length 64. 
We feed the prompt into the generator and request the generator to generate a suffix of equivalent length. 
In our experiments, for each prompt, we make 8 synthetic samples, supplying a sufficient number of synthetic non-member samples for different parts of the pipeline of \acronym, including training the helper model, and constructing $\Douttr$, $\Doutte$. 
Overall, we generate 92276 synthetic samples out of which we use 53502 of them for training the helper model and keep the rest for audit purposes.
Note that this generation approach does not inherently favor our audit scheme, and shows how \acronym can leverage existing, public generative models with fine tuning to access high-quality generators at reasonable training cost.

We also perform a sanity check on the quality of the generated data. We visualize the loss distribution of in-distribution members, in-distribution non-members, and synthetic data under both the target and helper models, aiming for synthetic data that behaves similarly to in-distribution non-member data. \cref{fig:gpt_losses} visualizes these distributions for our target and helper models.

The \textbf{compute resources} used for fine-tuning the generator model, and synthetic text generation are 4 V100-32gb GPUs, running on cloud-hosted VMs. 
Using this setup, fine-tuning the GPT2-small model took approximately 1.5 hours, and generating the synthetic samples required about 1 hour.

{\bf Baseline \& MIA.}
For both the baseline and \acronym~classifiers, we employ a GPT-2 based sequence classification model. 
In this setup, GPT-2 extracts features (\emph{i.e.}, hidden vectors) directly from the samples.  
We concatenate a vector of features extracted using the helper $\helper$ (for the baseline) or target model $f$ (for the MIA) to this representation. 
This vector of features is the loss sequence produced in a CLM task for the respective model.
These two sets of extracted features are then concatenated before being processed through a feed-forward layer to generate the logit values required for binary classification (distinguishing between members and non-members).

In the main results in \cref{table:audit_values}, for both the baseline and \acronym~classifiers, the training and validation sets consist of 20000 and 2000 samples, respectively, with equal member non-member distribution. 
The test set consists of 10000 samples, in which the actual number of members and non-members that ended up in the test set depends on the Bernoulli samples in the auditing game.
The helper model is fine-tuned for $60$ epochs on synthetic samples and we pick the model that has the lowest validation loss throughout training, generalizing well.

As the GPT-2 component of the classifiers can effectively minimize training loss without achieving strong generalization, regularization is applied to the classifier using weight decay. 
Additionally, the optimization process is broken into two phases. 
In the first phase, we exclusively update the parameters associated with the target (in \acronym) or helper model (in the baseline) to ensure that this classifier has the opportunity to focus on these specific features. 
In the second phase, we optimize the entire model as usual. 
We train both the baseline and Membership Inference Attack (MIA) models with 5 different seeds. The randomness is over: 1. The data split into train/validation/test sets for both the baseline and MIA classifiers. Each classifier is retrained for every seed.  2. The coin flips in our auditing game (see \cref{def:auditing-game}). For a fixed seed mentioned above, we employ an additional set of 5 seeds to account for the randomness in the training algorithm (such as initialization of the model, and batch sampling). We then train a classifier for each of these seeds. 
From the 5 models generated for each latter seed, we select the one with the largest $\clb$ value (for the baseline) or $\cpepslb$ (for MIA), determined over the validation set. We calculate the 95\% confidence interval of our measurements using the t-score.

The \textbf{compute resources} used for MIA and Baseline training were 1 GPU with 32gb memory, running on cloud-hosted VMs. The compute time depends on the training set size, for the largest training size, it took about 3 hours to train the baseline or MIA.

The \textbf{compute resources} used for fine-tuning the helper model are 4 V100-32gb GPUs, running on cloud-hosted VMs. Using this setup, it took about 4 hours to fine-tune the helper model.

\subsection{Tabular data}
{\bf Target $f$ and the helper model $\mathbf{\helper.}$}
The Target $f$ and the helper $\helper$ model are both Multi-Layer Perceptron with four hidden layers containing 150, 100, 50, and 10 neurons respectively.
Both models are trained using a learning rate of 0.0005 over 100 training epochs.
In the generalized case and to obtain the embedding, we retain the model's parameters that yield the lowest loss on a validation set, which typically occurs at epoch 10 (MLPE\_10). 
For the overfitted scenario, we keep the model's state after the 100 training epochs (MLP\_E100). The non overfitted model and the overfitted one achieved an accuracy of 86\% and 82\% respectively.

{\bf Generator.}
We use the MST method~\cite{mckenna2021winning} which is a differentialy-private method to generate synthetic data.
However, as we do not need Differential Privacy for data generation we simply set the value for $\varepsilon$ to $1000$.
Synthetic data generators can sometimes produce bad-quality samples. 
Those out-of-distribution samples can affect our audit process (the detection between real and synthetic data being due to bad quality samples rather than privacy leakage). 
To circumvent this issue, we train an additional classifier to distinguish between real data from $D_G$ and additional synthetic data (not used in the audit). 
We use this classifier to remove from the audit data synthetic data synthetic samples predicted as synthetic with high confidence.

{\bf Baseline and MIA.}
To distinguish between real and synthetic data, we use the Gradient Boosting model and conduct a grid search to find the best hyperparameters.


\label{appendix:O1_exp_details}
\subsection{Comparison with Privacy Auditing with One (1) Training Run: Experimental Details}
We implement the black-box auditor version of O(1) approach \cite{steinke2023privacy}. 
This method assigns a membership score to a sample based on its loss value under the target model. 
They also subtract the sample's loss under the initial state (or generally, a randomly initialized model) of the target model, helping to distinguish members from non-members even more. 
In our instantiation of the O(1) approach, we only consider the loss of samples on the final state of the target model. 
Moreover, in their audit, they choose not to guess the membership of every sample. 
This abstention has an advantage over making wrong predictions as it does not increase their baseline.  
Roughly speaking, their baseline is the total number of correct guesses achieved by employing a randomized response $(\epsilon, 0)$ mechanism, for those samples that O(1) auditor opts to predict. 
We incorporate this abstention approach in our implementation by using two thresholds, $t_{+}$ and $t_{-}$. 
More precisely, samples with scores below $t_{+}$ are predicted as members, those above $t_{-}$ as non-members, and the rest are abstained from prediction. 
We check for different combinations of $t_{+}$ and $t_{-}$ and report the highest $\epsilon$ among them. This involves performing multiple tests on the same evidence, each with a separate confidence level of 95\%. To ensure that these tests hold collectively, we apply a union bound and adjust the significance level accordingly. The total number of separate tests depends on the different number of guesses that O(1) can make.
We also set $\delta$ to 0 and use a confidence interval of 0.05 in their test.  
In \acronym, for each hypothesis test (whether for $\clb$ or $\cpepslb$), we stick to a 0.025 confidence interval for each one, adding up to an overall confidence level of 0.05. 


\section{Results: Additional Experiments and Detailed Discussion}

\subsection{Baseline Strength Evaluation on Text and Tabular datasets}
\label{subsec:baseline_moreeval}

\begin{table} [h]
\centering
 \resizebox{0.5\textwidth}{!}{
\begin{tabular}{|l|l|}
\hline
Baseline model                               & \(c_{lb}\) \\
                    \hline
                    CIFAR-10 Baseline\textsubscript{\(D_{h}^{\text{tr}}=\text{gen, WRN}\)}       & $\mathbf{2.44 \pm 0.19 }$ \\
                    CIFAR-10 Baseline\textsubscript{\(D_{h}^{\text{tr}}=\text{real, WRN}\)}      & $2.21 \pm 0.17$ \\
                   CIFAR-10 Baseline\textsubscript{\(D_{h}^{\text{tr}}=\text{real, resnet101}\)} & $2.01 \pm 0.15$  \\
CIFAR-10 Baseline\textsubscript{\(D_{h}^{\text{tr}}=\text{imgnet}\)}          & $1.12 \pm 0.19$ \\
CIFAR-10 Baseline\textsubscript{no helper}                                    & $1.25 \pm 0.24$ \\

\hline
WikiText Baseline\textsubscript{\(D_{h}^{\text{tr}}=\text{gen}\)}      & \bm{$3.31 \pm 0.15$} \\
WikiText Baseline\textsubscript{\(D_{h}^{\text{tr}}=\text{real}\)}      & $3.26 \pm 0.14$ \\
WikiText Baseline\textsubscript{no helper}                                    & $3.11 \pm 0.15$ \\
\hline
\end{tabular}}
\label{table:baseline_eval}
\vspace{0.5cm}
\caption{Baseline evaluation with different helper model scenarios, where WRN is Wide-ResNet-28-2 helper model architecture (in the case of CIFAR10 baseline).}
\label{table:baseline_eval-detailed}
\end{table}

The basis of our privacy measurement directly depends on how well our baseline classifier distinguishes between real member data samples and synthetic non-member data samples generated by our generative model. 
As mentioned in \cref{subsec:baseline} and \cref{subsec:imgexpdetails}, in order to increase the performance of our baseline $b$, we mimic the role of the target model $f$'s loss in the MIA using a helper model $\helper$, which adds a loss-based feature to $b$. 
To assess the strength of the baseline classifier, we train the baseline's helper model $\helper$ on two datasets, synthetic data, and real non-members. 
We also train a baseline without any helper model. 
This enables us to identify which setup allows the baseline classifier to distinguish the member and non-member data the best in terms of the highest $c_{lb}$ value ($c_{lb}$ is reported for each case in table \ref{table:baseline_eval-detailed}). 
For initial analysis, we keep a sub-set of real data as non-members to train a helper model with them, for conducting this ablation study on the baseline. 
Our experiments confirm that the baseline with loss values from a helper model trained on synthetic data performs better compared to a helper model trained on real non-member data. 
This confirms the strength of our framework in terms of not having a dependency of using real non-members. Therefore, we choose this setting for our baseline helper model.
In addition, we also experiment with different helper model architectures in the case of CIFAR10, looking for the best baseline performance among them. 
For the WikiText dataset, the $\clb$ reported in \cref{table:baseline_eval-detailed} differs from that in \cref{table:audit_values} because we had less data available for this experiment. We withheld half of the real samples to train the helper model with real samples in its training set.

For each data modality setting, we also train the baseline classifier on an increasing dataset size of real member and synthetic non-member data points to gauge the trend of the baseline performance as the training dataset size increases (see Figure~\ref{fig:baseline_eval_figure} and Figure~\ref{fig:baseline-other-modalities-eval}).
This experiment 
allows us to aim for a worst-case empirical lower bound 
on how close the generated non-member data is to the real member data distribution, \emph{i.e.} what would be the largest $\clb$ we can reach as we increase the training set size. 
Subsequently, we can compare the $\clb$ reported in table \ref{table:audit_values} (corresponding to the vertical dashed lines in Figure~\ref{fig:baseline_eval_figure} and Figure~\ref{fig:baseline-other-modalities-eval}) to the largest $\clb$ we can achieve in this experiment. 
In the following, we explain why we chose that number of training samples for our baseline in the main results.

On the WikiText dataset, 
we increase the baseline's training set size up to 50k, while fixing the validation and test sets sizes to 2000 and 10000, respectively). \cref{fig:baseline-other-modalities-eval} shows the changes of $\clb$ while varying the training set size. The vertical dashed line specified the number of training samples we chose for our main auditing results reported in \cref{table:audit_values}. 

Now, let us discuss why we don't choose the largest training set size available for the baseline. The measurement for $\clb$ depends not only on the training set size but also on the test set size.
Therefore, we need to balance the sizes of the train and test sets. We conducted another experiment where we varied the train set size and, instead of fixing the test set size, allocated the remaining available data to the test set (keeping in mind that our data for the audit is limited by the number of real members allocated for audit purposes).\cref{subfig:nlp_vary_train_test} depicts the trend of changes. 
As expected, we observed an initial increase in $\clb$ as the training set size increased (and the number of test data is sufficient to capture that). However, it decreased toward the end because of the smaller test set size.
For a train set size of 40k, we found the best trade-off between train and test set sizes, which justifies our choice of 40k for the training set size in the main results. The same argument applies for other data modalities.

Let us make one last point. It is also important to note that the MIA is expected to show similar behavior if we use a larger train set size or test set size. Hence, the gap between the performance of MIA and baseline (which is our privacy measurement) is not expected to be affected (see \cref{subsec:increasing-train-size-mia}).


\begin{figure}[h!]
  \centering
  \subfigure[$c_{lb}$ as the size of the training set increases for baseline (while the test set size is 10k), trained to classify real (WikiText) and synthetic data points. 
  The vertical dashed line indicates the number of samples we used to train the baseline in our main results, reported in Table~\ref{table:audit_values}.]
     {\includegraphics[width=0.45\linewidth]{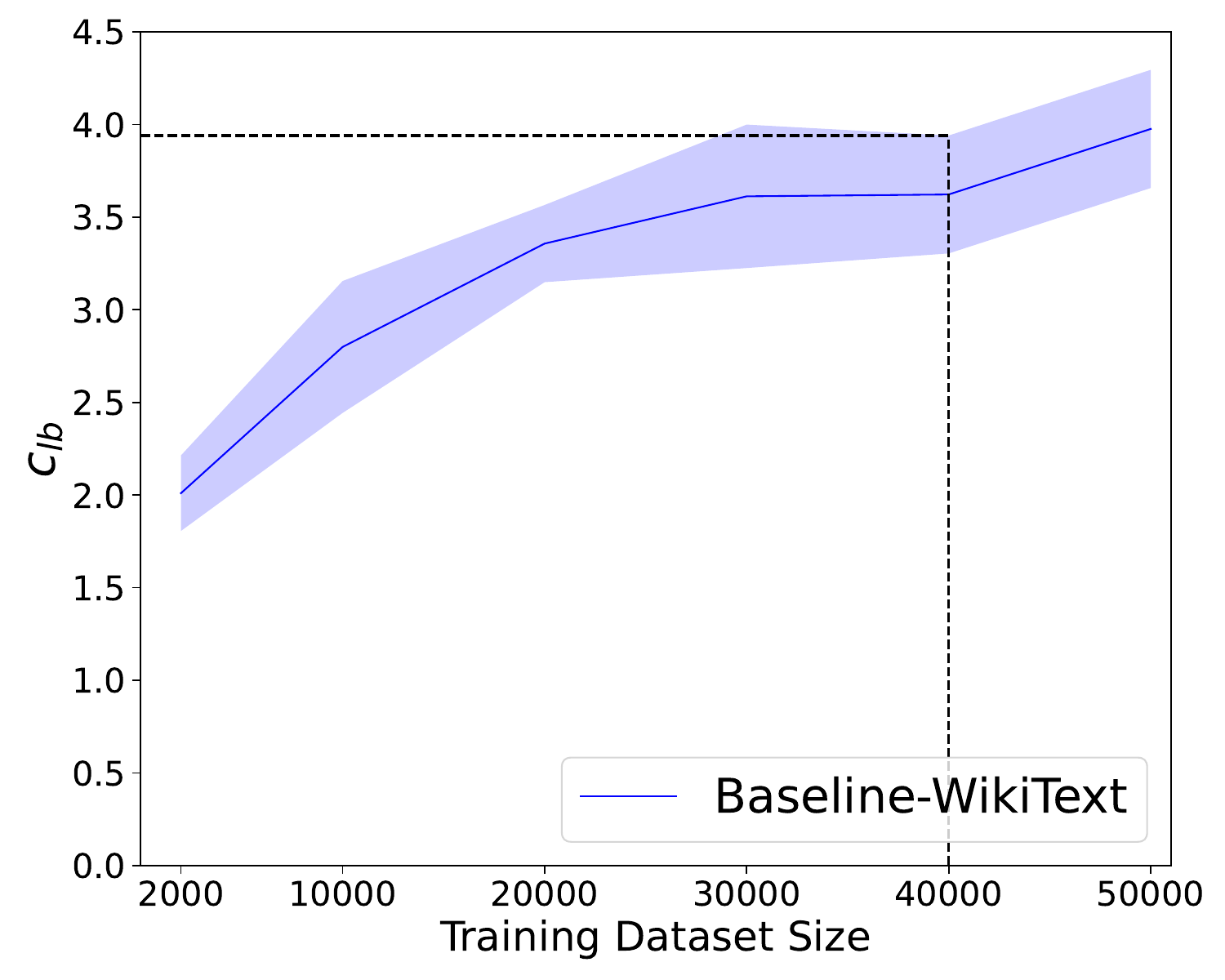}
     \label{subfig:nlp_plateau}
     }
  \hfill
  \subfigure[$c_{lb}$ as the size of the training set increases for baseline, while the test set size is changing at the same time. The number of test samples for a given train set size is specified by the total number of samples we have for the audit ($\Din$), which is limited by the number of real members.]{
  \includegraphics[width=0.45\linewidth]{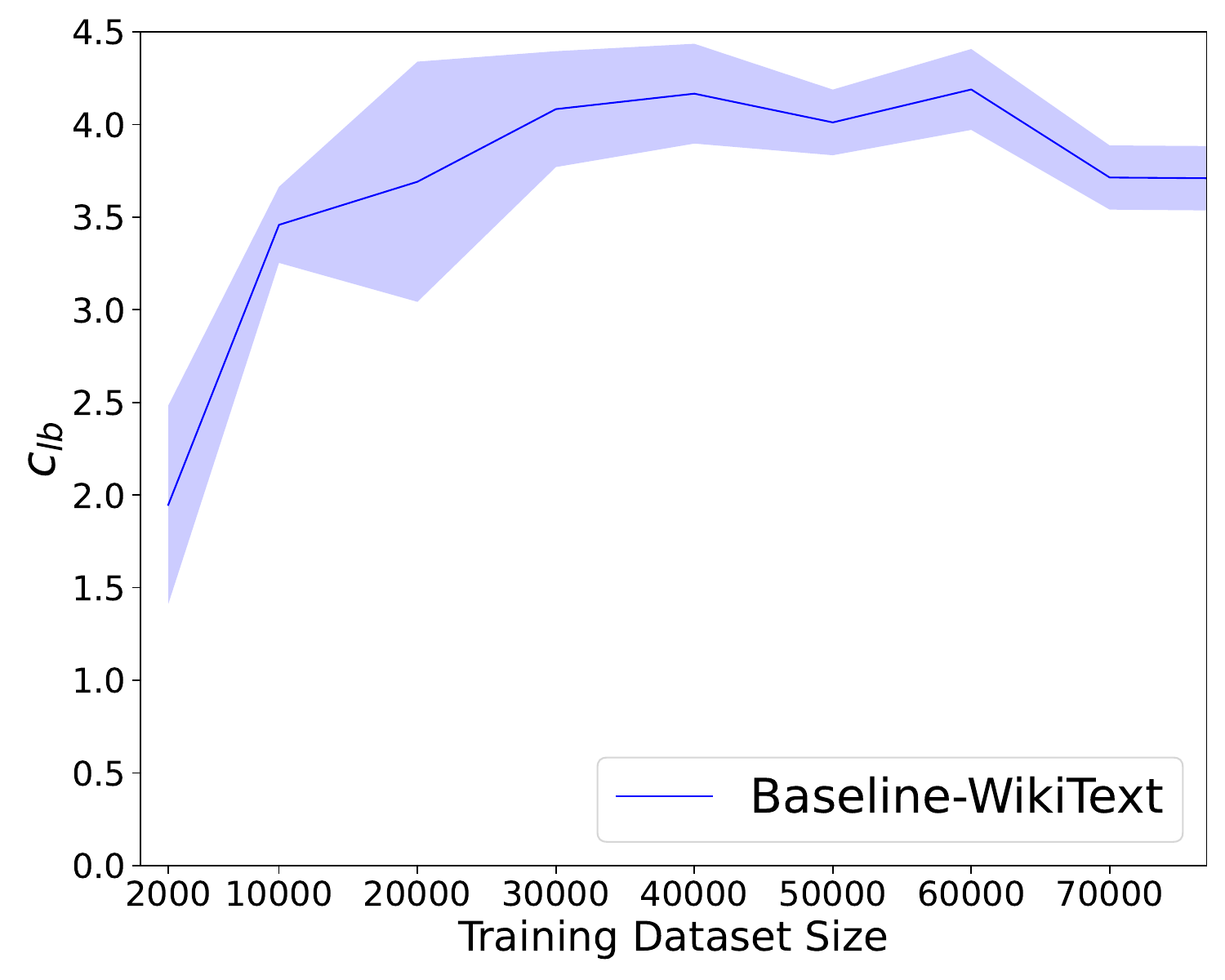}
     \label{subfig:nlp_vary_train_test}
  }

  
 \caption{Baseline Strength Evaluation on WikiText dataset as the train set size increases. In \cref{subfig:nlp_plateau}, the test set size is fixed, while in \cref{subfig:nlp_vary_train_test} it varies as well. See \cref{subsec:baseline_moreeval} for more details.} 
  \label{fig:baseline-other-modalities-eval}
\end{figure}

The experiments are repeated five times, and we provide a 95\% confidence interval. For the image data modality, the randomness lies in the training data provided to the classifiers, while the test set remains fixed throughout. For the text data modality, both the train and test sets change across the five runs (the test set size is fixed to 10k).

\begin{figure*}[ht]
  \centering

    \subfigure[Number of predictions of ResNet101 on CIFAR10]{\includegraphics[width=0.23\textwidth]{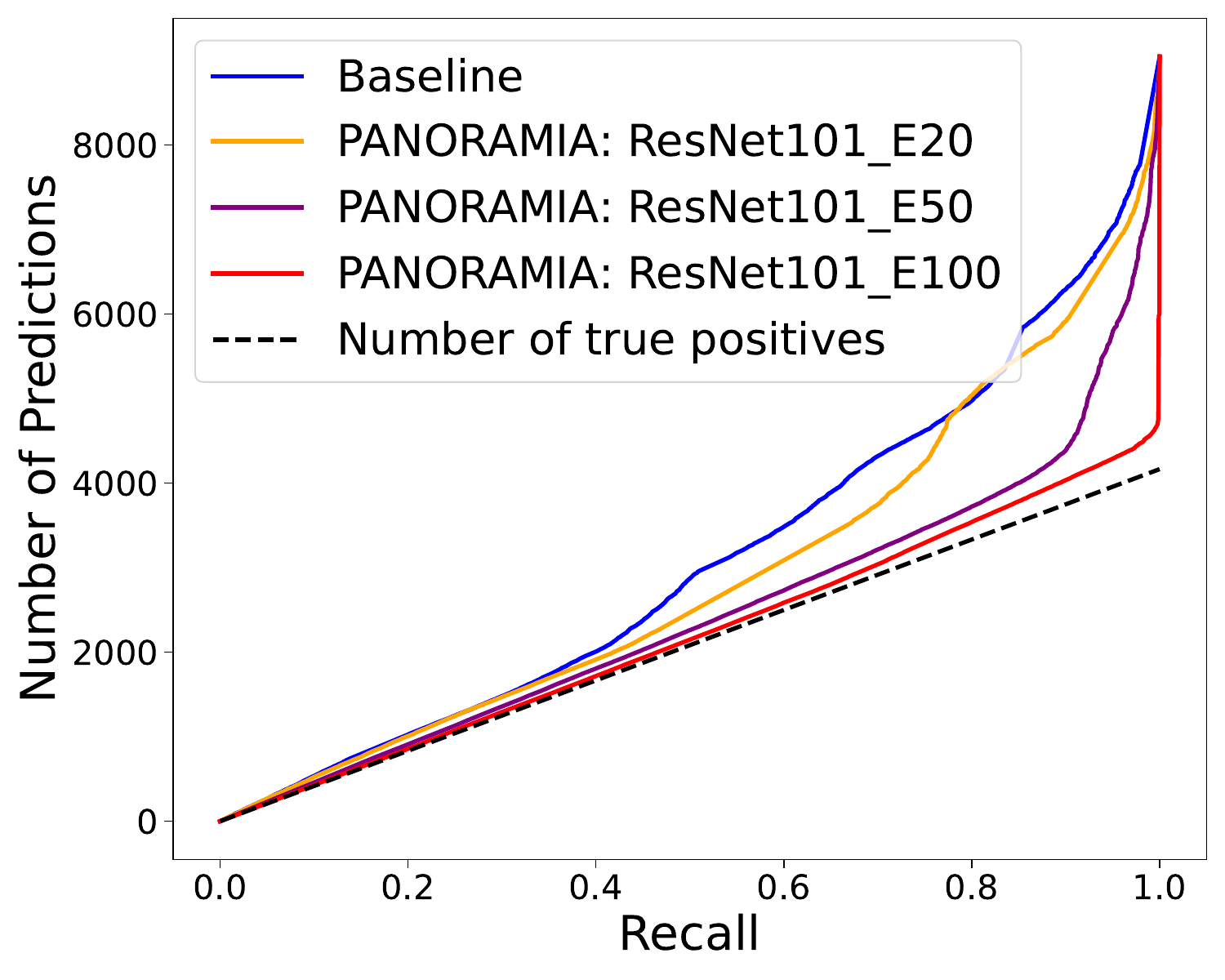}
    \label{fig:image3}
    }
    \subfigure[Number of predictions of Multi-Label CNN on the CelebA]{\includegraphics[width=0.23\textwidth]{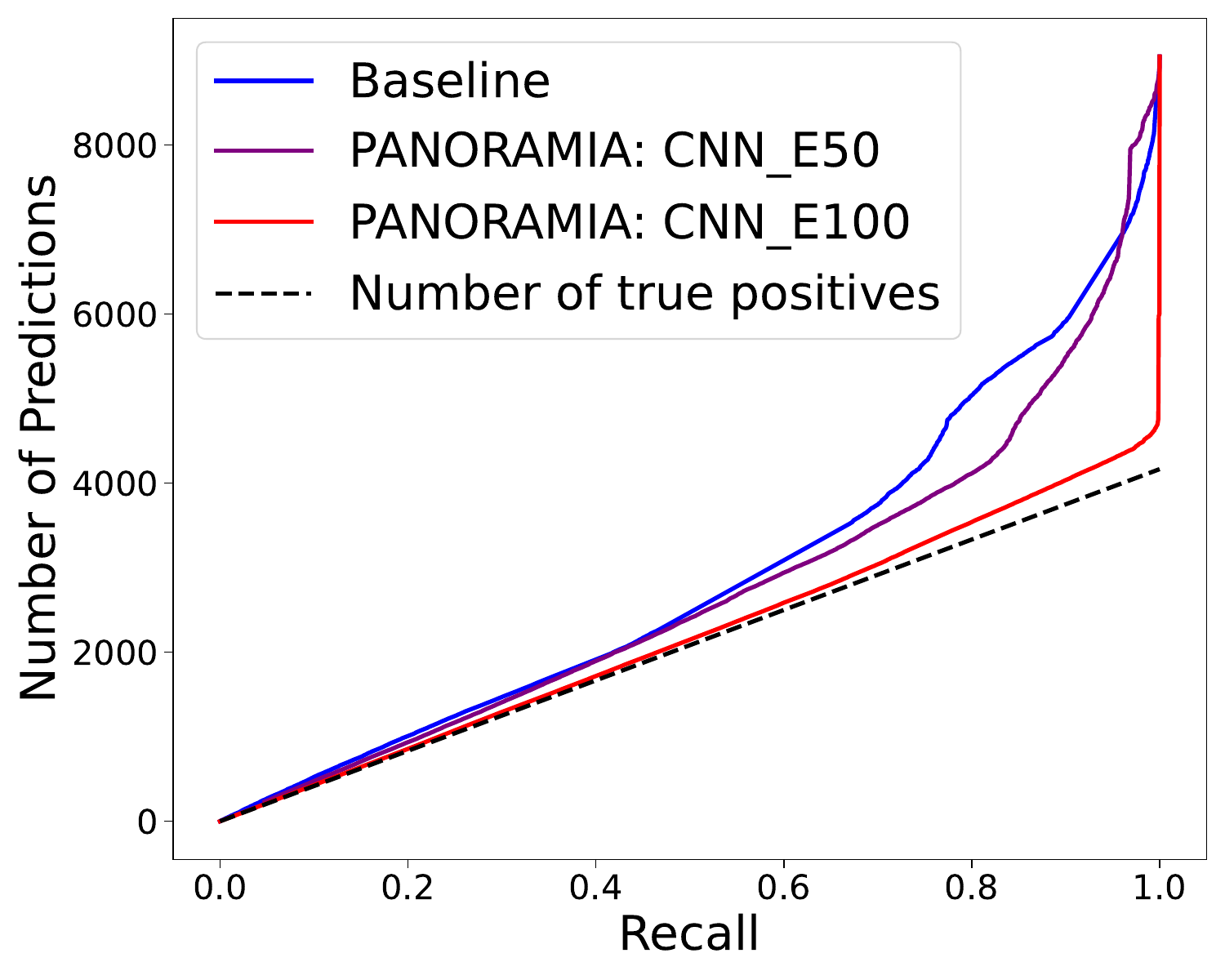}
    \label{fig:image3}
    }
    \subfigure[Number of predictions of GPT-2 model on WikiText]{\includegraphics[width=0.23\textwidth]{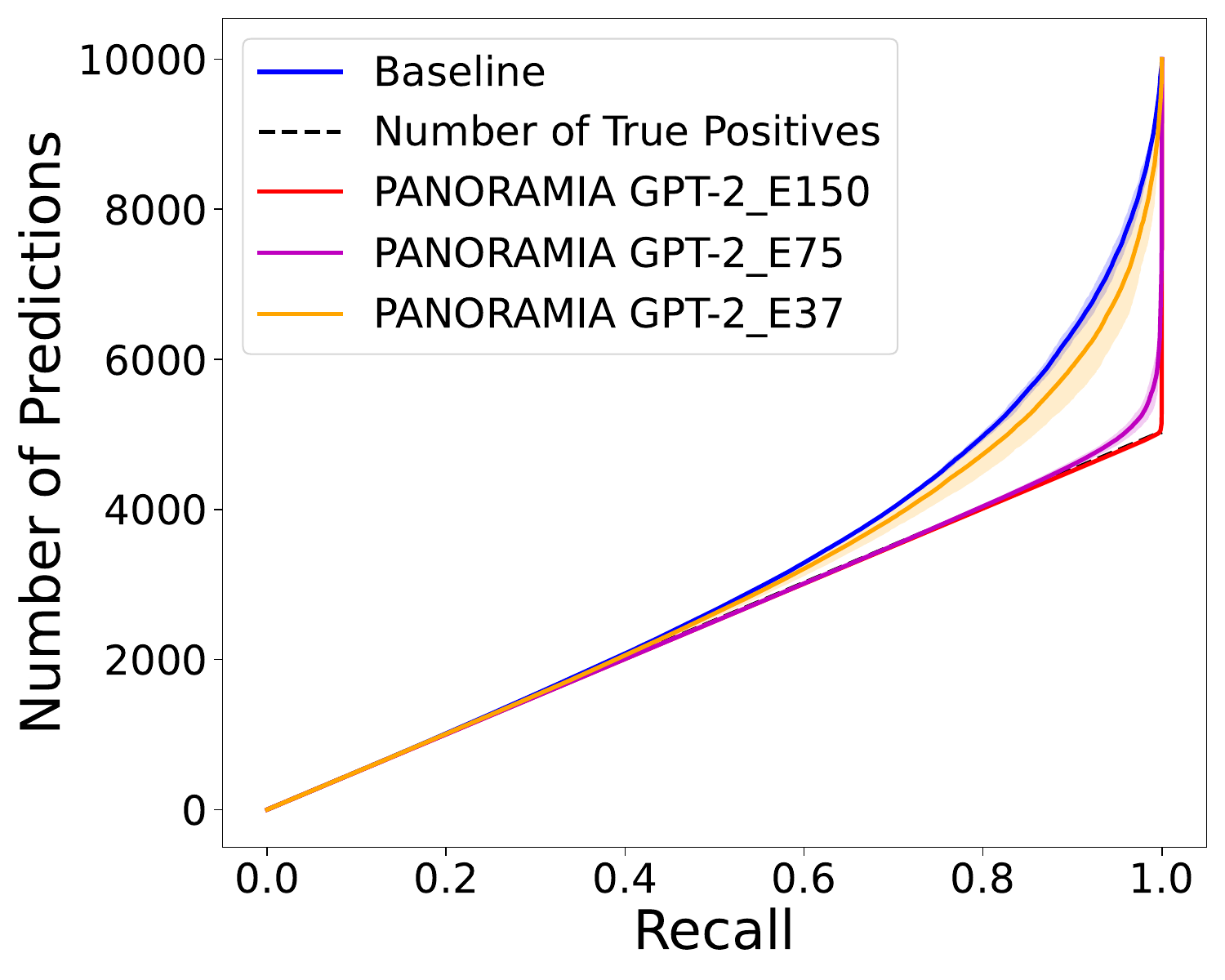}
    \label{fig:gpt2_numpred_recall}
    }
    \subfigure[Number of predictions of an MLP on the Adult dataset.]{\includegraphics[width=0.23\textwidth]{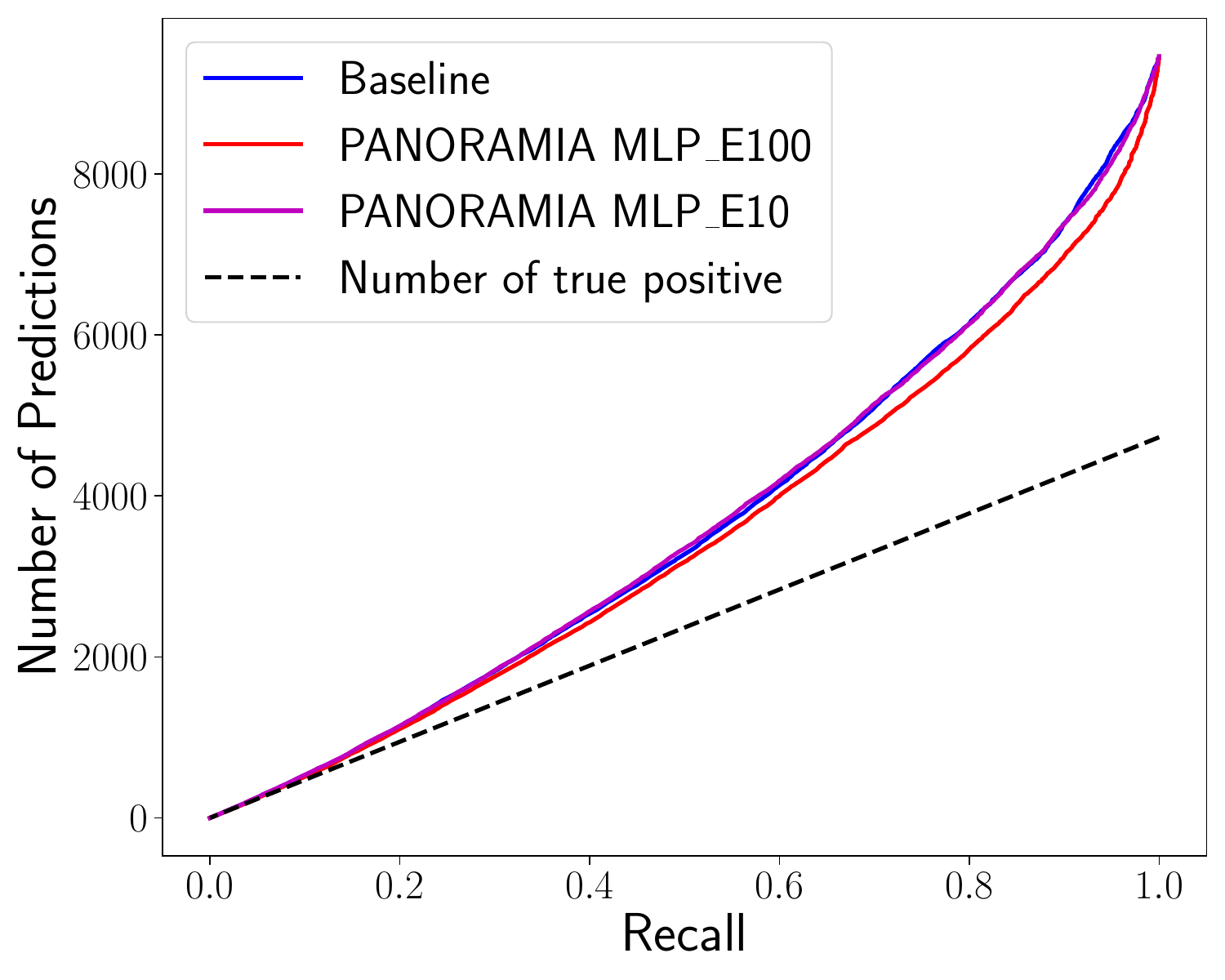}
    \label{fig:image2}
    }
    
  \caption{Comparison of the number of true positives and predictions on different datasets}
  \label{fig:preds_comparison}

\end{figure*}

\subsection{Further Analysis on General Privacy Measurement Results}
\label{subsec:general-results}

The value of $\tilde\epsilon$ for each target model is the gap between its corresponding dashed line and the baseline one in Figure~\ref{fig:eval:comparison_privacy}. 
This allows us to compute values of $\tilde\epsilon$ reported in Table~\ref{table:audit_values}.
It is also interesting to note that the maximum value of $\cpepslb$ typically occurs at low recall. 
Even if we do not use the same metric (precision-recall as opposed to TPR at low FPR) this is coherent with the findings from~\cite{carlini2022membership}. 
We can detect more privacy leakage (higher $\cpepslb$ which leads to higher $\tilde\epsilon$) when making a few confident guesses about membership rather than trying to maximize the number of members found (\emph{i.e.}, recall). Figure \ref{fig:preds_comparison}, decomposes the precision in terms of the number of true positives and the number of predictions for a direct mapping to the propositions~\ref{prop:gen-test} and~\ref{prop:dp-test}. 
These are non-negligible values of privacy leakage, even though the true value is likely much higher.

\subsection{\acronym performance with increasing Test and Train Size $\mathbf{m}$}
\label{subsec:increasing-m-size}
\textbf{\acronym: Increasing Test Set Size:} We compare \acronym with the O(1) audit \citep{steinke2023privacy} (in input space, without canaries), which uses a loss threshold MIA. As shown in \cref{subfig:main-test-inc-resnet100} and \cref{subfig:panoramia_vs_O1_m_dp_increase} in the CIFAR10 image dataset case, both for non-private and Differentially Private (DP) models, \acronym can leverage much more data for its measurement (up to the whole training set size for training and testing the MIA), while in our setting O(1) is limited by the size of the test set (for non-members). We vary test size from $m=500$, up to $35k$. The experiment is repeated $10$ times with resampling of the test set each time to produce the average. Unless otherwise stated, all results are produced at a $95\%$ confidence.

\subsection{\acronym with Increasing Train Dataset Size}
\label{subsec:increasing-train-size-mia}
We also conduct experiments to show the effect of increasing training dataset size on the performance of \acronym (as well as the baseline). In the CIFAR10 image dataset case, \cref{fig:inc-train-size}, we see \acronym performance slowly increase as the train dataset size increases. It reaches its max performance at dataset sizes $4500-5000, 10k, 20k-30k$ and slowly begins to stabilize at the higher dataset sizes. We vary train size from $m=500$, up to $30k$. The experiment is repeated $5$ times with resampling of the train set each time to produce the average. Unless otherwise stated, all results are produced at a $95\%$ confidence.

\begin{figure}[htb]
  \centering
 {\includegraphics[width=0.55\textwidth]{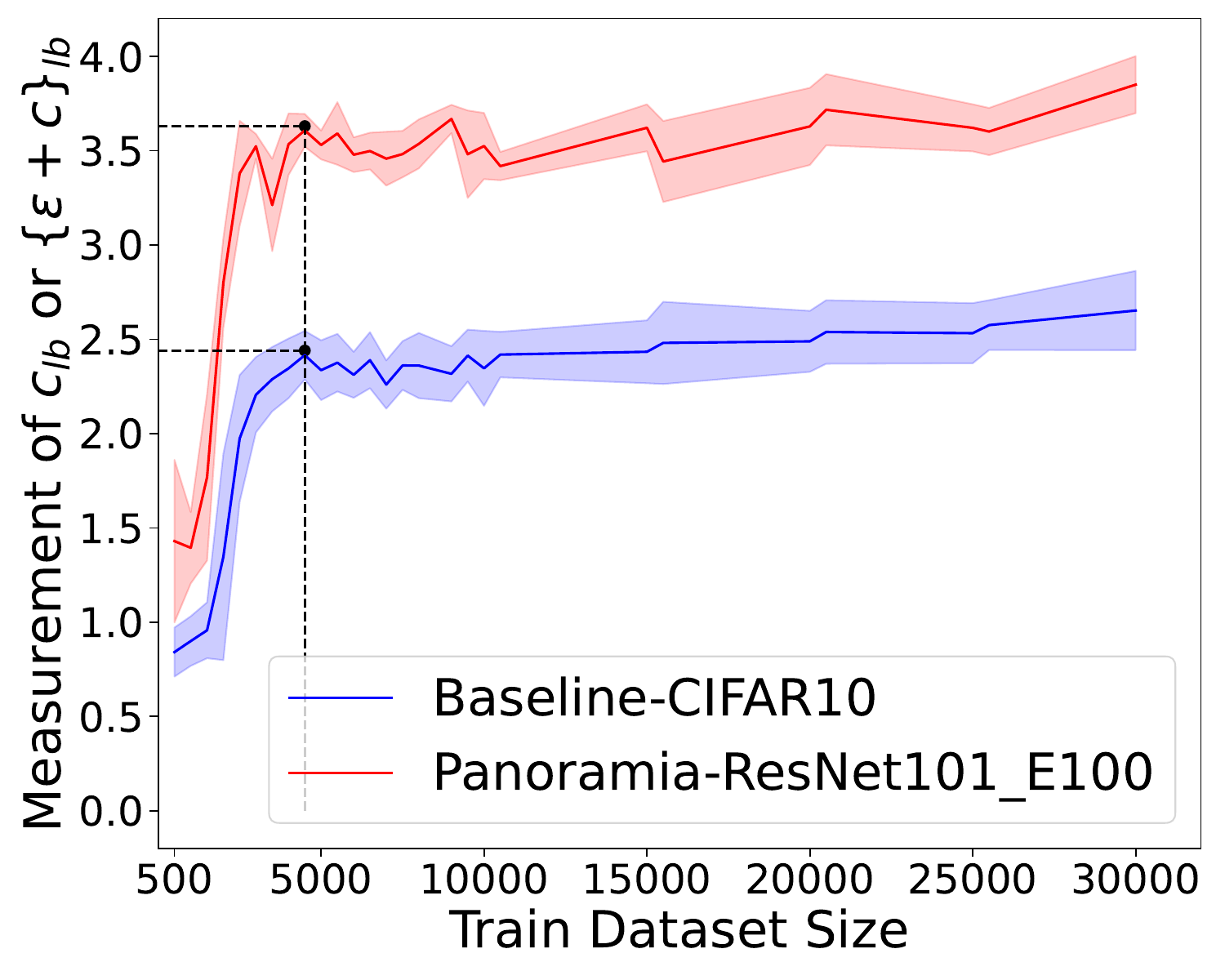} \label{subfig:var_complexities}}

 \caption{Effect on \acronym and baseline with increasing train set size for CIFAR10, ResNet101 target model (trained for 100 epochs) (with union bound correction).
}
  \label{fig:inc-train-size}
\end{figure}
\subsection{\acronym in cases of no privacy leakage}
\label{subsec:no-priv-leak}


As illustrated in Figure \ref{fig:gpt2_privacy_recall} and \cref{table:audit_values}, there exist instances in which the $\clb$ is as large as $\cpepslb$. 
For instance, for GPT-2\_E37, the upper bound in the confidence interval measured for $\clb$ is larger than the lower bound of the measurement for $\cpepslb$. Although the measured $\tilde \epsilon$ is positive on average, but the lower bound of the interval is negative. For such cases, we don't claim any detection of privacy leakage. 
This implies that, under certain conditions, the baseline could outperform the MIA in \acronym. 
We can see three possible reasons for this behavior.  
(1) The synthetic data might not be of high enough quality. If there is an evident distinction between real and synthetic samples, the baseline cannot be fooled reducing the power of our audit. 
(2) The MIA in \acronym is not powerful enough. 
(3) The target model's privacy leakage is indeed low. 
For GPT-2\_E150 and GPT-2\_E75 in Figure~\ref{fig:gpt2_privacy_recall}, we can claim that given the target model's leakage level, the synthetic data and the MIA can perform the audit to a non-negligible power. 
For GPT-2\_37, we need to examine the possible reasons more closely.
Focusing on generated data, in the ideal case we would like the synthetic data to behave just like real non-member data on Figure~\ref{subfig:lossdist-gpt-2E12}. 
If this was the case, $\cpepslb$ would go up (since the MIA uses this loss to separate members and non-members).
Nonetheless, predicting changes in $\clb$ remains challenging, as analogous loss distributions do not guarantee indistinguishability between data points themselves (also, we cannot necessarily predict how the loss distribution changes under the helper model $\helper$ in Figure~\ref{subfig:lossdist-h}).
Hence, reason (1) which corresponds to low generative data quality, seems to be a factor in the under-performance of \acronym.
Moreover, in the case GPT-2\_37,  we observe that real members and non-members have very similar loss distributions.
Hence, factor (2) also seems to be at play, and a stronger MIA that does not rely exclusively on the plot may increase auditing power. 
Finally, the helper model $\helper$ seems to have a more distinct loss distribution when compared to GPT-2\_37 (see \cref{subfig:lossdist-h}). 




\subsection{\acronym with Tabular data}
\label{appendix:tabular-results}

\begin{table}[h!]
\centering
\begin{tabular}{|l|l|l|}
\hline
 & \acronym $\tilde{\varepsilon}$ & O(1) $\varepsilon$ \\
\hline
MLP\_E10 & 0 $\pm$ 0.142 & 0.655 $\pm$ 0.014 \\
MLP\_E100 & 0.062 $\pm$ 0.156 & 0.648 $\pm$ 0.015 \\
\hline
\end{tabular}
\vspace{0.3cm}
\caption{Privacy measurement with \acronym and O(1) for models trained on tabular Adult dataset.}
\label{table:results_tab}
\end{table}

As we can see in Table \ref{table:results_tab}, our proposed method \acronym cannot detect any meaningful privacy leakage in the case of a small MLP classifier trained on tabular data.

As shown in Figure \ref{fig:privacy_recall_tab} there is no significant difference between the values$(c+\varepsilon)_{lb}$  $c_{lb}$ returned by $\acronym$. As a result we get $\tilde{\varepsilon} =0$

\begin{figure} [H]
  \centering  
\subfigure[precision vs recall, for tabular one instance of the tabular data classification audit.]{
\includegraphics[width=0.45\linewidth]{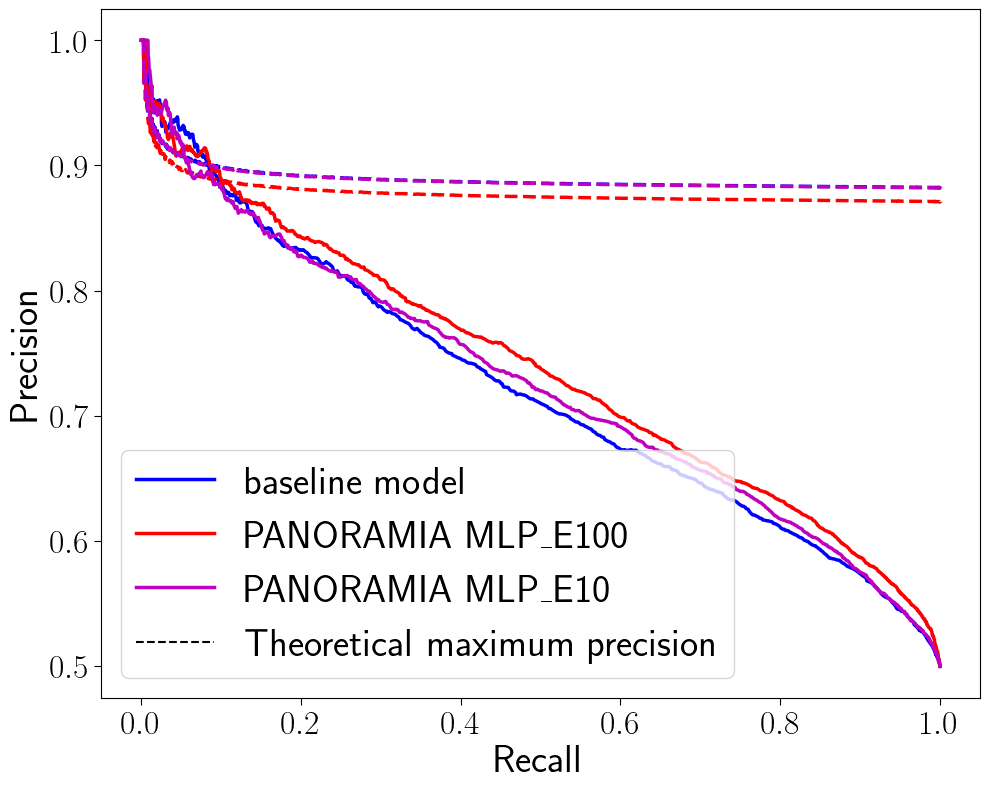}
\label{fig:precision_recall_tab}
}
\subfigure[Average $\{c+\epsilon\}_{\textnormal{lb}}$ (or $\clb$) vs recall with 95\% confidence, for tabular data classification. Randomness comes from resampling of the MIA's train and test set as well as retraining the MIA]{
\includegraphics[width=0.45\linewidth]{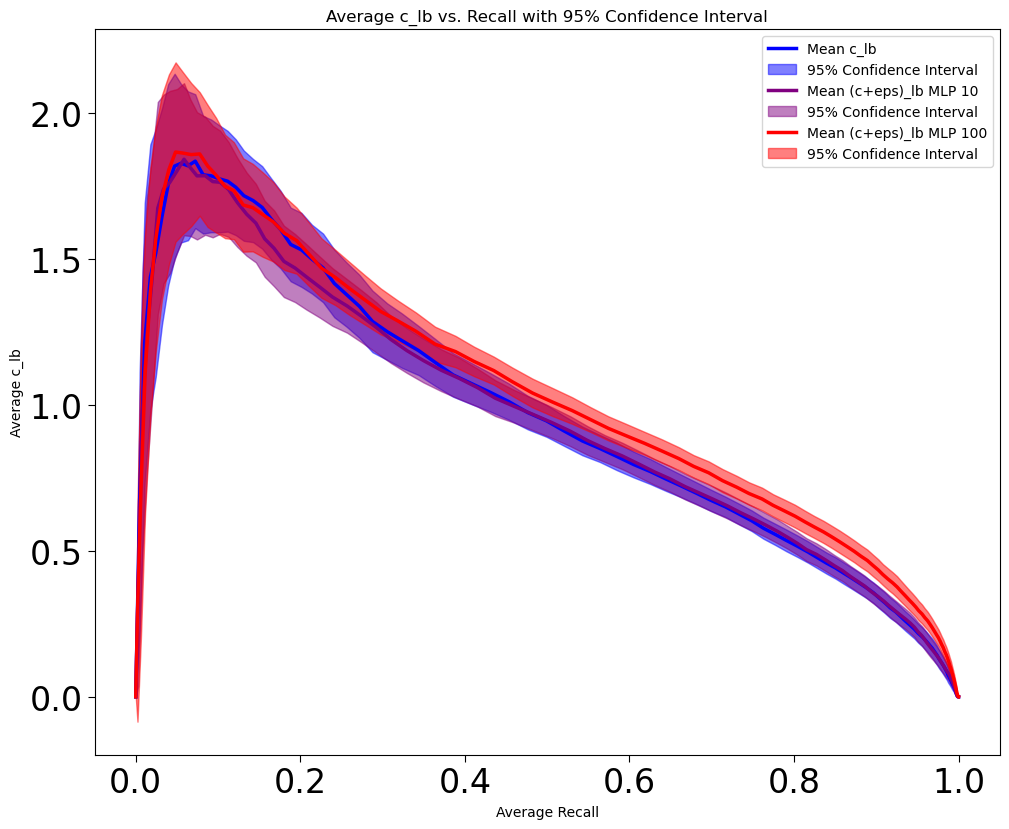}
\label{fig:privacy_recall_tab}
}
\caption{\acronym on tabular Adult dataset, MLP target model.}  
\end{figure}

The high precision at low recall in the baseline as can be seen in Figure \ref{fig:precision_recall_tab} illustrates the fact that some real data can easily be distinguished from the synthetic ones. 
Instead of poor quality synthetic data
, we are dealing with real observation that are actually ``looking to real''. 
Investigating those observations reveals that they have extreme or rare values in one of their features. 
Indeed, the Adult dataset contains some extreme values (\emph{e.g.}, in the capital\_gain column), such extreme values are rarely reproduced by the generative model. 
This leads to the baseline accurately predicting such observations as real with high confidence that translates into a high $c_{lb}$. 
Moreover, the audit values maximum values $(c+\varepsilon)_{lb}$  and $c_{lb}$ being achieved for a relatively small number of guesses corresponding to extreme values, the audits results vary a lot depending on the sampling or not of such extreme values in the audit test set. We believe this is part of the reason why we observe an high variance in the audit results leading to no significant privacy leakage detected by $\acronym$.

This is a potential drawback of our auditing scheme, while the presence of extreme or rare values in the dataset usually allows for good auditing results it can have a negative impact on \acronym's success.

\subsection{Privacy Auditing of Overfitted ML Models}
\label{appendix:sect_results_overfitting}
{\bf Methodology.}
Varying the number of training epochs for the target model to induce overfitting is known to be a factor in privacy loss~\cite{yeom2018privacy, LIRA}.
As discussed in Section~\ref{subsec:main-auditing-results}, since these different variants of target models share the same dataset and task, \acronym can compare them in terms of privacy leaking.

To verify if $\acronym$ will indeed attribute a higher value of $\tilde \epsilon$ to more overfitted models, we train our target models for varying numbers of training epochs. 
The final train and test accuracies are reported in Table~\ref{table_metrics}. 
For GPT-2 (as a target model), we report the train and validation loss as a measure of overfitting in Figure~\ref{fig:gpt-2-target-overtraining-loss}.
Figures ~\ref{fig:celeb_losses} and ~\ref{fig:gpt_losses} show how the gap between member data points (\emph{i.e.}, data used to train the target models) and non-member data points (both real as well as generated non-members) increases as the degree of overfitting increases, in terms of loss distributions. 
We study the distribution of losses since these are the features extracted from the target model $f$ or helper model $\helper$, to pass respectively to \acronym and the baseline classifier. 
The fact that the loss distributions of member data become more separable from non-member data for more overfitted models is a sign that the corresponding target model could leak more information about its training data.
\vspace{-0.07\baselineskip}
We thus run \acronym on each model, hereafter presenting the results obtained.
\begin{figure}[h]
  \centering
  \subfigure[CelebA Multi-Label CNN\_E30.]{\includegraphics[width=0.48\linewidth]{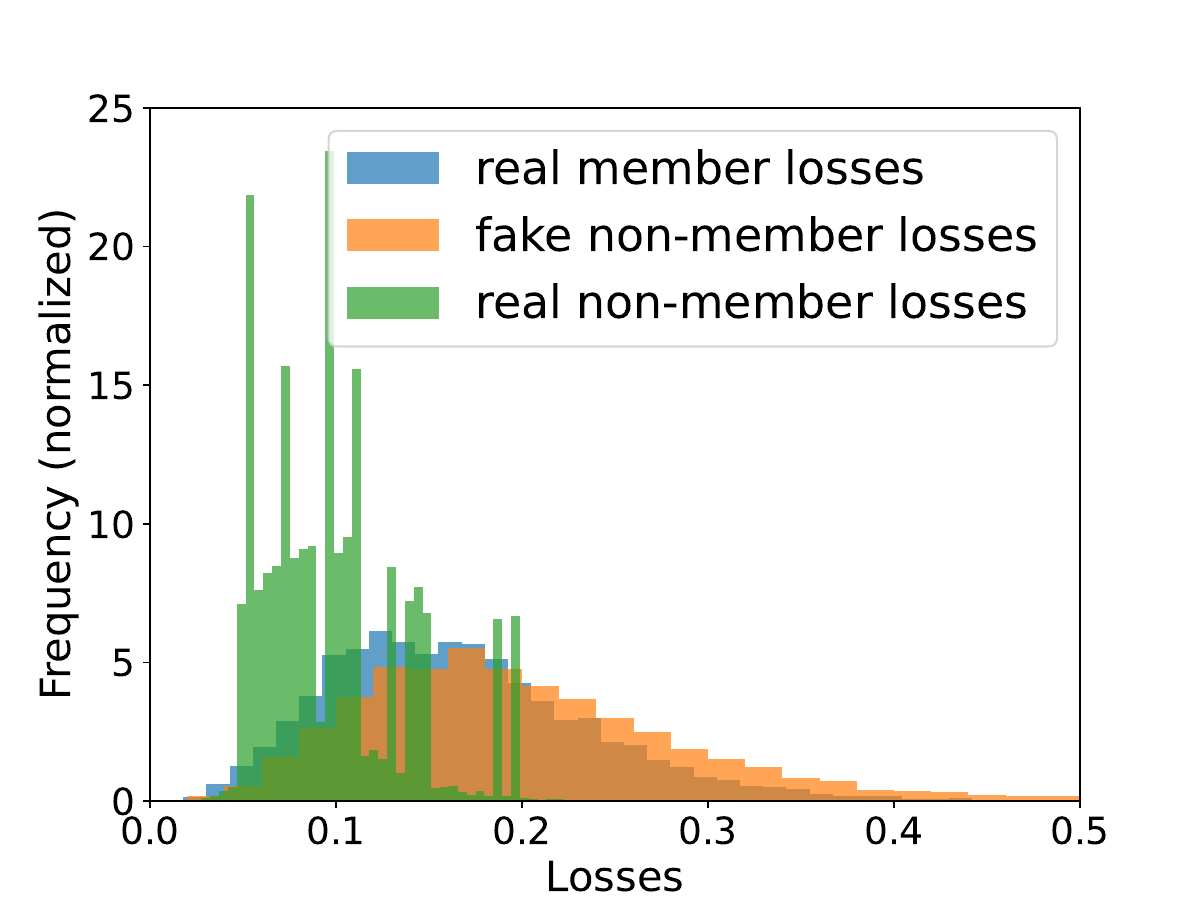}}
  \hfill
  \subfigure[CelebA Multi-Label CNN\_E100; an overfit model.]{\includegraphics[width=0.48\linewidth]{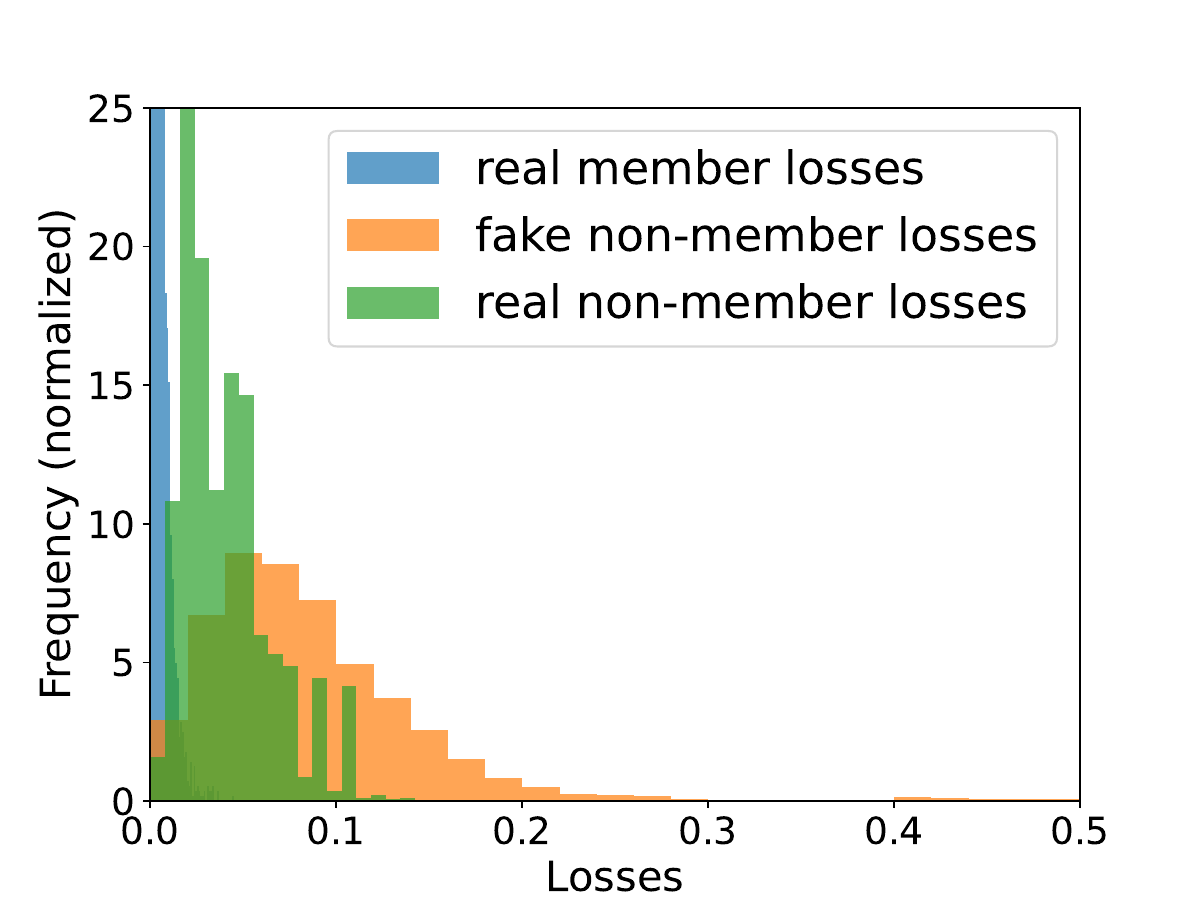}}
  \caption{CelebA Multi-Label CNN Loss Comparisons for a generalized vs overfitted model.}
  \label{fig:celeb_losses}
\end{figure}
\begin{figure}
  \centering
    \subfigure[GPT-2\_E37]{\includegraphics[width=0.45\linewidth]{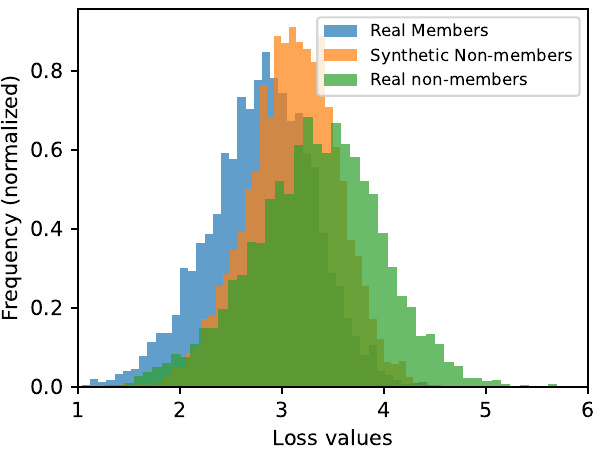}
  \label{subfig:lossdist-gpt-2E12}
  }
  \hfill
  \subfigure[GPT-2\_E75]{\includegraphics[width=0.45\linewidth]{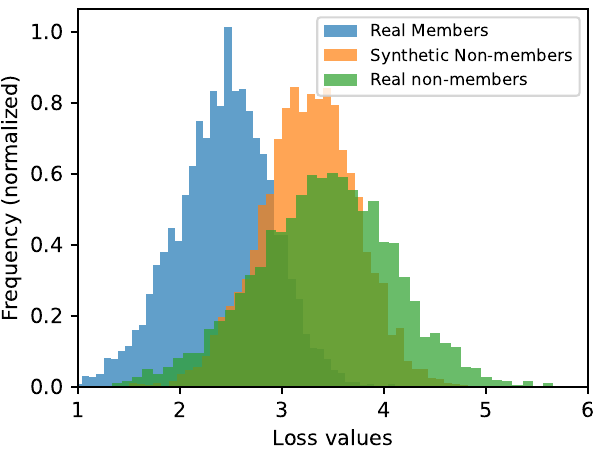}
  \label{subfig:lossdist-gpt-2E46}
  }
  \subfigure[GPT-2\_E150]{\includegraphics[width=0.45\linewidth]{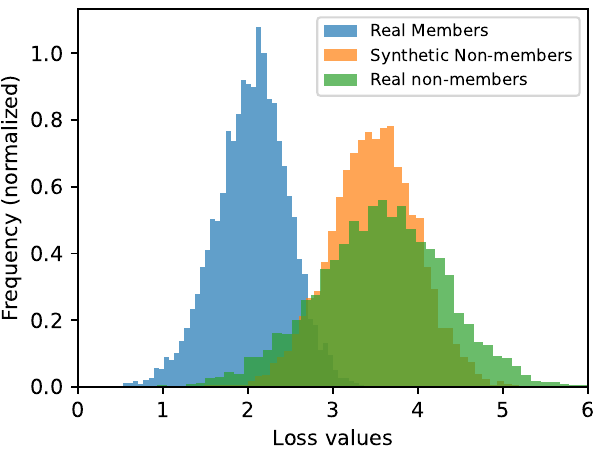}
  \label{subfig:lossdist-gpt-2E92}
  }
  \hfill
  \subfigure[Helper model $\helper$]{\includegraphics[width=0.45\linewidth]{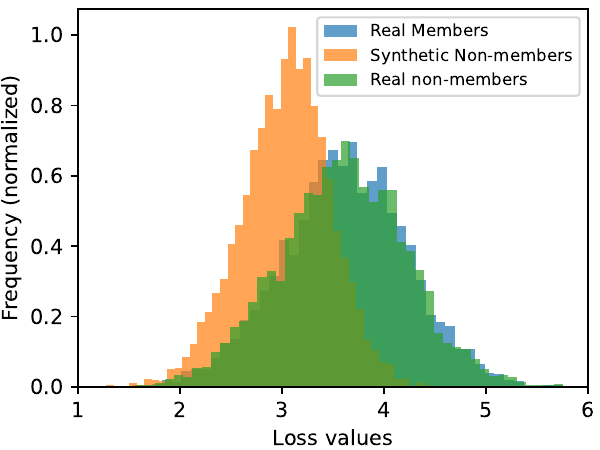}
  \label{subfig:lossdist-h}
  }
  \caption{Comparison of the loss distributions of real members, real non-members, and synthetic non-members under three target models while varying the degree of over-training on the WikiText dataset.  
  Figure~\ref{subfig:lossdist-h} compares the loss distributions under the helper model, the model providing side information to our baseline. 
  We train the helper with some other synthetic samples, which effectively mimic real non-members' loss distributions under the target models. 
  However, they are distinguishable to some extent from real non-members under the helper model, thus increasing our $\clb$.}
  
  \label{fig:gpt_losses}
\end{figure}

{\bf Results.} 
In Figure~\ref{fig:eval:comparison_precision}, we observe that more training epochs (\emph{i.e.}, more overfitting) lead to better precision-recall trade-offs and higher maximum precision values. 
Our results are further confirmed by Figure~\ref{fig:preds_comparison} with \acronym being able to capture the number of member data points (positive predictions) better than the baseline $b$.

In Table \ref{table:audit_values}, we further demonstrate that our audit output $\tilde\epsilon$ orders the target models in terms of privacy leakage: higher the degree of overfitting, more memorization and hence a higher $\tilde\epsilon$ returned by \acronym. 
From our experiments, we consistently found that as the number of epochs increased, the value of $\tilde\epsilon$ also increased. 
For instance, in the WikiText dataset, since we observe $\cpepslb^{\text{GPT-2\_E150}} > \cpepslb^{\text{GPT-2\_E75}}  > \cpepslb^{\text{GPT-2\_E37}}$, we would have $\cpepslb^{\text{GPT-2\_E150}} - \clb > \cpepslb^{\text{GPT-2\_E75}} - \clb  > \cpepslb^{\text{GPT-2\_E37}} - \clb$, which leads to $\tilde\epsilon_{\text{GPT-2\_E150}} > \tilde\epsilon_{\text{GPT-2\_E75}} > \tilde\epsilon_{\text{GPT-2\_E37}}$. 
Our experiment is coherent with the intuition that more training epochs lead to more over-fitting, leading to more privacy leakage measured with a higher value of $\tilde \epsilon$.
\begin{figure}[H]
  \centering
  {\includegraphics[width=0.5\linewidth]{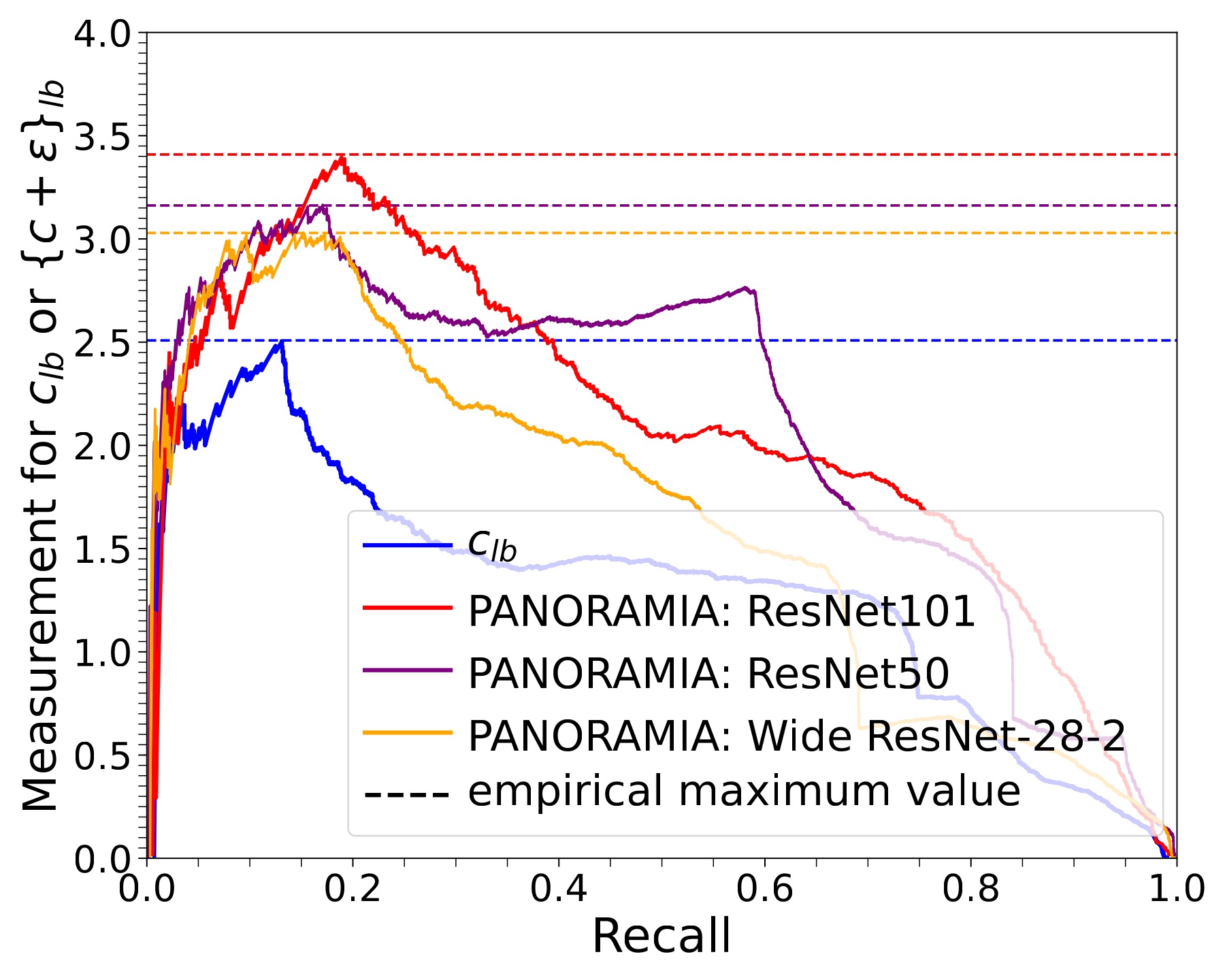} \label{subfig:var_complexities}}
      \hfill
   \caption{CIFAR-10, $\{c+\epsilon\}_{lb}$ when varying privacy leakage with increasing model parameters over one experiment run.}
  \label{fig:var_complexities}
\end{figure}

\subsection{Models of varying complexity:} 
\citet{carlini2021extracting} have shown that larger models tend to have bigger privacy losses. 
To confirm this, we also audit ML models with varying numbers of parameters, from a $\approx 4M$ parameters Wide ResNet-28-2, to a $25.5M$ parameters ResNet-50, and a $44.5M$ parameters ResNet-101.
\cref{fig:var_complexities} shows that \acronym does detect increasing privacy leakage, with $\Tilde{\epsilon}_{wide-resnet} \leq \Tilde{\epsilon}_{resnet50}\leq \Tilde{\epsilon}_{resnet101}$.
\subsection{Privacy Auditing of Differentially-Private ML Models}
\label{appendix:dp_sup_results}

{\bf Methodology.} 


For the image data modality, we evaluate the performance of \acronym on differentially-private ResNet-18 and Wide-ResNet-16-4 models on the CIFAR10 dataset under different target privacy budgets ($\epsilon$) with $\delta=10^{-5}$ and the non-private ($\epsilon=\infty$) cases. 
The models are trained using the DP-SGD algorithm~\citep{abadi2016deep} using Opacus \citep{opacus}, which we tune for the highest train accuracy on learning rate $lr$, number of epochs $e$, batch size $bs$ and maximum $\ell_2$ clipping norm (C) for the largest final accuracy. 


For the text data modality, we evaluate the performance of \acronym on the differentially private GPT-2-Small model, which has been fine-tuned on the WikiText dataset (the preprocessing steps are identical to those in \cref{subsec:llmfullexpdetails}). As with the image modality, we use various target privacy budgets ($\epsilon$) with $\delta=10^{-5}$, as well as non-private settings ($\epsilon=\infty$). We train the models using the DP-SGD algorithm~\cite{abadi2016deep}, incorporating Ghost clipping~\cite{li2021large} for calculating the per-example gradient. We  minimize the validation loss by tuning hyperparameters such as the learning rate ($lr$), number of epochs ($e$), batch size ($bs$), and clipping norm ($C$). The lowest validation loss values obtained are $3.79$, $3.7$, and $3.65$ for $\epsilon$ values of 1, 3, and 10, respectively. For the non-private GPT-2-Small model, the lowest validation loss achieved is $3.341$.

The noise multiplier $\sigma$ is computed given $\epsilon$, number of epochs and batch size. 
We use both \acronym and O(1)~\citep{steinke2023privacy} privacy loss measurements under the pure $\epsilon$-DP analysis.


{\bf Results.} 
For the image data modality, tables~\ref{tab:dpResNet18-table} and~\ref{tab:dpWideRes-table} summarizes the auditing results of \acronym on different DP models. 
For ResNet-18, we observe that at $\epsilon=1, 2, 4, 6$ (more private models) \acronym detects no privacy loss, whereas at higher values of $\epsilon=10, 15, 20$ (less private models) and $\epsilon=\infty$ (a non-private model) \acronym detects an increasing level of privacy loss with $\tilde{\epsilon}_{\epsilon=10} < \tilde{\epsilon}_{\epsilon=15} <  \tilde{\epsilon}_{\epsilon=20} < \tilde{\epsilon}_{\epsilon=\infty}$, suggesting a higher value of $\epsilon$ correspond to higher $\tilde \epsilon$. 
We observe a similar pattern with Wide-ResNet-16-4, in which no privacy loss is detected at $\epsilon=1,2$ and higher privacy loss is detected at $\epsilon=10,15,20,\infty$.
We also compare the auditing performance of \acronym with that of O(1)~\citep{steinke2023privacy}, with the conclusion drawn by these two methods being comparable. 
For both ResNet-18 and Wide-ResNet-16-4, O(1) reports values close to 0 (almost a random guess between members and non-members) for $\epsilon<10$ DP models, and higher values for $\epsilon=10,15,20,\infty$ DP models.
The results suggest that \acronym is potentially an effective auditing tool for DP models that has comparable performance with O(1) and can generalize to different model structures.

For the text data modality, \cref{table:audit-nlp-dp-target-models} summarizes the auditing results of \acronym on differentially private (fine-tuned) Large Language Models. In all cases, neither \acronym not O(1)~\cite{steinke2023privacy} do not detect any privacy loss. 

\begin{table}[h]
    \centering
    \resizebox{0.65\textwidth}{!}{
    \begin{tabular}{|l|l|l|l|l|l|}
        \hline
        \textbf{Target model} & \textbf{Audit} & $\mathbf{c_{lb}}$ & $\mathbf{\varepsilon + c_{lb}}$ & $\mathbf{\tilde{\varepsilon}}$ & $\mathbf{\varepsilon}$ \\
        \hline
        ResNet18 $\epsilon=\infty$ & \acronym RM;GN & 2.44 & 3.87 & 1.43 & - \\
                             & O (1) RM;RN & - & - & - & 1.471 \\
        \hline
        ResNet18 $\epsilon=20$& \acronym RM;GN & 2.44 & 3.6331 & 1.19 & - \\
                           & O (1) RM;RN & - & - & - & 1.34 \\
        \hline
        ResNet18 $\epsilon=15$& \acronym RM;GN & 2.44 & 3.57 & 1.13 & - \\
                           & O (1) RM;RN & - & - & - & 1.22 \\
        \hline
        ResNet18 $\epsilon=10$& \acronym RM;GN & 2.44 & 2.70 & 0.26 & - \\
                           & O (1) RM;RN & - & - & - & 0.28 \\
        \hline
        ResNet18 $\epsilon=6$ & \acronym RM;GN & 2.44 & 2.478 & 0.038 & - \\
                          & O (1) RM;RN & - & - & - & 0.049 \\
        \hline
        ResNet18 $\epsilon=4$ & \acronym RM;GN & 2.44 & 1.99 & 0 & - \\
                          & O (1) RM;RN & - & - & - & 0 \\
        \hline
        ResNet18 $\epsilon=2$ & \acronym RM;GN & 2.44 & 1.709 & 0 & - \\
                          & O (1) RM;RN & - & - & - & 0.05 \\
        \hline
        ResNet18 $\epsilon=1$ & \acronym RM;GN & 2.44 & 1.412 & 0 & - \\
                          & O (1) RM;RN & - & - & - & 0 \\
        \hline
    \end{tabular}}
        \vspace{8pt}
    \caption{Privacy audit of ResNet18 under different values of $\epsilon$-Differential Privacy using \acronym and O(1) auditing frameworks, in which $RM$ is for real member, $RN$ for real non-member and $GN$ for generated (synthetic) non-members.}
 \label{tab:dpResNet18-table}
\end{table}

\begin{table}[h]
    \centering
    \resizebox{0.67\textwidth}{!}{
    \begin{tabular}{|l|l|l|l|l|l|}
        \hline
        \textbf{Target model} & \textbf{Audit} & $\mathbf{c_{\textnormal{lb}}}$ & $\mathbf{\{\varepsilon + c\}_{\textnormal{lb}}}$ & $\mathbf{\tilde{\varepsilon}}$ & $\mathbf{\varepsilon}$ \\
        \hline
        WRN-16-4 $\epsilon=\infty$ & \acronym RM;GN & 2.44 & 3.02 & 0.58 & - \\
                             & O (1) RM;RN & - & - & - & 0.6408 \\
        \hline
        WRN-16-4 $\epsilon=20$& \acronym RM;GN & 2.44 & 2.926 & 0.486 & - \\
                           & O (1) RM;RN & - & - & - & 0.5961 \\  
        \hline
        WRN-16-4 $\epsilon=15$& \acronym RM;GN & 2.44 & 2.912 & 0.472 & - \\
                           & O (1) RM;RN & - & - & - & 0.5774 \\
        
        \hline
        WRN-16-4 $\epsilon=10$& \acronym RM;GN & 2.44 & 2.783 & 0.343 & - \\
                           & O (1) RM;RN & - & - & - & 0.171 \\
        \hline
        WRN-16-4 $\epsilon=2$ & \acronym RM;GN & 2.44 & 2.36 & 0 & - \\
                          & O (1) RM;RN & - & - & - & 0 \\
        \hline
        WRN-16-4 $\epsilon=1$ & \acronym RM;GN & 2.44 & 1.84& 0 & - \\
                          & O (1) RM;RN & - & - & - & 0 \\
        \hline
    \end{tabular}}
        \vspace{8pt}

    \caption{Privacy audit of Wide ResNet16-4 under different values of $\epsilon$-Differential Privacy (DP) using \acronym and O(1) auditing frameworks, where $RM$ is for real member, $RN$ for real non-member and $GN$ for generated (synthetic) non-members. 
    }
    \label{tab:dpWideRes-table}
\end{table}

\begin{table}[h]
\centering
\resizebox{0.67\textwidth}{!}{
\begin{tabular}{|l|l|l|l|l|}
\hline
\textbf{Target model} $f$ & Audit & \textbf{Generator} $G$   & $\mathbf{\tilde\epsilon}$ & $\mathbf{\epsilon}$\\
\hline

$\infty$-DP GPT2-Small (GPT2\_E37) & \acronym & GPT2-Small  & $0.22 \pm 0.37$ & - \\
 & O (1) RM;RN & -  & - & $2.82 \pm 0.31$ \\

\hline
$1$-DP GPT2-Small & \acronym  & GPT-2-Small  & $0$ & - \\
 & O (1) RM;RN   & - & - &0 \\

\hline
$3$-DP GPT2-Small & \acronym  & GPT-2-Small  & $0$ & - \\
 & O (1) RM;RN   & - & - &0 \\
\hline
$10$-DP GPT2-Small & \acronym  &  GPT-2-Small  & $0$ & - \\
 & O (1) RM;RN   & - & - &0 \\

\hline
\end{tabular}
}
\vspace{8pt}
\caption{Privacy audits on GPT-2-Small target models with varying privacy guarantees.}
\label{table:audit-nlp-dp-target-models}
\end{table}

\paragraph{Discussion.}
In the image data modality, we observe that the auditing outcome ($\tilde \epsilon$ values for \acronym and $\epsilon$ for O(1)) can be different for DP models with the same $\epsilon$ values (Table~\ref{tab:audit_different_same_eps}). 
We hypothesize that the auditing results may relate more to level of overfitting than the target $\epsilon$ values in trained DP models. 
The difference between train and test accuracies could be a possible indicator that has a stronger relationship with the auditing outcome. 
We also observe that O(1) shows a faster increase in $\epsilon$ for DP models with higher targeted $\epsilon$ values.
We believe it depends on the actual ratio of the correct and total number of predicted samples, since O(1) considers both true positives and true negatives while \acronym considers true positives only. 
We leave these questions for future work.

In the text data modality, the membership inference attacks (MIAs) relying on loss thresholds in privacy audits, without retraining or access to canaries, may not be sufficiently strong to effectively classify members and non-members.

\begin{table}[htb]
\centering
\resizebox{0.9\textwidth}{!}{%
\begin{tabular}{|c|c|c|c|c|c|c|c|c|}
\hline
\textbf{Target model} & \textbf{Audit} & $\mathbf{c_{lb}}$ & $\mathbf{\varepsilon + c_{lb}}$ & $\mathbf{\tilde{\varepsilon}}$ & $\mathbf{\varepsilon}$ & \textbf{Train Acc} & \textbf{Test Acc} & \textbf{Diff(Train-Test Acc)} \\ \hline
\multirow{4}{*}{ResNet18-eps-20} & PANORAMIA RM;GN & 2.508 & 3.63 & 1.06 & - & 71.82 & 67.12 & 4.70 \\
 & O (1) RM;RN & - & - & - & 1.22 & - & - & - \\
 & \begin{tabular}[c]{@{}c@{}}PANORAMIA \\ RM;GN\end{tabular} & 2.508 & 2.28 & 0 & - & 71.78 & 68.08 & 3.70 \\
 & O (1) RM;RN & - & - & - & 0.09 & - & - & - \\ \hline
\multirow{4}{*}{ResNet18-eps-15} & PANORAMIA RM;GN & 2.508 & 3.63 & 1.13 & - & 69.01 & 65.7 & 3.31 \\
 & O (1) RM;RN & - & - & - & 1.34 & - & - & - \\
 & PANORAMIA RM;GN & 2.508 & 1.61 & 0 &  & 66.68 & 69.30 & 2.62 \\
 & O (1) RM;RN & - & - & - & 0.08 & - & - & - \\ \hline
\end{tabular}%
}
        \vspace{5pt}

\caption{DP models with the same $\epsilon$ values can have different auditing outcomes (the above numbers are reported here without union-bound correction).}
\label{tab:audit_different_same_eps}
\end{table}

\section{Generator Closeness $\gamma$-relaxation}
\label{appendix:delta_relaxation}

\begin{definition}[$(c, \gamma)$-closeness]
\label{def:c_delta_closeness}
Let the $\gamma$-approximate maximum divergence between generative model $\gG$ and data distribution $\gD$ be,
\[
    D_{\infty}^{\gamma}(\gD \| \gG) \coloneqq \max_{S \subseteq Supp(\gD): \sP(\gD \in S) \geq \gamma} \bigg[ \ln \frac{\sP(\gD \in S) - \gamma}{\sP(\gG \in S)}\bigg].
\]
For all $c>0$ we say $\gG$ is $(c, \gamma)$-close to $\gD$ if $D_{\infty}^{\gamma}(\gD \| \gG) \leq c$, i.e. 
\[
    \forall x \in \gX, \sP_\gD[x] \leq e^{c}\sP_\gG[x] + \gamma \; \textnormal{or} \; e^{-c}(\sP_\gD[x] - \gamma) \leq \sP_\gG[x].
\]
\end{definition}

\subsection{Pure DP case}
\label{appendix:delta_relaxation_pure}

First we prove a similar result as Lemma 5.5 \citep{steinke2023privacy} but with the one-sided inequality of $(c, \gamma)$-closeness and the case when we additionally observe output from a pure-DP mechanism.

\begin{lemma}
\label{lemma:decomposition_pure}
Let $\gG, \gD$ be probability distributions over $\gY$, fix $c, \gamma \geq 0$. Suppose for all measurable $S \subset \gY$ we have $e^{-c}(\gD(S) - \gamma) \leq \gG(S)$. Then $\exists \gamma' \in [0, \gamma]$ and $\gD', \gD'', \gG', \gG''$ such that the following properties are satisfied. 
\begin{enumerate}[label={(\arabic*)}]
    \item $\gD, \gG$ can be expressed as a convex combination $\gD = (1-\gamma')\gD' + \gamma' \gD'', \gG = (1-\gamma')\gG' + \gamma' \gG''$;
    \item for all measurable $S \subset \gY$, we have $e^{-c}\gD'(S) \leq \gG'(S)$;
    \item there exists measurable $S \subset \gY$ such that $\gD''(S)=1$, $\forall S' \subset S, \gD(S') \geq \gG(S')$.
\end{enumerate}
Let $M$ denote an arbitrary $\epsilon$-DP mechanism which takes input $Y$ sampled from distribution $\gG$ or $\gD$ and outputs $f$. Let $\mathcal{P}, \mathcal{Q}$ be probability distributions over $\mathcal{Z}$ which are distributions of $M(Y)$ when $Y$ is sampled from $\gD$ and $\gG$ respectively. Fix $\epsilon \geq 0$, suppose for all measurable $T \subset \mathcal{Z}$ we have $\mathcal{P}(T) \leq e^{\epsilon}\mathcal{Q}(T)$ and $\mathcal{Q}(T) \leq e^{\epsilon}\mathcal{P}(T)$. When additionally observing the output from a DP mechanism we have the following properties. Let $(\gD, \mathcal{P}), (\gG, \mathcal{Q})$ be the joint distribution over $(\gY, \mathcal{Z})$ which we assume $\gD \indep \mathcal{P}$, $\gG \indep \mathcal{Q}$.
\begin{enumerate}[resume*]
    \item $(\gD, \mathcal{P}), (\gG, \mathcal{Q})$ can be expressed as a convex combination $\gD \mathcal{P} = ((1-\gamma')\gD' + \gamma' \gD'')\mathcal{P}, \gG \mathcal{Q} = ((1-\gamma')\gG' + \gamma' \gG'')\mathcal{Q}$;
    \item for all measurable $S \subset \gY$, $T \subset \mathcal{Z}$, we have $e^{-c}e^{-\epsilon}\gD'(S)\mathcal{P}(T) \leq \gG'(S)\mathcal{Q}(T)$;
    \item there exists measurable $S \subset \gY$, $T \subset \mathcal{Z}$ such that $\gD''(S)\mathcal{P}(T)=1$, $\forall S' \subset S, T' \subset T, \gD(S')\mathcal{P}(T) \geq \gG(S')\mathcal{P}(T)$.
\end{enumerate}
\end{lemma}
\begin{proof}
    For the edge cases of $\gamma=0$, the results hold with $\gamma'=0, \gD'=\gD, \gG'=\gG$; similarly when $\gamma=1$, the results hold with $\gamma'=1, \gD''=\gD, \gG''=\gG, \gD'=\gG'$.
    Let $c' \in [0, c]$, $\gamma_1, \gamma_2 \in (0, 1)$, define distribution $\gG', \gG'', \gD', \gD''$ as follows. For all points $y \in \gY$,
    \begin{align*}
        \gD'(y) &= \frac{\min\{\gD(y), e^{c'}\gG(y)\}}{1-\gamma_1}, \\
        \gD''(y) &= \frac{\gD(y)-(1-\gamma_1)\gD'(y)}{\gamma_1} = \frac{\max\{ 0, \gD(y)- e^{c'}\gG(y)\}}{\gamma_1}, \\
        \gG'(y) &= \frac{\gG(y)}{1-\gamma_2},\\
        \gG''(y) &= \frac{\gG(y)-(1-\gamma_2)\gG'(y)}{\gamma_2} = 0.
    \end{align*}
    By construction $(1-\gamma_1)\gD' + \gamma_1 \gD'' = \gD$, $(1-\gamma_2)\gG' + \gamma_2 \gG'' = \gG$, so the first property is satisfied, and $\gD''(y)$ is supported on $S = \{y\in \gY: \gD(y) > e^{c'}\gG(y)\}$ so the third property is implied. 
    If $0< \gamma_1 =\gamma_2 =\gamma' \leq \gamma$, for all $y \in \gY$ we have,
    \[
        \frac{\gG'(y)}{\gD'(y)} =\frac{\gG(y)}{\min\{\gD(y), e^{c'}\gG(y)\}} \geq e^{-c'} \geq e^{-c},
    \]
    as required for the second property.
    Following the same as in Lemma 5.5 \citep{steinke2023privacy}, we can ensure $0 < \gamma_1 = \gamma_2 \leq \gamma$ by appropriately setting $\epsilon_1, \epsilon_2 \in [0, \epsilon]$.
    We can use the same decomposition of $\gD, \gG$ to prove the second part of the Lemma. If assuming $\gD  \indep \mathcal{P}$ and $\gG \indep \mathcal{Q}$, the joint distributions can be decomposed as $f_{\gD, \mathcal{P}} = f_{\mathcal{P}} \cdot f_{\gD}$ and $f_{\gG, \mathcal{Q}} = f_{\mathcal{Q}} \cdot f_{\gG}$.
    Therefore by construction, $\gD \mathcal{P} = ((1-\gamma')\gD' + \gamma' \gD'')\mathcal{P}, \gG \mathcal{Q} = ((1-\gamma')\gG' + \gamma' \gG'')\mathcal{Q}$. $\gD''\mathcal{P}$ is supported on $(S,T)=\{ y\in \gY, M(y) \in \mathcal{Z}: \gD(y)\mathcal{P}(M(y)) > e^{c'}\gG(y)\mathcal{P}(M(y)) \}$. For all $y \in \gY$, $M(y) \in \mathcal{Z}$ we have,
    \[
        \frac{\gG'(y)}{\gD'(y)} \cdot \frac{\mathcal{Q}(M(y))}{\mathcal{P}(M(y))} = \frac{\gG(y)}{\min\{\gD(y), e^{c'}\gG(y)\}} \cdot \frac{\mathcal{Q}(M(y))}{\mathcal{P}(M(y))} \geq e^{-c'}e^{-\epsilon} \geq e^{-c}e^{-\epsilon},
    \]
    where $\frac{\mathcal{Q}(M(y))}{\mathcal{P}(M(y))} \geq e^{-\epsilon}$ since $M$ is an $\epsilon$-DP mechanism.
\end{proof}

Similar to Lemma 5.6 \citep{steinke2023privacy}, next we prove a Bayesian version of Lemma \ref{lemma:decomposition_pure} which is used to prove the main result afterwards.

\begin{lemma}
\label{lemma:baysian_decomposition_pure}
Let $\gG, \gD$ be probability distributions over $\gY$, fix $c, \gamma \geq 0$, suppose for all measurable $S \subset \gY$ we have $e^{-c}(\gD(S) - \gamma) \leq \gG(S)$. Then there exists a randomized function $E_{\gD, \gG}: \gY \rightarrow \{0, 1\}$ with the following properties. Suppose $X \sim \mathrm{Bernoulli}(\frac{1}{2})$, if $X=1$ sample $Y \sim \gD$, and if $X=0$ sample $Y \sim \gG$. Then for all $y \in \gY$ we have,
\begin{gather*}
    \underset{\substack{X \sim \textrm{Bernoulli}(\frac{1}{2})\\Y \leftarrow X\gD+(1-X)\gG}}{\sP}[X=1, E_{\gD, \gG}(Y)=1|Y=y] \leq \frac{e^c}{1+e^c}, \\
    \E_{Y \sim \gD}[E_{\gD, \gG}(Y)] \geq 1-\gamma, \; \E_{Y \sim \gG}[E_{\gD, \gG}(Y)] \geq 1-\gamma.
\end{gather*}
Let $M$ denote an arbitrary $\epsilon$-DP mechanism which takes input $Y$ sampled from distribution $\gG$ or $\gD$ and outputs $f$. Let $\mathcal{P}, \mathcal{Q}$ be probability distributions over $\mathcal{Z}$ which are distributions of $M(Y)$ when $Y$ is sampled from $\gD$ and $\gG$ respectively. Fix $\epsilon \geq 0$, suppose for all measurable $T \subset \mathcal{Z}$ we have $\mathcal{P}(T) \leq e^{\epsilon}\mathcal{Q}(T)$ and $\mathcal{Q}(T) \leq e^{\epsilon}\mathcal{P}(T)$. Let $(\gD, \mathcal{P}), (\gG, \mathcal{Q})$ denote the joint distributions over $(\gY, \mathcal{Z})$ which we assume $\gD \indep \mathcal{P}$, $\gG \indep \mathcal{Q}$. Then there exists a randomized function $E_{(\gD, \mathcal{P}), (\gG, \mathcal{Q})}: (\gY, \mathcal{Z}) \rightarrow \{0, 1\}$ such that for all $y \in \gY$, $M(y) \in \mathcal{Z}$ we have,
\begin{gather*}
    \underset{\substack{X \sim \textrm{Bernoulli}(\frac{1}{2})\\Y \leftarrow X\gD+(1-X)\gG \\ M(Y)\leftarrow M(X\gD+(1-X)\gG)}}{\sP}[X=1, E_{(\gD, \mathcal{P}), (\gG, \mathcal{Q})}(Y, M(Y))=1|Y=y, M(Y)=f] \leq \frac{e^{c+\epsilon}}{1+e^{c+\epsilon}}, \\
    \E_{Y \sim \gD, M(Y) \sim \mathcal{P}}[E_{(\gD, \mathcal{P}), (\gG, \mathcal{Q})}(Y, M(Y))] \geq 1-\gamma, \\
    \E_{Y \sim \gG, M(Y) \sim \mathcal{Q}}[E_{(\gD, \mathcal{P}), (\gG, \mathcal{Q})}(Y, M(Y))] \geq 1-\gamma.
\end{gather*}
\end{lemma}
\begin{proof}
    We apply the decomposition from Lemma \ref{lemma:decomposition_pure} and denote $\gD', \gD'', \gG', \gG'', \gamma'$ as is. We define $E_{\gD, \gG}: \gY \rightarrow \{0, 1\}$ by,
    \[
        \sP[E_{\gD, \gG}(y)=1] = (1-\gamma')\frac{\gD'(y)}{\gD(y)} = 1-\frac{\gamma'\gD''(y)}{\gD(y)}.
    \]
    For any $y \in \gY$ we have,
    \begin{align*}
        &\sP[X=1, E_{\gD, \gG}(Y)=1 | Y=y] \\
        &= \sP[X=1 | Y=y, E_{\gD, \gG}(Y)=1] \sP[E_{\gD, \gG}(Y)=1 | Y=y] \\
        &= \sP[X=1| Y=y] \sP[E_{\gD, \gG}(y)=1] \\
        &= \frac{\sP[Y=y|X=1]\sP[X=1]}{\sP[Y=y]}\sP[E_{\gD, \gG}(y)=1] \\
        &= \frac{\gD(y)}{\gD(y)+\gG(y)}\sP[E_{\gD, \gG}(y)=1] \\
        &= \frac{(1-\gamma')\gD'(y)+\gamma'\gD''(y)}{(1-\gamma')\gD'(y)+\gamma'\gD''(y)+(1-\gamma')\gG'(y)}\sP[E_{\gG, \gD}(y)=1] \\
        &= \frac{1+\frac{\gamma'\gD''(y)}{(1-\gamma')\gD'(y)}}{1+\frac{\gamma'\gD''(y)}{(1-\gamma')\gD'(y)}+\frac{\gG'(y)}{\gD'(y)}}\sP[E_{\gG, \gD}(y)=1]\\
        &\leq \frac{1+\frac{\gamma'\gD''(y)}{(1-\gamma')\gD'(y)}}{1+0+e^{-c}}\sP[E_{\gG, \gD}(y)=1]\\
        &= \frac{1}{1+e^{-c}}\cdot \bigg( \frac{(1-\gamma')\gD'(y)+\gamma'\gD''(y)}{(1-\gamma')\gD'(y)}\bigg) \cdot \sP[E_{\gG, \gD}(y)=1] \\
        &= \frac{1}{1+e^{-c}}\cdot \bigg( \frac{\gD(y)}{(1-\gamma')\gD'(y)}\bigg) \cdot \sP[E_{\gG, \gD}(y)=1] \\
        &= \frac{1}{1+e^{-c}} = \frac{e^c}{1+e^c}.
    \end{align*}
    The first equality uses that $\sP[A,B|C]=\sP[A|B,C]\sP[B|C]$. The second equality holds since we assume the internal randomness of $E$ is independent of everything else. The third equality uses the Bayes' rule. The fourth equality uses the definition of the randomized function $E$. The fifth equality uses the decomposition of Lemma \ref{lemma:decomposition_pure}. After rearranging and use the properties from Lemma \ref{lemma:decomposition_pure} we obtain the first inequality, then the rest follows from simplifications. Furthermore we have the following,
    \begin{align*}
        \E_{Y \sim \gD}[E_{\gD, \gG}] &= \int_{\gY} \gD(y) \sP[E_{\gD, \gG}=1]dy \\
        &= \int_{\gY} (1-\gamma')\gD'(y)dy = 1-\gamma' \geq 1-\gamma,\\
        \E_{Y \sim \gG}[E_{\gD, \gG}] &= 1-\gamma'\E_{Y \sim \gG}\bigg[\frac{\gD''(y)}{\gD(y)} \bigg] \\
        &= 1-\gamma' \int_{\gY}\frac{\gG(y)}{\gD(y)}\cdot \gD''(y) dy \\
        &\geq 1-\gamma' \int_{\gY}\gD''(y) dy \\
        &= 1-\gamma' \geq 1-\gamma.
    \end{align*}
    We follow a similar procedure to prove the second part of the Lemma. We define $E_{(\gD, \mathcal{P}), (\gG, \mathcal{Q})}: (\gY, \mathcal{Z}) \rightarrow \{0, 1\}$ by,
    \begin{align*}
        \sP[E_{(\gD, \mathcal{P}), (\gG, \mathcal{Q})}(y, M(y))=1] = \sP[E_{\gD, \gG}(y)=1] 
    \end{align*}
    With similar reasoning, for any $y \in \gY$, $M(y) \in \mathcal{Z}$ we have,
    \begin{align*}
        &\sP[X=1, E_{(\gD, \mathcal{P}), (\gG, \mathcal{Q})}(Y, M(Y))=1 | Y=y, M(Y)=f] \\
        &= \frac{\sP[Y=y, M(Y)=f|X=1]\sP[X=1]}{\sP[Y=y, M(Y)=f]}\sP[E_{(\gD, \mathcal{P}), (\gG, \mathcal{Q})}(y, M(y))=1] \\
        &= \frac{\gD(y) \mathcal{P}(M(y))}{\gD(y)\mathcal{P}(M(y))+\gG(y)\mathcal{Q}(M(y))}\sP[E_{(\gD, \mathcal{P}), (\gG, \mathcal{Q})}(y, M(y))=1] \\
        &= \frac{((1-\gamma')\gD'(y)+\gamma'\gD''(y))\mathcal{P}(M(y))}{((1-\gamma')\gD'(y)+\gamma'\gD''(y))\mathcal{P}(M(y))+(1-\gamma')\gG'(y)\mathcal{Q}(M(y))}\sP[E_{(\gD, \mathcal{P}), (\gG, \mathcal{Q})}(y, M(y))=1] \\
        &= \frac{1+\frac{\gamma'\gD''(y)\mathcal{P}(M(y))}{(1-\gamma')\gD'(y)\mathcal{P}(M(y))}}{1+\frac{\gamma'\gD''(y)\mathcal{P}(M(y))}{(1-\gamma')\gD'(y)\mathcal{P}(M(y))}+\frac{\gG'(y)\mathcal{Q}(M(y))}{\gD'(y)\mathcal{P}(M(y))}}\sP[E_{(\gD, \mathcal{P}), (\gG, \mathcal{Q})}(y, M(y))=1] \\
        &\leq \frac{1+\frac{\gamma'\gD''(y)}{(1-\gamma')\gD'(y)}}{1+0+e^{-c}e^{-\epsilon}}\sP[E_{(\gD, \mathcal{P}), (\gG, \mathcal{Q})}(y, M(y))=1] \\
        &= \frac{1}{1+e^{-c}e^{-\epsilon}} \cdot \bigg( \frac{\gD(y)}{(1-\gamma')\gD'(y)} \bigg) \cdot \sP[E_{(\gD, \mathcal{P}), (\gG, \mathcal{Q})}(y, M(y))=1] \\
        &= \frac{1}{1+e^{-c}e^{-\epsilon}} = \frac{e^{c+\epsilon}}{1+e^{c+\epsilon}}.
    \end{align*}
    Since we define the success probability with the same expression, the expected success of the randomized function directly follows,
    \begin{align*}
        \E_{Y \sim \gD, M(Y) \sim \mathcal{P}}[E_{(\gD, \mathcal{P}), (\gG, \mathcal{Q})}(Y, M(Y))] =  \E_{Y \sim \gD}[E_{\gD, \gG}(Y)] \geq 1-\gamma, \\
        \E_{Y \sim \gG, M(Y) \sim \mathcal{Q}}[E_{(\gD, \mathcal{P}), (\gG, \mathcal{Q})}(Y, M(Y))]= \E_{Y \sim \gG}[E_{\gD, \gG}(Y)] \geq 1-\gamma.
    \end{align*}
\end{proof}

Next we use Lemma \ref{lemma:baysian_decomposition_pure} to prove the following Proposition which is used to form statistical tests on hypothesis $\mathcal{H}$.

\begin{proposition}
\label{prop:stat_test_pure}
    Let $\gG$ be $(c,\gamma)$-close, $f$ be output of an $\epsilon$-DP mechanism, $T^b \triangleq B(S, X)$ be the guess from the baseline, $T^a \triangleq A(S, X, f)$ be the guess from the membership audit. Let $T_b, T_a \in [0, 1]^m$ be bounded.
    Then, for all $v \in \sR$ and all $t$ in the support of $T$:
    \begin{gather*}
    \sP_{S, X, T^b}\Big[ \sum_{i=1}^m T^b_i \cdot S_i \geq v \ | \ T^b = t^b \Big] \leq \underset{\substack{S' \sim \textrm{Bernoulli}(\frac{e^c}{1+e^c})^m, F}}\sP\Big[F(t^b)+ \sum_{i=1}^m t^b_i \cdot S'_i \geq v \Big], \\
    \sP_{S, X, T^a}\Big[ \sum_{i=1}^m T^a_i \cdot S_i \geq v \ | \ T^a = t^a \Big] \leq \underset{\substack{S' \sim \textrm{Bernoulli}(\frac{e^{c+\epsilon}}{1+e^{c+\epsilon}})^m, F}}\sP\Big[F(t^a)+ \sum_{i=1}^m t^a_i \cdot S'_i \geq v \Big],
    \end{gather*}
    where $F$ is independent from $S'$, $F(T^b), F(T^a)$ is supported on $\{0,1,\ldots,m\}$ and $\E_{T^b, F}[F(T^b)] = \E_{T^a, F}[F(T^a)] \leq 2m\gamma$.    
\end{proposition}
\begin{proof}
    Let $B(s_{\leq i})$ denote the distribution on $[0, 1]^m$ obtained by conditioning $B$ on past $S_{< i}=s_{< i}$ and $X_{\leq i}=x_{\leq i}$. By Lemma \ref{lemma:baysian_decomposition_pure}, for all $t^b \in [0,1]^m$ we have
    \begin{gather*}
        \sP[S_i=1, E_{B(s_{<i, 1}), B(s_{< i, 0})}(T^b)=1 | T^b = t^b, S_{<i}=s_{<i}, X_{\leq i}=x_{\leq i}] \leq \frac{e^c}{1+e^c}, \\
        \E[E_{B(s_{\leq i, 1}), B(s_{\leq i, 0})}(T^b)=1 | S_{<i}=s_{<i}, S_i = 0] \geq 1-\gamma, \\
        \E[E_{B(s_{\leq i, 1}), B(s_{\leq i, 0})}(T^b)=1 | S_{<i}=s_{<i}, S_i = 1] \geq 1-\gamma.
    \end{gather*}
    Using the law of total probability we get,
    \begin{align*}
        &\sP[S_i=1, E_{B(s_{<i, 1}), B(s_{< i, 0})}(T^b)=1 | T^b = t^b, S_{<i}=s_{<i}] \\
        &= \sum_{x\leq i} \sP[S_i=1, E_{B(s_{<i, 1}), B(s_{< i, 0})}(T^b)=1 | T^b = t^b, S_{<i}=s_{<i}, X_{\leq i}=x_{\leq i}] \\
        & \;\;\;\;\;\; \sP[X_{\leq i}=x_{\leq i} | T^b = t^b, S_{<i}=s_{<i}] \\
        &\leq \frac{e^c}{1+e^c} \sum_{x\leq i}\sP[X_{\leq i}=x_{\leq i} | T^b = t^b, S_{<i}=s_{<i}] = \frac{e^c}{1+e^c}.
    \end{align*}
    Applying the union bound on $S_i$ we can bound the expectation of success events by,
    \[
        \E[E_{B(s_{\leq i, 1}), B(s_{\leq i, 0})}(T^b)| S_{<i}=s_{<i}, S_i=s_i] \geq 1-2\gamma.
    \]
    By induction, for $k \in [m]$ define $\Tilde{W}_{k}(s,t^b) \coloneqq \sum_{i=1}^{k} t_i^b \cdot s_i \cdot E_{B(s_{<i, 1}), B(s_{<i, 0})}(t^b)$, then $\Tilde{W}_{k}(S,t^b)$ is stochastically dominated by $\check{W}_k(t^b) \coloneqq \sum_{i=1}^{k} t_i^b \cdot \check{S}_i(t)$ where $\check{S}_i(t) \sim \mathrm{Bernoulli}(\frac{e^c}{1+e^c})$. Let the failing events $F$ be, 
    \begin{gather*}
        F(s, t^b) \coloneqq \sum_{i=1}^{m} \mathds{1}\{ E_{B(s_{\leq i, 1}), B(s_{\leq i, 0})}(t^b)=0 \},
    \end{gather*}
    then we have $W_{m}(s,t^b) \coloneqq \sum_{i=1}^{m} t_i^b  \cdot s_i = \Tilde{W}_{m}(s,t^b) + F(s,t^b)$ is stochastically dominated by $\check{W}_{m}(T^b) + F(S,T^b)$, where the expectation of the failing events is bounded by,
    \begin{gather*}
        \E[F(S, T^b)] = \sum_{i=1}^{m} \sP[E_{B(s_{\leq i, 1}), B(s_{\leq i, 0})}(t^b)=0] \leq m(2\gamma).
    \end{gather*}
    Similar to Lemma 5.5 in \citet{steinke2023privacy} since $\E[F(S, T^b)]$ is independent of $S$ we could have $F(T^b)=F(S, T^b)$ for $S$ drawn from an appropriate distribution. The proof of the second part of the Proposition is essentially the same except with the following result from Lemma \ref{lemma:baysian_decomposition_pure},
    \begin{gather*}
        \sP[S_i=1, E_{A(s_{<i, 1}), A(s_{< i, 0})}(T^a)=1 | T^a = t^a, S_{<i}=s_{<i}, X_{\leq i}=x_{\leq i}] \leq \frac{e^{c+\epsilon}}{1+e^{c+\epsilon}}, \\
        \E[E_{A(s_{\leq i, 1}), A(s_{\leq i, 0})}(T^a)=1 | S_{<i}=s_{<i}, S_i = 0] \geq 1-\gamma, \\
        \E[E_{A(s_{\leq i, 1}), A(s_{\leq i, 0})}(T^a)=1 | S_{<i}=s_{<i}, S_i = 1] \geq 1-\gamma.
    \end{gather*}
\end{proof}

Similar to Theorem 5.2 \citep{steinke2023privacy} we could compute for the optimal distribution $F(t^b)$ and $F(t^a)$ by formulating it into a linear programming problem. Similar to Corollary \ref{corollary:hyp-test}, to test Hypothesis $\mathcal{H}$ together we would apply Union bound on the two parts of the hypothesis as in Proposition \ref{prop:stat_test_pure}. 

\subsection{Approximate DP case}
\label{appendix:delta_relaxation_approx}

In this section we consider the case where we have a $(c, \gamma)$-close generator and audit for an approximate DP mechanism. We modify the results to account for the failing events from these two sources of randomness. We start by proving a similar version of Lemma \ref{lemma:baysian_decomposition_pure} where we use both the decomposition of $\gG, \gD$ from Lemma \ref{lemma:decomposition_pure} and the decomposition of $\mathcal{P}, \mathcal{Q}$ from Lemma 5.5 \citep{steinke2023privacy}.

\begin{lemma}
\label{lemma:baysian_decomposition_approx}
Let $\gG, \gD$ be probability distributions over $\gY$, fix $c, \gamma \geq 0$, suppose for all measurable $S \subset \gY$ we have $e^{-c}(\gD(S) - \gamma) \leq \gG(S)$. Let $M$ denote an arbitrary $(\epsilon, \delta)$-DP mechanism which takes input $Y$ sampled from distribution $\gG$ or $\gD$ and outputs $f$. Let $\mathcal{P}, \mathcal{Q}$ be probability distributions over $\mathcal{Z}$ which are distributions of $M(Y)$ when $Y$ is sampled from $\gD$ and $\gG$ respectively. Fix $\epsilon, \delta \geq 0$, suppose for all measurable $T \subset \mathcal{Z}$ we have $\mathcal{P}(T) \leq e^{\epsilon}\mathcal{Q}(T) + \delta$ and $\mathcal{Q}(T) \leq e^{\epsilon}\mathcal{P}(T) + \delta$. Let $(\gD, \mathcal{P}), (\gG, \mathcal{Q})$ denote the joint distributions over $(\gY, \mathcal{Z})$ which we assume $\gD \indep \mathcal{P}$, $\gG \indep \mathcal{Q}$. Then there exists a randomized function $E_{(\gD, \mathcal{P}), (\gG, \mathcal{Q})}: (\gY, \mathcal{Z}) \rightarrow \{0, 1\}$ such that for all $y \in \gY$, $M(y) \in \mathcal{Z}$ we have,
\begin{gather*}
    \underset{\substack{X \sim \textrm{Bernoulli}(\frac{1}{2})\\Y \leftarrow X\gD+(1-X)\gG \\ M(Y)\leftarrow M(X\gD+(1-X)\gG)}}{\sP}[X=1, E_{(\gD, \mathcal{P}), (\gG, \mathcal{Q})}(Y, M(Y))=1|Y=y, M(Y)=f] \leq \frac{e^{c+\epsilon}}{1+e^{c+\epsilon}}, \\
    \E_{Y \sim \gD, M(Y) \sim \mathcal{P}}[E_{(\gD, \mathcal{P}), (\gG, \mathcal{Q})}(Y, M(Y))] \geq (1-\gamma)(1-\delta), \\
    \E_{Y \sim \gG, M(Y) \sim \mathcal{Q}}[E_{(\gD, \mathcal{P}), (\gG, \mathcal{Q})}(Y, M(Y))] \geq (1-\gamma)(1-\delta).
\end{gather*}
\end{lemma}
\begin{proof}
    For ease of notation we shorthand $F=M(Y)$, $E=E_{(\gD, \mathcal{P}), (\gG, \mathcal{Q})}$, $\gD'=\gD'(y)$, $\gG'=\gG'(y)$, $\gD''=\gD''(y)$, $\gG''=\gG''(y)$, $\mathcal{P}'=\mathcal{P}'(f)$, $\mathcal{Q}'=\mathcal{Q}'(f)$, $\mathcal{P}''=\mathcal{P}''(f)$, $\mathcal{Q}''=\mathcal{Q}''(f)$ below.
    We define $E: (\gY, \mathcal{Z}) \rightarrow \{0, 1\}$ by,
    \[
        \sP[E(y,f)=1] = \frac{(1-\gamma')(1-\delta')\gD'\mathcal{P}'}{\gD \mathcal{P}}.
    \]
    For any $y \in \gY$, $M(y) \in \mathcal{Z}$ we have,
    \begin{align*}
        &\sP[X=1, E(Y, F)=1 | Y=y, F=f] \\
        &= \frac{\sP[Y=y, F=f|X=1]\sP[X=1]}{\sP[Y=y, F=f]}\sP[E(y, f)=1] \\
        &= \frac{\gD(y) \mathcal{P}(f)}{\gD(y)\mathcal{P}(f)+\gG(y)\mathcal{Q}(f)}\sP[E(y, f)=1] \\
        &= \frac{((1-\gamma')\gD'+\gamma'\gD'')((1-\delta')\mathcal{P}'+\delta'\mathcal{P}'')}{((1-\gamma')\gD'+\gamma'\gD'')((1-\delta')\mathcal{P}'+\delta'\mathcal{P}'')+(1-\gamma')\gG'((1-\delta')\mathcal{Q}'+\delta'\mathcal{Q}'')}\sP[E(y, f)=1] \\
        &= \frac{1 + \frac{\delta'\gD'\mathcal{P}''}{(1-\delta')\gD'\mathcal{P}'}+\frac{\gamma'\gD''\mathcal{P}'}{(1-\gamma')\gD'\mathcal{P}'} + \frac{\gamma'\delta'\gD''\mathcal{P}''}{(1-\gamma')(1-\delta')\gD'\mathcal{P}'}}{1 + \frac{\delta'\gD'\mathcal{P}''}{(1-\delta')\gD'\mathcal{P}'}+\frac{\gamma'\gD''\mathcal{P}'}{(1-\gamma')\gD'\mathcal{P}'} + \frac{\gamma'\delta'\gD''\mathcal{P}''}{(1-\gamma')(1-\delta')\gD'\mathcal{P}'} + \frac{\gG'\mathcal{Q}'}{\gD'\mathcal{P}'} + \frac{\delta'\gG'\mathcal{Q}''}{(1-\delta')\gD'\mathcal{P}'}}\sP[E(y, f)=1] \\
        &\leq \frac{1 + \frac{\delta'\gD'\mathcal{P}''}{(1-\delta')\gD'\mathcal{P}'}+\frac{\gamma'\gD''\mathcal{P}'}{(1-\gamma')\gD'\mathcal{P}'} + \frac{\gamma'\delta'\gD''\mathcal{P}''}{(1-\gamma')(1-\delta')\gD'\mathcal{P}'}}{1+e^{-c}e^{-\epsilon}} \sP[E(y, f)=1] \\
        &= \frac{1}{1+e^{-c}e^{-\epsilon}} \bigg( \frac{(1-\delta')(1-\gamma')\gD'\mathcal{P}' + \delta'(1-\gamma')\gD'\mathcal{P}'' + \gamma'(1-\delta')\gD''\mathcal{P}' + \gamma'\delta'\gD''\mathcal{P}''}{(1-\gamma')(1-\delta')\gD'\mathcal{P}'} \bigg) \\
        & \;\;\; \cdot \sP[E(y, f)=1] \\
        &= \frac{1}{1+e^{-c}e^{-\epsilon}} \bigg( \frac{\gD \mathcal{P}}{(1-\gamma')(1-\delta')\gD'\mathcal{P}'} \bigg)\sP[E(y, f)=1] \\
        &= \frac{1}{1+e^{-c}e^{-\epsilon}} = \frac{e^{c+\epsilon}}{1+e^{c+\epsilon}}.
    \end{align*}
    Using the assumption that $\gD \indep \mathcal{P}$, $\gG \indep \mathcal{Q}$, we account for the expectation of success events as follows,
    \begin{align*}
        \E_{Y \sim \gD, M(Y) \sim \mathcal{P}}[E(Y, F)] &= \iint \gD(y)\mathcal{P}(f) \sP[E=1]dydf \\
        &= \iint \gD(y)\mathcal{P}(f) \frac{(1-\gamma')(1-\delta')\gD'(y)\mathcal{P}'(f)}{\gD(y) \mathcal{P}(f)} dydf \\
        &= (1-\gamma')(1-\delta') \int \gD'(y) dy \int \mathcal{P}'(f) df \\
        &= (1-\gamma')(1-\delta') \geq (1-\gamma)(1-\delta).
    \end{align*}
    Note that we can rewrite the success probability as,
    \[
        \sP[E(y,f)=1] = 1 - \frac{\gamma'(1-\delta')\gD''\mathcal{P}'}{\gD \mathcal{P}} - \frac{(1-\gamma')\delta'\gD'\mathcal{P}''}{\gD \mathcal{P}} - \frac{\gamma'\delta'\gD'' \mathcal{P}''}{\gD \mathcal{P}}.
    \]
    Then we have,
    \begin{align*}
        &\E_{Y \sim \gG, M(Y) \sim \mathcal{Q}}[E(Y, F)] \\
        &= 1 - \gamma'(1-\delta')\E\bigg[\frac{\gD''(y)\mathcal{P}'(f)}{\gD(y)\mathcal{P}(f)} \bigg] - (1-\gamma')\delta' \E \bigg[\frac{\gD'(y)\mathcal{P}''(f)}{\gD(y) \mathcal{P}(f)} \bigg] - \gamma'\delta' \E \bigg[\frac{\gD''(y) \mathcal{P}''(f)}{\gD(y) \mathcal{P}(f)}\bigg] \\
        &= 1 - \gamma'(1-\delta')\int \gG(y) \mathcal{Q}(f) \bigg[\frac{\gD''(y)\mathcal{P}'(f)}{\gD(y)\mathcal{P}(f)} \bigg] dydf \\ & \;\;\; - (1-\gamma')\delta' \int \gG(y) \mathcal{Q}(f) \bigg[\frac{\gD'(y)\mathcal{P}''(f)}{\gD(y) \mathcal{P}(f)} \bigg]dydf \\ & \;\;\; - \gamma'\delta' \int \gG(y) \mathcal{Q}(f) \bigg[\frac{\gD''(y) \mathcal{P}''(f)}{\gD(y) \mathcal{P}(f)}\bigg]dydf \\
        &\geq 1- \gamma'(1-\delta')\int \gD''(y)\mathcal{P}'(f) dydf - (1-\gamma')\delta' \int \gD'(y)\mathcal{P}''(f)dydf \\ & \;\;\; - \gamma'\delta' \int \gD''(y) \mathcal{P}''(f)dydf \\
        &= 1- \gamma'(1-\delta') - (1-\gamma')\delta' - \gamma'\delta' \\
        &= 1-\gamma'-\delta'+\gamma'\delta' \\
        &= (1-\gamma')(1-\delta') \geq (1-\gamma)(1-\delta).
    \end{align*}
\end{proof}

Next we use Lemma \ref{lemma:baysian_decomposition_approx} to prove the following Proposition to form the statistical tests on hypothesis $\mathcal{H}$ for the $(\epsilon, \delta)$-DP version.
\begin{proposition}
\label{prop:stat_test_approx}
Let $\gG$ be $(c,\gamma)$-close, $f$ be output of an $(\epsilon, \delta)$-DP mechanism, $T^b \triangleq B(S, X)$ be the guess from the baseline, $T^a \triangleq A(S, X, f)$ be the guess from the membership audit. Let $T_b, T_a \in [0, 1]^m$ be bounded. Then, for all $v \in \sR$ and all $t$ in the support of $T$:
\begin{gather*}
    \sP_{S, X, T^b}\Big[ \sum_{i=1}^m T^b_i \cdot S_i \geq v \ | \ T^b = t^b \Big] \leq \underset{\substack{S' \sim \textrm{Bernoulli}(\frac{e^c}{1+e^c})^m, F}}\sP\Big[F(t^b)+ \sum_{i=1}^m t^b_i \cdot S'_i \geq v \Big], \\
    \sP_{S, X, T^a}\Big[ \sum_{i=1}^m T^a_i \cdot S_i \geq v \ | \ T^a = t^a \Big] \leq \underset{\substack{S' \sim \textrm{Bernoulli}(\frac{e^{c+\epsilon}}{1+e^{c+\epsilon}})^m, F}}\sP\Big[F(t^a)+ \sum_{i=1}^m t^a_i \cdot S'_i \geq v \Big],
\end{gather*}
where $F$ is independent from $S'$, $F(T^b)$, $F(T^a)$ is supported on $\{0,1,\ldots,m\}$ and $\E_{T^b, F}[F(T^b)] \leq 2m\delta$, $\E_{T^a, F}[F(T^a)] \leq 2m(\gamma+\delta-\gamma\delta)$.
\end{proposition}
\begin{proof}
    The first part of the result (baseline test) exactly follows the proof of Proposition \ref{prop:stat_test_pure}. For the approximate DP case, the proof of the auditor test follows similar steps except for the bound on the expectation of the failing events,
    \begin{align*}
        \E[F(S, T^b)] = \sum_{i=1}^{m} \sP[E_{B(s_{\leq i, 1}), B(s_{\leq i, 0})}(t^b)=0] &\leq m(2(1-(1-\gamma)(1-\delta))) \\&= 2m(\gamma + \delta-\gamma\delta).
    \end{align*}
\end{proof}

Similar to the pure DP case we could optimize for the optimal distribution of $F(t^b)$ and $F(t^a)$ by solving a linear program (Theorem 5.2 in \citet{steinke2023privacy}) and apply Union bound as in Corollary \ref{corollary:hyp-test} to test for Hypothesis $\mathcal{H}$.



\subsection{Evaluations}
We empirically evaluate the auditing performance of \acronym under different relaxed level of the generator for both the pure and approximate DP cases. We observe that the overall performance is similar to that of no relaxation. We observe smaller $\tilde{\varepsilon}$ (which is analogous to a looser lower bound) as we increase the relaxation level, and in the more extreme cases where we allow many failing events (e.g. when relaxation $>$ 1e-3) we would fail to detect meaningful privacy leakage as $\tilde{\varepsilon}=0$.

\begin{table}[H]
\centering
\resizebox{0.65\textwidth}{!}{%
\begin{tabular}{ccccc}
\hline \hline
\textbf{Model} & \textbf{Generator Relaxation ($\gamma$)} & $\mathbf{c_{\textnormal{lb}}}$ & $\mathbf{\{\varepsilon + c\}_{\textnormal{lb}}}$ & $\mathbf{\tilde{\varepsilon}}$ \\ \hline
\multirow{4}{*}{\begin{tabular}[c]{@{}c@{}}ResNet18 $\epsilon=2$\end{tabular}} & 0 (no relaxation) & 2.508 & 2.066 & 0 \\
 & 1e-5 & 2.507 & 2.065 & 0 \\
 & 1e-4 & 2.499 & 2.059 & 0 \\
 & 1e-3 & 2.365 & 1.989 & 0 \\ \hline
\multirow{4}{*}{ResNet18 $\epsilon=10$} & 0 (no relaxation) & 2.508 & 2.833 & 0.325 \\
 & 1e-5 & 2.507 & 2.801 & 0.294 \\
 & 1e-4 & 2.499 & 2.570 & 0.071 \\
 & 1e-3 & 2.365 & 1.280 & 0 \\ \hline
\multirow{4}{*}{ResNet18 $\epsilon=15$} & 0 (no relaxation) & 2.508 & 3.661 & 1.153 \\
 & 1e-5 & 2.507 & 3.658 & 1.151 \\
 & 1e-4 & 2.499 & 3.628 & 1.129 \\
 & 1e-3 & 2.365 & 3.312 & 0.947 \\ \hline \hline
\multirow{3}{*}{\begin{tabular}[c]{@{}c@{}}ResNet18 $\epsilon=2, \delta=$1e-5\end{tabular}}
 & 1e-5 & 2.035 & 2.064 & 0.029 \\
 & 1e-4 & 2.029 & 2.059 & 0.030 \\
 & 1e-3 & 1.904 & 1.988 & 0.084 \\ \hline
\multirow{3}{*}{ResNet18 $\epsilon=10, \delta=$1e-5}
 & 1e-5 & 2.035 & 2.765 & 0.730 \\
 & 1e-4 & 2.029 & 2.541 & 0.512 \\
 & 1e-3 & 1.904 & 1.271 & 0 \\ \hline
\multirow{3}{*}{ResNet18 $\epsilon=15, \delta=$1e-5}
 & 1e-5 & 2.035 & 3.654 & 1.619 \\
 & 1e-4 & 2.029 & 3.625 & 1.596 \\
 & 1e-3 & 1.904 & 3.308 & 1.404 \\ \hline \hline
\end{tabular}%
}
\vspace{0.2cm}
\caption{Privacy audit of ResNet18 under different values of $\epsilon$-DP and $(\epsilon, \delta)$-DP, using \acronym with different level of generator relaxations ($\gamma$), testing over range of recall from 0 to 0.5 without applying a union bound.}
\label{tab:dpResNet18-relax}
\end{table}
\section{Extended Related Work}
\label{appendix:rel_work}

Most related privacy auditing works measure the privacy of an ML model by lower-bounding its privacy loss. 
This usually requires altering the training pipeline of the ML model, either by injecting canaries that act as outliers 
\cite{carlini2019secret} or by using data poisoning attack mechanisms to search for worst-case memorization~\cite{jagielski2020auditing, inst}. 
MIAs are also increasingly used in privacy auditing, to estimate the degree of memorization of member data by an ML algorithm by resampling the target algorithm $\mathcal{M}$ to bound $\frac{P(M|in)}{P(M|out)}$ \cite{Jayaraman2019EvaluatingDP}. 
The auditing procedure usually involves searching for optimal neighboring datasets $D, D'$ and sampling the DP outputs $\mathcal{M}(D), \mathcal{M}(D')$, to get a Monte Carlo estimate of $\epsilon$. 
This approach raises important challenges. 
First, existing search methods for neighboring inputs, involving enumeration or symbolic search, are impossible to scale to large datasets, making it difficult to find optimal dataset pairs. 
In addition, Monte Carlo estimation requires up to thousands of costly model retrainings to bound $\epsilon$ with high confidence.
Consequently, existing approaches for auditing ML models predominantly require the re-training of ML models for every (batch of) audit queries, which is computationally expensive in large-scale systems~\cite{jagielski2020auditing, zanellabéguelin2022bayesian, lu2023general}. 
This makes privacy auditing computationally expensive and gives an estimate by averaging over models, which might not reflect the true guarantee of a specific pipeline deployed in practice. 

Nonetheless, improvements to auditing have been made in a variety of directions. 
For example,~\citet{nasr2023tight} and~\citet{maddock2023canife} have taken advantage of the iterative nature of DP-SGD, auditing individual steps to understand the privacy of the end-to-end algorithm. 
~\citet{andrew2023oneshot} leverage the fact that analyzing MIAs for non-member data does not require re-running the algorithm. 
Instead, it is possible to re-sample the non-member data point: if the data points are i.i.d. from an asymptotically Gaussian distribution with mean zero and variance $1/d$, this enables a closed-form analysis of the non-member case.

Recently, \citet{steinke2023privacy} proposed a novel scheme for auditing differential privacy with $O(1)$ training rounds. 
This approach enables privacy audits using multiple training examples from the same model training, if examples are included in training independently (which requires control over the training phase, and altering the target model).
They demonstrate the effectiveness of this new approach on DP-SGD, in which they achieve meaningful empirical privacy lower bounds by training only one model (the strongest results are achieved by including canaries in the training set though, which is not possible in our setting), whereas standard methods would require training hundreds of models.
Our work builds closely on the theory from \citet{steinke2023privacy}, but introduces a baseline model to account for distribution shifts.
The key difference that enables our work to account for member/non-member distribution shifts lies in how we create the audit set in our privacy game (\cref{def:auditing-game}). In \citet{steinke2023privacy}, the audit set is fixed, and data points are randomly assigned to member or non-member by a Bernoulli random variable $S$. Members are actually used in training the target model $f$, while non-members are not (so assignment happens before training). In our framework, we take a set of known members (after the fact), and pair each point with a non-member(generated i.i.d. from the generator distribution). We then flip $S$ to sample which data point of each pair will be shown to the ``auditor'' (MIA/baseline) for testing, thereby creating the test task of our privacy measurement. When we replace our generated data with in-distribution independent non-members (\acronym RM;RN), we exactly enforce the same independence as \citet{steinke2023privacy} (and as a result have $c=0$, and $\tilde{\epsilon}$ is a lower-bound on $\epsilon$), except that we ``waste'' some member data by drawing our auditing game after the fact. The difference in analysis accounts for distribution shifts when our non-members are generated.

The field of MIA without direct connections to privacy leakage measurement has also seen a lot of recent activity, with new proposals to improve the strength of MIAs in various settings \citep{carlini2022membership,nasr2019comprehensive,suri2024parameters,zarifzadeh2023low}.
While it is not the focus of this paper, an interesting avenue for future work is to port ideas from these new MIAs to improve \acronym's MIA and baseline. Such transfer of MIA progress could lead to more powerful measurements with \acronym. 
More closely related to our proposal, a recent set of works on \cite{das2024blind,duan2024membership,meeus2024inherent} MIAs on foundation models has observed that current evaluations are limited due a lack of availability of non-member data. Indeed, foundation models typical include all known data at the time of their training, and there is no well known public set of in-distribution non-member data. Evaluation tasks typical rely on more recent data, which is known to consist in non-members, but suffers from distribution shifts. \citet{das2024blind,duan2024membership,meeus2024inherent} all show that such a shift invalidates MIA evaluations by showing that ``blind attacks'' (i.e., our baselines) perform better than proposed MIAs.
One could apply our proposed framework to such a setting: when the baseline in our framework performs better than the MIA, \acronym returns a privacy loss measurement of zero instead of misleading measurements of MIA performance.
We thus believe that \acronym is an important step towards a theory for rigorous evaluations of membership inference for MIA models.

\newpage

\end{document}